\documentclass[11pt]{article}

\usepackage{amsmath,amssymb,amsthm,verbatim,relsize,mathrsfs,hyperref}
\usepackage{cleveref}
\hypersetup{colorlinks = true,linkcolor = blue, anchorcolor = blue, citecolor = blue, filecolor = red,urlcolor = red}
\usepackage{pgf}
\usepackage{tikz}
\usepackage{algorithm}
\usepackage[noend]{algpseudocode}
\usetikzlibrary{arrows,automata}
\usepackage{tipa}
\usepackage [english]{babel}
\usepackage [autostyle, english = american]{csquotes}
\usepackage{bbold}
\usepackage{subfiles}
\usepackage{blindtext}
\usepackage{framed}
\usepackage{mdframed}
\usepackage{fullpage}
\usepackage[nottoc]{tocbibind}
\usepackage{paralist}
\usepackage{enumitem}
\MakeOuterQuote{"}

\makeatletter
\newcommand\footnoteref[1]{\protected@xdef\@thefnmark{\ref{#1}}\@footnotemark}
\makeatother

\newtheorem{theorem}{Theorem}[section]
\newtheorem{definition}[theorem]{Definition}
\newtheorem{lemma}[theorem]{Lemma}

\newtheorem{remark}[theorem]{Remark}
\newtheorem{observation}[theorem]{Observation}
\newtheorem{fact}[theorem]{Fact}
\newtheorem{claim}[theorem]{Claim}

\newcommand{\Expect}{{\rm I\kern-.3em E}}

\newcommand{\ignore}[1]{}


\newcommand{\cD}{\mathcal{D}}

\newcommand{\cH}{{\cal H}}

\newcommand{\cM}{{\cal M}}
\newcommand{\cP}{\mathcal{P}}

\newcommand{\cU}{{\cal U}}

\newcommand{\eps}{\varepsilon}

\newcommand{\poly}{\mathrm{poly}}
\newcommand{\polylog}{\mathrm{polylog\xspace }}

\newcommand{\calE}{{\cal E}}

\newcommand{\calP}{{\cal P}}

\newcommand{\bone}{\mathbf{1}}

\newcommand{\be}{\mathbf{e}}
\newcommand{\bp}{\boldsymbol{p}}

\newcommand{\bA}{\mathbf{A}}

\newcommand{\Exp}{\EX}

\newcommand{\EX}{\hbox{\bf E}}

\newcommand{\Sec}[1]{\hyperref[sec:#1]{\S\ref*{sec:#1}}} 
\newcommand{\Eqn}[1]{\hyperref[eq:#1]{(\ref*{eq:#1})}} 
\newcommand{\Fig}[1]{\hyperref[fig:#1]{Fig.\,\ref*{fig:#1}}} 
\newcommand{\Tab}[1]{\hyperref[tab:#1]{Tab.\,\ref*{tab:#1}}} 
\newcommand{\Thm}[1]{\hyperref[thm:#1]{Theorem\,\ref*{thm:#1}}} 
\newcommand{\Fact}[1]{\hyperref[fact:#1]{Fact\,\ref*{fact:#1}}} 
\newcommand{\Lem}[1]{\hyperref[lem:#1]{Lemma\,\ref*{lem:#1}}} 
\newcommand{\Prop}[1]{\hyperref[prop:#1]{Prop.~\ref*{prop:#1}}} 
\newcommand{\Cor}[1]{\hyperref[cor:#1]{Corollary~\ref*{cor:#1}}} 
\newcommand{\Conj}[1]{\hyperref[conj:#1]{Conjecture~\ref*{conj:#1}}} 
\newcommand{\Def}[1]{\hyperref[def:#1]{Definition~\ref*{def:#1}}} 
\newcommand{\Alg}[1]{\hyperref[alg:#1]{Alg.~\ref*{alg:#1}}} 
\newcommand{\Ex}[1]{\hyperref[ex:#1]{Ex.~\ref*{ex:#1}}} 
\newcommand{\Clm}[1]{\hyperref[clm:#1]{Claim~\ref*{clm:#1}}} 
\newcommand{\Step}[1]{\hyperref[step:#1]{Step~\ref*{step:#1}}} 
\newcommand{\Obs}[1]{\hyperref[obs:#1]{Observation~\ref*{obs:#1}}} 

\newcommand{\AH}{\mathbf{A}_{n,d}}
\newcommand{\AHl}{\mathbf{A}_{n,1}}

\begin{document}

\title{A $o(d) \cdot \polylog~n$ Monotonicity Tester for Boolean Functions \\ over the Hypergrid $[n]^d$}

\author{
	Hadley Black\thanks{Department of Computer Science, University of California, Santa Cruz. Email: \href{mailto:hablack@ucsc.edu}{\nolinkurl{hablack@ucsc.edu}}.}
	\and
	Deeparnab Chakrabarty\thanks{Department of Computer Science, Dartmouth College. Email: \href{mailto:deeparnab@dartmouth.edu}{\nolinkurl{deeparnab@dartmouth.edu}}.}
	\and 
	C. Seshadhri\thanks{Department of Computer Science, University of California, Santa Cruz. Email: \href{mailto:sesh@ucsc.edu}{\nolinkurl{sesh@ucsc.edu}}.}
}

\maketitle
\begin{abstract}
We study monotonicity testing of Boolean functions over the hypergrid $[n]^d$ and design a non-adaptive tester with $1$-sided error whose query complexity is $\tilde{O}(d^{5/6})\cdot \poly(\log n,1/\eps)$. Previous to our work, the best known testers had query complexity linear in $d$ but independent of $n$. We improve upon these testers as long as $n = 2^{d^{o(1)}}$. 

To obtain our results, we work with what we call the {\em augmented hypergrid}, which adds extra edges to the hypergrid. 
Our main technical contribution is a Margulis-style isoperimetric result for the augmented hypergrid, and
our tester, like previous testers for the hypercube domain, performs directed random
walks on this structure.
\end{abstract}
\thispagestyle{empty}
\clearpage
\setcounter{page}{1}
\newpage
\section{Introduction}
Monotonicity testing is a classic property testing problem that asks whether a function defined over a partial order is monotone or not. 
Consider a function $f: D \to R$ (where $D$ is a partial
order and $R$ is an ordered range). The function
$f$ is monotone if $f(x) \leq f(y)$ whenever $x < y$
in the partial order $D$. The \emph{distance}
between two functions $f$ and $g$ is the fraction of points they
differ in. The distance to monotonicity of $f$ is $\min_{g \in \cP}
d(f,g)$, where $\cP$ is the set of monotone functions.
Given a parameter $\eps \in (0,1)$,
the aim of a property tester is to correctly determine, with high probability, whether $f$
is monotone or the distance to monotonicity is at least $\eps$. When the distance to monotonicity of $f$ is at least $\eps$, we say that $f$ is $\eps$-far from being monotone.

In recent years, there has been a lot of work~\cite{GGLRS00,ChSe13-j,ChenST14,ChenDST15,KMS15,BeBl16,Chen17} on understanding the testing question for Boolean functions defined over the $d$-dimensional hypercube $\{0,1\}^d$ domain. This line of work has unearthed a connection between monotonicity testing and isoperimetric theorems on the {\em directed} hypercube. 

In this paper, we investigate monotonicity testing of Boolean functions over the $d$-dimensional $n$-hypergrid, $[n]^d$. 
Apart from being a natural property testing question, our motivation
is to unearth isoperimetry theorems for richer structures.
%
Indeed, our main technical contribution is
a Margulis-type isoperimetry theorem for 
a structure called 
the augmented hypergrid. Such a theorem allows us to design a tester with query complexity $o(d) \cdot \polylog~n$ for Boolean functions defined on $[n]^d$. As long as $n = 2^{d^{o(1)}}$, this $o(d)$-query tester has the best query complexity among the testers known so far.

\begin{theorem}\label{thm:ourmainresult}
	Given a function $f:[n]^d\to \{0,1\}$ and a parameter $\eps$, there is a randomized algorithm that makes $O(d^{5/6}\cdot \log^{3/2}d\cdot (\log n + \log d)^{4/3} \cdot \eps^{-4/3})$ non-adaptive queries and (a) returns YES with probability $1$ if the function is monotone, and (b) returns NO with probability $>2/3$ if the function is $\eps$-far from being monotone.
\end{theorem}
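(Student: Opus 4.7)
The plan is to mimic the Khot--Minzer--Safra (KMS) strategy for the hypercube $\{0,1\}^d$, but to work over the \emph{augmented hypergrid} $\AH$ instead of the standard hypergrid. I would define $\AH$ as the graph on $[n]^d$ whose edges connect pairs $(x,y)$ with $y - x = c \cdot e_i$ for some coordinate $i \in [d]$ and some power of two $c \in \{1, 2, 4, \ldots, n/2\}$, so that each vertex has roughly $O(d \log n)$ neighbors in each direction. An edge $(x,y) \in \AH$ with $x < y$ is \emph{violated} if $f(x) > f(y)$, and let $I^-(v)$ denote the number of violated edges incident to $v$. The tester itself will be a directed-random-walk tester on $\AH$: pick a uniformly random start vertex, walk up in $\AH$ by selecting random neighbors for a suitable number of steps, and query the endpoints for a violating pair.

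The core technical step is a Margulis-style isoperimetric inequality for $\AH$ of the form
\[
\sum_{v \in [n]^d} \sqrt{I^-(v)} \;\geq\; \Omega\!\left(\frac{\eps \cdot n^d}{\poly(\log n, \log d)}\right),
\]
valid whenever $f : [n]^d \to \{0,1\}$ is $\eps$-far from monotone. I would prove this by adapting the KMS path-counting argument: for each pair $(x,y)$ witnessing a bichromatic violation in a matching certifying distance $\eps$ to monotonicity, one routes a directed path in $\AH$ from $x$ to $y$ of length $O(\log n + \log d)$ (using binary-expansion moves within each dimension), and then charges the violation to some edge along this path. The augmentation by powers of two is precisely what keeps the path length $\polylog$ rather than $\Omega(n)$, and is what permits a Talagrand-type aggregation of per-vertex violation counts. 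Coordinate-by-coordinate, one reduces to the single-dimension augmented line $[n]$, and combines dimensions by the product argument from KMS.

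Given this isoperimetry, the analysis of the random-walk tester is mostly a reparameterization of the hypercube argument. Each walk of length $T$ on $\AH$ visits a sequence of vertices whose \emph{expected} number of potential violation-triggering endpoints is controlled by $\sum_v I^-(v)/\sqrt{I^-(v)} \cdot T$-type expressions, leading via Cauchy--Schwarz and the isoperimetric bound to a lower bound on the success probability of a single walk. Balancing $T$ (queries per walk), the number of walks, and the $\polylog$ losses from the isoperimetric inequality yields the claimed $O(d^{5/6} \log^{3/2} d \cdot (\log n + \log d)^{4/3} \cdot \eps^{-4/3})$ bound; the $d^{5/6}$ exponent (rather than the $\sqrt{d}$ of KMS) is the cost of routing through multi-scale edges in $\AH$ instead of single-bit flips.

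The main obstacle will be the Margulis-type inequality on $\AH$. The hypercube argument leans on the total symmetry of $\{0,1\}^d$ and on the fact that every directed edge corresponds to a single bit flip; on $\AH$, edges within a single coordinate come in $\log n$ different "scales," and one must verify that the random walk on $\AH$ has enough spreading/anticoncentration to support the Talagrand-style bound despite these varied scales. Choosing the right edge weights on $\AH$, and checking that the hypercube-style robust martingale or level-set inequalities go through after this weighting, is where the bulk of the work will lie; once that is in place, the query-complexity bound in \Thm{ourmainresult} follows from the KMS-style random-walk analysis.
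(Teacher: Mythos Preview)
Your high-level architecture --- work over the augmented hypergrid $\AH$, prove a directed isoperimetric inequality there, and analyze a KMS-style directed random walk --- matches the paper exactly. But the isoperimetric inequality you propose is the wrong one, and this is a genuine gap. You aim for a Talagrand-type bound $\sum_v \sqrt{I^-(v)} = \Omega(\eps n^d/\polylog)$; the paper does \emph{not} prove this, and in fact explicitly discusses why the KMS split-operator technique behind Talagrand appears hard to generalize to hypergrids. What the paper actually proves is the weaker Margulis-type statement $I^-_f \cdot \Gamma^-_f = \Omega(\eps^2)$, where $\Gamma^-_f$ is the (normalized) maximum matching among violated $\AH$-edges. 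This distinction is not cosmetic: if your Talagrand bound held, the standard KMS analysis would give $\tilde O(\sqrt d)$ queries, not $d^{5/6}$. The $d^{5/6}$ exponent is precisely the cost of using only Margulis (via the Chen--Servedio--Tan refinement of \cite{ChSe13-j}), so your explanation that $d^{5/6}$ comes from ``multi-scale edges'' is internally inconsistent with the inequality you claim.

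Your proof sketch for the isoperimetry --- route each violating pair along a short $\AH$-path and charge --- is in spirit the \emph{Margulis} argument of \cite{ChSe13-j}, not KMS's Talagrand argument, and even there you are missing the main obstacle. The Margulis proof on the hypercube needs Lehman--Ron to extract vertex-disjoint paths, and Lehman--Ron relies on the hypercube being a layered DAG. The augmented hypergrid is \emph{not} layered (edges of different scales jump different distances), so Lehman--Ron does not apply directly. The paper's real technical contribution is to identify ``good'' consistent pairs $(S,T)$ for which the induced cover graph \emph{is} layered, show via a potential-maximizing rematching argument that the optimal violation matching decomposes into pairwise-independent good pairs, and then prove a generalized Lehman--Ron for such pairs. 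None of this structure appears in your plan; without it, the charging step you describe will not yield the matching lower bound $\Gamma^-_f = \Omega(\eps/r)$ that drives the whole analysis.
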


\subsection{Perspective}
\paragraph{The Hypercube.}
Goldreich et al.~\cite{GGLRS00} gives the first monotonicity testers for Boolean functions over the $\{0,1\}^d$ hypercube. Their 
tester makes $O(d/\eps)$ queries. Chakrabarty and Seshadhri~\cite{ChSe13-j} describes a $o(d)$ query tester via so-called directed isoperimetry theorems.
Given $f$, define $S^{-}_f$ to be the set of edges of the hypercube with their "lower endpoint" evaluating to $1$ and their "upper endpoint" evaluating to $0$. That is, $S^{-}_f$ is the set of edges violating monotonicity. Define $I^{-}_f := |S^{-}_f|/2^d$ and $\Gamma^{-}_f := |\textrm{max-match}(S^{-}_f)|/2^d$ where the numerator indicates cardinality of the largest matching in $S^{-}_f$.~\cite{ChSe13-j} prove that if $f$ is $\eps$-far from being monotone, then $I^{-}_f\cdot \Gamma^{-}_f = \Omega(\eps^2)$. 
If the edges of the hypercube were to be oriented from the point with fewer ones to the one with more, then this result connects the structure of directed edges leaving the set of points evaluating to one with the distance to monotonicity.
In this sense, this result is a directed analogue of a result by Margulis~\cite{Mar74} which proves a similar statement on the undirected hypercube.
Using this directed isoperimetry result,~\cite{ChSe13-j} designs an $O(d^{7/8}\eps^{-3/2})$-query tester. 
Chen et al.~\cite{ChenST14} refines the analysis in~\cite{ChSe13-j}
to give a tester with $O(d^{5/6}\eps^{-4})$ query complexity.\smallskip


A remarkable paper of Khot, Minzer, and Safra~\cite{KMS15} proves the following directed analogue of Talagrand's~\cite{Tal93} isoperimetry theorem.
If $S^{-}_f(x)$ is the number of edges in $S^{-}_f$ incident on $x$, then~\cite{KMS15} proves $\Exp_x [\sqrt{|S^{-}_f(x)|}] = \Omega(\eps/\log d)$. 
The directed analogue of Talagrand's~\cite{Tal93} isoperimetry theorem is 
stronger than the Margulis-type theorem (albeit with an extra $\log d$ in the denominator.). 
Khot et al.~\cite{KMS15} uses this stronger directed isoperimetry
result to obtain a $\tilde{O}(\sqrt{d}\eps^{-2})$-query monotonicity tester.
This bound is nearly optimal for 
non-adaptive testers~\cite{ChenDST15,KMS15}. 

However, the proof techniques of both these isoperimetry
results are very different. 
Chakrabarty-Seshadhri~\cite{ChSe13-j} use the combinatorial structure of the "violation graph" to explicitly find either a large number of edges in $S^{-}_f$, or to find a large matching in $S^{-}_f$. Khot et al.~\cite{KMS15} instead propose an operator (the split operator) which converts a function that is far from monotone to a function with sufficient structure that allows them to prove the Talagrand-type isoperimetry theorem in a relatively easier way. This technique of~\cite{KMS15} is reminiscent of the original result of Goldreich et al.~\cite{GGLRS00} which also defines an operator (the switch operator) to convert a function to a monotone function and accounting for the number of violated edges. 
It appears that methods which change function values are harder to generalize for the hypergrid domain. In particular, it is not clear how to generalize the switch or the split operators for hypergrids.

\paragraph{The Hypergrid.}
Dodis et al.~\cite{DGLRRS99} is the first paper to study property testing on the $d$-dimensional hypergrid $[n]^d$. For Boolean functions, this paper describe a 
$O(\frac{d}{\eps}\log^2(\frac{d}{\eps}))$-query tester. Note that the query complexity is independent of $n$.
The proof follows via  {\em dimension reduction} theorem for Boolean functions.
 This result asserts that
  if a Boolean function on the $[n]^d$ hypergrid is $\eps$-far from being monotone, then the function restricted to a random line has an expected distance of $\Omega(\eps/d)$ to monotonicity. On a line it is not too hard to see that Boolean functions can be tested with $\tilde{O}(1/\eps)$-queries. This style of analysis was refined by Berman, Raskhodnikova and Yarovslavstev~\cite{BeRaYa14} which gives a $O(\frac{d}{\eps}\log(\frac{d}{\eps}))$-query tester for Boolean functions on hypergrids. This is the current best known tester.
Since these testers project to a line they (a) have no dependence on $n$, and (b) they seem to need the linear dependence on $d$ since the violations may be restricted to $O(1)$ unknown dimensions which, if naively done, may take $\Theta(d)$ queries to detect.

For {\em real-valued} functions over $[n]^d$, Dodis et al.~\cite{DGLRRS99} give a $O(\frac{d}{\eps}\log n\log |\mathrm{R}|)$-query tester where $\mathrm{R}$ is the range of the function. 
They do so by a clever range-reduction technique that
reduces to testing Boolean functions over $[n]^d$. 
One of the key ideas to emerge
from results of Ergun et al.~\cite{EKK+00} and Bhattacharyya et al.~\cite{BGJ+12} on monotonicity testing on the line
(and richer structures) is to compare points that
are far apart.
%
%
Chakrabarty and Seshadhri~\cite{ChSe13} exploit this idea
to give an optimal $O(\frac{d}{\eps}\log n)$-query tester for real-valued functions over $[n]^d$, removing the dependence on $\mathrm{R}$.
Specifically, their tester queries pairs in the hypergrid that
may be apart by an arbitrary power of $2$.
One can think of adding these extra edges to 
get an \emph{augmented hypergrid}.
(This is the central theme
of the transitive closure spanner idea of Bhattacharyya et al~\cite{BGJ+12}.)
This notion of the {\em augmented hypergrid} is central to our paper. The main result of~\cite{ChSe13} was to show that if (even a real-valued) $f$ is $\eps$-far from being monotone, then this augmented hypergrid has many violated edges. For $\eps$-far Boolean valued functions, this implies that the "out-edge-boundary" of the set of $1$s must be large. 

The main technical result of this paper is proving a Margulis-style result for the augmented hypergrid
generalizing the result of~\cite{ChSe13-j}. It states that either the "out-edge-boundary" is "very large", or the "out-vertex-boundary" is large 
(details in \Sec{isoperimetry}.).
One of the main tools that~\cite{ChSe13-j} use is a routing theorem in the hypercube due to Lehman and Ron~\cite{LR01}. One of the ways this theorem is proved and used exploits the fact that the "directed hypercube" is a layered DAG with vertices of the same Hamming weight forming the layers. The "directed hypergrid" is also a layered DAG, but the augmented hypergrid is not. This technically poses many challenges, and our way out is to define "good portions" of the hypergrid where a certain specified subgraph is indeed layered. We generalize Lehman-Ron, but more crucially we can show if a function is $\eps$-far, then large good portions exists. The definitions of these good portions is perhaps our main conceptual combinatorial contribution.

\subsection{Reducing to the case when $n$ is a power of $2$}\label{sec:discussion_power}

It greatly simplifies the presentation to assume that $n$ is a power of $2$. For monotonicity testing, this is no
loss of generality. In \Sec{power}, we show that monotonicity testing over general hypergrids can be reduced to the case when $n$ is a power of $2$. 
Specifically, in \Thm{power} we reduce testing over general $[n]^d$ to testing over $[N]^d$ where $N$ is a power of 2 and $N = \Theta(nd)$. 
In our case, this incurs a loss of $\polylog~d$ in the query complexity. 
Thus, we assume that $n$ is a power of $2$ throughout the paper except in \Thm{ourmainresult}, where, in the query complexity, $\log n$ is replaced by $\log n + \log d$ to reflect this loss. To be specific, \Sec{good} and \Sec{lehmanron} \emph{do not need} $n$ to be a power of $2$, while we stress that \Sec{tester_analysis}, \Sec{fourier} and \Sec{edge} \emph{do need} $n$ to be a power of $2$. 

\subsection{The Augmented Hypergrid}\label{sec:augmented_hypergrid}

Given the $d$-dimensional $n$-hypergrid $[n]^d$, we define the {\em augmented hypergrid} $\AH$ which is simply the standard hypergrid with additional edges connecting any two vertices which differ in exactly one dimension by a power of two in the range $1 \leq 2^a \leq n$. This construction was explicitly introduced in \cite{ChSe13}.

It is useful to partition the edges of $\AH$ into a collection of matchings $\mathbf{H} := \{H^c_{i,a} : i \in [d], a \in [\log n], c\in \{0,1\}\}$, where
\begin{itemize}[noitemsep]
	\item $H_{i,a}^0 := \{(x,y) : y_i - x_i = 2^a, x_j = y_j \text{ } \forall j \neq i, x_i \pmod{2^{a+1}} < 2^a \}$.
	\item $H_{i,a}^1 := \{(x,y) : y_i - x_i = 2^a, x_j = y_j \text{ } \forall j \neq i, x_i \pmod{2^{a+1}} \geq 2^a \}$.
\end{itemize}
Note that $H^0_{i,a}$ is a perfect matching, but $H^{1}_{i,a}$ is not.
We let $d_{\bA}(x,y)$ denote the shortest-path distance between two points in the augmented hypergrid.

\begin{figure}
  \hspace*{5cm} \includegraphics[width=.45\linewidth]{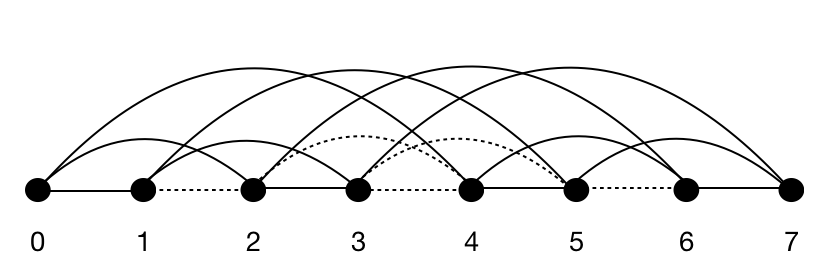}
  \caption{\small{The augmented line $\mathbf{A}_{8,1}$. Solid and dashed lines represent edges from the "even" and "odd" matchings, respectively. E.g. the solid lines along the bottom make up $H^0_{1,0}$ ("even" matching of length $1 = 2^0$ edges) and the dashed lines along the bottom make up $H^1_{1,0}$ ("odd" matching of length $1 = 2^0$ edges). } }
  \label{fig:aug_line}
\end{figure}

\subsection{The Monotonicity Tester}

Our tester is a generalization of the tester described by Khot, Minzer, and Safra~\cite{KMS15} over the Boolean hypercube, 
which itself is inspired by the path tester described in~\cite{ChenST14,ChSe13-j}. Instead of taking a random walk on the hypergrid, however, we perform a random walk on the augmented hypergrid.\smallskip

\begin{figure}[h]
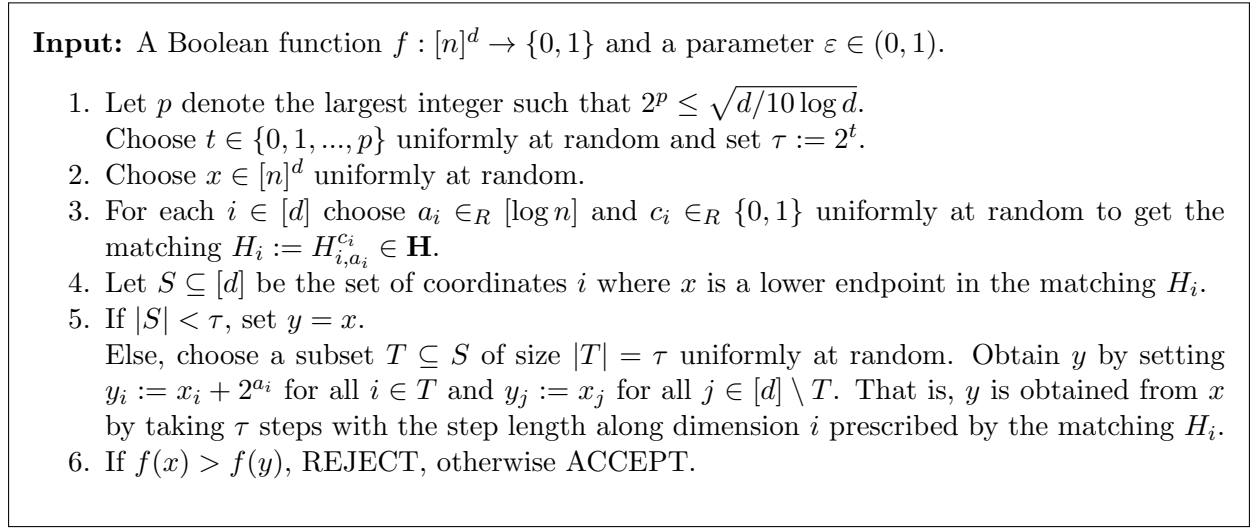

\begin{framed}
    \noindent \textbf{Input:} A Boolean function $f: [n]^d \to \{0,1\}$ and a parameter $\eps \in (0,1)$.
	\begin{enumerate}[noitemsep]
		\item \label{step:tau} Let $p$ denote the largest integer such that $2^p \leq \sqrt{d/10\log d}$.  \\
			  Choose $t \in \{0,1,...,p\}$ uniformly at random and set $\tau := 2^t$.
		\item Choose $x \in [n]^d$ uniformly at random.   
		\item \label{step:H} For each $i \in [d]$ choose $a_i \in_R [\log n]$ and $c_i \in_R \{0,1\}$ uniformly at random to get the matching $H_i := H_{i,a_i}^{c_i} \in \mathbf{H}$.
		\item \label{step:S} Let $S\subseteq [d]$ be the set of coordinates $i$ where $x$ is a lower endpoint in the matching $H_i$.
        \item \label{step:y} If $|S| < \tau$, set $y=x$. \\ 
        Else, choose a subset $T\subseteq S$ of size $|T|=\tau$ uniformly at random. 
		Obtain $y$ by setting $y_i := x_i + 2^{a_i}$ for all $i\in T$ and $y_j := x_j$ for all $j \in [d] \setminus T$. That is, $y$ is obtained from $x$ by taking $\tau$ steps with the step length along dimension $i$ prescribed by the matching $H_i$.
		\item If $f(x) > f(y)$, REJECT, otherwise ACCEPT.
	\end{enumerate} 
\end{framed}
\caption{\small{\textbf{Monotonicity Tester for Boolean Functions on} $[n]^d$}}
\label{fig:alg}
\end{figure}

\noindent
The main result of this paper is the following theorem which easily implies \Thm{ourmainresult}.
\begin{theorem}\label{thm:main-theorem}
	Given a function $f:[n]^d\to\{0,1\}$ which is $\eps$-far from being monotone, the tester described in \Fig{alg} with inputs $f$ and $\eps$ detects a violation with probability $\Omega\left(\frac{\eps^{4/3}}{d^{5/6}\log^{3/2}d \log^{4/3} n}\right)$.
\end{theorem}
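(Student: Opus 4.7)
The plan is to combine two ingredients in the template of path-tester analyses for monotonicity~\cite{ChSe13-j,ChenST14,KMS15}, lifted to the augmented hypergrid $\AH$. First I would introduce the ``violating-edge measure'' $I^-_f$, the expected fraction of violating matching edges from $\mathbf{H}$ at a uniformly random vertex, and the ``violating-vertex measure'' $\Gamma^-_f := |M^-(f)|/n^d$, where $M^-(f)$ is a maximum matching of violating edges in $\AH$. The paper's central Margulis-style isoperimetric theorem, developed in the subsequent sections via the good-portions framework and the generalized Lehman--Ron routing, should yield
\[
I^-_f \cdot \Gamma^-_f \;\geq\; \tilde{\Omega}(\eps^2)
\]
whenever $f$ is $\eps$-far from monotone, with the hidden polylogarithmic losses in $n$ and $d$ absorbed later.

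Next I would analyze the tester conditioned on the walk-length $\tau = 2^t$. When $\tau = 1$, the pair $(x,y)$ produced by the tester is essentially a uniformly random matching edge drawn from $\mathbf{H}$, yielding $\Pr[\text{reject} \mid \tau = 1] \geq \tilde{\Omega}(I^-_f)$, with logarithmic losses from the uniform choices of the scales $a_i \in [\log n]$ and the direction $i \in [d]$. For general $\tau$, I would lower-bound $\Pr[\text{reject} \mid \tau]$ by routing violating edges of $M^-(f)$ as random endpoint pairs of the tester's walk, giving a bound of roughly
\[
\Pr[\text{reject} \mid \tau] \;\geq\; \tilde{\Omega}\!\left(\Gamma^-_f \cdot \sqrt{\tau/d}\right).
\]
The routing is where the augmented setting diverges from the hypercube case: because $\AH$ has edges at all power-of-two scales, it is not a single layered DAG, and the good-portions machinery is what isolates a $1/\polylog$-fraction of $M^-(f)$ inside a layered subgraph where a generalized Lehman--Ron theorem applies.

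Finally, since $t$ is chosen uniformly in $\{0,\dots,p\}$ with $p = \Theta(\log d)$ and $2^p = \Theta(\sqrt{d/\log d})$, the overall rejection probability is at least (up to $1/\polylog(n,d)$ factors)
\[
\Pr[\text{reject}] \;\geq\; \Omega\!\left(\frac{1}{\log d}\right) \cdot \max_{t \in \{0,\dots,p\}} \min\!\left(I^-_f,\; \Gamma^-_f \sqrt{2^t/d}\right).
\]
Choosing $\tau$ to balance the two terms (respecting the cap $\tau \leq 2^p$) and substituting the isoperimetric bound $I^-_f \Gamma^-_f = \tilde{\Omega}(\eps^2)$, a case analysis on whether the optimal $\tau^\star$ is within the allowed range reproduces the claimed $\Omega(\eps^{4/3}/(d^{5/6} \log^{3/2} d \, \log^{4/3} n))$ probability after careful accounting of the logarithmic losses from Steps 1 and 2.

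The main obstacle will be the long-walk lower bound on $\Pr[\text{reject} \mid \tau]$. In the plain hypercube case~\cite{ChSe13-j,ChenST14}, all violating edges live in the same layered DAG and the Lehman--Ron routing theorem cleanly converts a large matching of violations into a walk-detection lower bound. In $\AH$, the mixture of short and long edges destroys this layered structure, so one must combinatorially carve out good portions of the grid, then argue that enough violating matching edges survive inside. This conceptual step is where the bulk of the technical difficulty lies; the balancing calculation in Step 3 is the familiar ChenST14-style optimization.
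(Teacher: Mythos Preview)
Your high-level template is right: the proof uses the Margulis-type bound $I^-_f\cdot\Gamma^-_f=\Omega(\eps^2)$, then balances the edge tester (good when $I^-_f$ is large) against the long-walk tester (good when $\Gamma^-_f$ is large). But two structural pieces are misplaced.

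First, you locate the good-portions/generalized Lehman--Ron machinery inside the analysis of $\Pr[\text{reject}\mid\tau]$. In the paper it lives entirely in the proof of the isoperimetry (specifically the vertex-boundary half, \Thm{vertex}): it takes the $\Psi$-optimal violation-graph matching $M^*$, partitions $M^*_i$ into pairwise-independent $i$-good pieces, and routes each piece via the generalized Lehman--Ron to produce many vertex-disjoint paths, hence a large matching in $S^-_f$. Once $I^-_f\Gamma^-_f=\Omega(\eps^2)$ is established, the tester analysis never touches routing again.

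Second, and this is the real gap, your long-walk bound $\Pr[\text{reject}\mid\tau]\ge\tilde\Omega(\Gamma^-_f\sqrt{\tau/d})$ is both unjustified and not what the paper proves. The paper's bound is $\Omega(\sigma\cdot\tau/(d\log n))$ with $\sigma=\Gamma^-_f$, obtained by the KMS persistence argument: start at a lower endpoint of the violating matching (probability $\sigma$), hit the right scale in that dimension (probability $\Theta(1/\log n)$), include that dimension among the $\tau$ chosen (probability $\Theta(\tau/d)$), and use $(\tau-1)$-persistence of the upper endpoint to keep $f(z)=0$. Making persistence work requires two ingredients you do not mention: a bound on the fraction of non-persistent vertices by $O(\tau I_f/(d\log n))$ (\Lem{persistence}, proved by decomposing $\AH$ into random sub-hypercubes induced by the chosen matchings and invoking the hypercube lemma of~\cite{KMS15}), and a total-influence bound $I_f=O(\sqrt d\log n)$ when $I^-_f<\sqrt d$ (\Lem{influence}), which is a new hypergrid-specific fact proved via Walsh--Fourier analysis. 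The final balancing is then a simple threshold at $\sigma=\eps^{2/3}\log^{1/3}n/d^{1/6}$ rather than an optimization over $\tau$.
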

\subsection{Related Work and Remarks}
Monotonicity Testing has been extensively~\cite{EKK+00,GGLRS00,DGLRRS99,LR01,FLNRRS02,HK03,AC04,HK04,ACCL04,E04,SS06,Bha08,BCG+10,FR,BBM11,RRSW11,BGJ+12,ChSe13,ChSe13-j,ChenST14,BeRaYa14,BlRY14,ChenDST15,ChDi+15,KMS15,BeBl16,Chen17} studied in the past two decades; in this section we discuss a few works most relevant to this paper.

In property testing, the notion of distance between functions is usually the Hamming distance between them, that is, the fraction of points at which they differ. More generally one can think of a general measure over the domain and the distance is the measure of the points at which the two functions differ. Monotonicity testing has been studied~\cite{AC04,HK04,ChDi+15} over general product measures. It is now known~\cite{ChDi+15} that for functions over $[n]^d$, there exist testers making $O(\frac{d}{\eps}\log n)$-queries over any product distribution; in fact there exist better testers if the distribution is known. A simple argument (Claim 3.6 in~\cite{ChDi+15}) shows that testing monotonicity of Boolean functions over $\{0,1\}^d$ over any product distribution reduces to testing over $[n]^d$ over the uniform distribution. Thus our result gives $o(d)$-query monotonicity testers for $f:\{0,1\}^d\to \{0,1\}$, even over $p$-biased distributions; this holds even when $p_i$'s are not constants and depend on $d$. Once again, it is not clear how to generalize the tester of Khot, Minzer, and Safra~\cite{KMS15} to obtain such a result.


In a different take on the distance function, 
Berman, Raskhodnikova, and Yaroslavtsev~\cite{BeRaYa14} study property testing when the distance function is not the Hamming distance but could be a more general $L_p$ norm. That is the distance between $f$ and $g$ is the $L_p$ norm of $f-g$ (the measure over which the $L_p$ norm is taken is the uniform measure). One of their results is a black-box reduction (Lemma 2.2 in~\cite{BeRaYa14}) of $L_1$-testing of functions in the range $[0,1]$ to "usual $L_0$"-testing of Boolean functions. In particular, our result along with their reduction implies $o(d) \cdot \polylog n$-query testers for $L_1$-testing over $[n]^d$. Another interesting result in the same paper is a {\em separation} between non-adaptive and adaptive testers. For Boolean functions defined over $[n]^2$, Berman et al.~\cite{BeRaYa14} describe an $O(1/\eps)$-query adaptive tester and a lower bound of $\Omega(\frac{1}{\eps}\log(\frac{1}{\eps}))$-for non-adaptive testers. It is an interesting question (even for the hypercube) whether adaptivity helps in Boolean monotonicity testing; it is known for real-valued functions it doesn't~\cite{ChSe14}. Some recent results~\cite{BeBl16,Chen17} point out some very interesting lower bounds for adaptive testers.

Monotonicity testing is well-defined over any arbitrary poset. Our knowledge here is limited. Fischer et al.~\cite{FLNRRS02} prove there exist $O(\sqrt{N/\eps})$-query testers over any poset of cardinality $N$ even for real-valued functions; they also prove an $\Omega(N^{\frac{1}{\log\log N}})$-lower bound even for Boolean functions. 
On the other hand, there are good testers for the hypercube and hypergrid even for real-valued functions. 
Can we understand the structure that allows for efficient testers?
Our notion of "good portions" (\Lem{good}) holds for {\em any} poset, and may
provide some directions towards this question.

Finally, we comment on our tester's dependence on $n$. 
If does not seem possible to improve our current line of attack,
since the number of edges in the augmented hypergrid (when divided by $n^d$) depends on $n$. One direction may be to sparsify the augmented hypergrid in such a way that we don't lose out on the Margulis-type inequality. 
It is an interesting direction to get a greater understanding of such isoperimetric inequalities
and possibly remove this dependence on $n$.

\section{Isoperimetric Theorems on the Augmented Hypergrid}\label{sec:isoperimetry}

Given a function $f:[n]^d\to\{0,1\}$ we consider it to be defined over the vertices of $\AH$. 
We let $S^{-}_f$ denote the set of edges $(x,y)$ of $\AH$ with $f(x) = 1$ and $f(y) = 0$, where $x$ is the lower endpoint. We let $I^-_f := |S^-_f|/n^d$. If $f:\AH\to \{0,1\}$ is $\eps$-far from being monotone, then $I^-_f = \Omega(\eps)$. This result is
implicit in many earlier papers~\cite{DGLRRS99,EKK+00,ChSe13} on monotonicity testing over the hypergrid.

If one considers the edges of $\AH$ being oriented from the lower to the upper endpoint, then the above theorem lower bounds the normalized "out-edge-boundary" of the indicator set of a function which is far from monotone. It is instructive to note that to obtain this result one needs to look at the augmented hypergrid. If one considered the standard hypergrid then one would need an extra $n$-factor in the denominator in the RHS. This is apparent even when $d=1$ and the function is $1$ on the first half of the line and $0$ on the second half.

One can also think about the normalized out-vertex-boundary of $f$ in $\AH$ defined as 
$\frac{1}{n^d}\cdot\left|\{x: \exists y, (x,y)\in S^-_f\}\right|$. In fact, we focus on the following smaller quantity
\[
\Gamma^{-}_f := \frac{1}{n^d}\cdot(\textrm{Size of the maximum-cardinality matching in $S^-_f$})
\] 
Our main technical result is the following Margulis-style~\cite{Mar74} directed isoperimetry theorem over the augmented hypergrid.
\begin{theorem}\label{thm:main-margulis}
	If $f:\AH\to \{0,1\}$ is $\eps$-far from being monotone, then 
	\[
	I^{-}_f\cdot \Gamma^{-}_f = \Omega(\eps^2)
	\]
\end{theorem}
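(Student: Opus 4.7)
The plan is to adapt the Chakrabarty--Seshadhri strategy for the directed hypercube, using the violation graph on $\AH$ whose vertex set is $[n]^d$ and whose edges are $S^-_f$. The central tool in the hypercube case is a Lehman--Ron type routing lemma: if the directed graph $S^-_f$ admits $M$ vertex-disjoint alternating paths between $f^{-1}(1)$ and $f^{-1}(0)$ then the distance to monotonicity is bounded by a linear function of the sizes of these sets minus $M$. One then does a dichotomy: either $|S^-_f|$ is itself huge (contributing directly to $I^-_f$), or the routing forces a large matching in the violation graph (contributing to $\Gamma^-_f$). The classical proof relies crucially on the fact that the directed hypercube is layered by Hamming weight, so rank-type arguments can push violations up level by level.

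The immediate obstacle is precisely the one the introduction flags: the augmented hypergrid is \emph{not} layered, because edges of arbitrary length $2^a$ coexist. My plan to handle this is to define what the introduction calls \emph{good portions} of $\AH$: fix, for each coordinate $i$, one parity $c_i$ and one scale $a_i$, and consider only the matching $H_i := H^{c_i}_{i,a_i}$. Restricted to the edges $\bigcup_i H_i$, the resulting directed graph \emph{is} layered (by the number of step-moves taken from a base-point), and moreover naturally decomposes $[n]^d$ into many disjoint sub-posets each of which is isomorphic to a sub-hypercube $\{0,1\}^T$ for some $T \subseteq [d]$. On each such sub-hypercube the hypercube Lehman--Ron argument can be applied essentially verbatim; this would constitute a \emph{generalized Lehman--Ron theorem} for good portions.

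Next I would use a random restriction / averaging argument to transfer this local analysis back to the global augmented hypergrid. Concretely, I would sample $(a_i, c_i)_{i\in[d]}$ uniformly (the exact choice mirrored in the tester in \Fig{alg}) and show that, in expectation, a large fraction of the edges of $S^-_f$ land inside good portions whose induced Boolean restriction remains $\Omega(\eps)$-far from monotone. Here I would lean on the known bound $I^-_f=\Omega(\eps)$ (implicit in \cite{DGLRRS99,EKK+00,ChSe13}) to guarantee enough "raw material" survives the restriction, and on a dimension-reduction / chain argument over the $\log n$ scales in each coordinate to extract a sub-hypercube in which the restricted function is still far from monotone. Within each such good sub-hypercube, the generalized Lehman--Ron gives either many violating edges or a large matching in $S^-_f$; summing contributions across good portions and applying Cauchy--Schwarz yields $I^-_f \cdot \Gamma^-_f = \Omega(\eps^2)$.

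The hardest step, I expect, is the definition of good portions and the proof that they capture enough violations. The delicate requirement is that the restriction simultaneously (i) produces a layered substructure amenable to Lehman--Ron, and (ii) preserves a constant fraction of the distance to monotonicity, so that no $\log n$ or $d$ factors creep into the $\eps^2$ bound. The combinatorial question of which matchings $H^{c_i}_{i,a_i}$ to combine, and how to route violations across different good portions without double counting, is where the main new combinatorial ideas must go; the rest of the argument should follow the $\{0,1\}^d$ template of \cite{ChSe13-j} once the layered structure has been recovered locally.
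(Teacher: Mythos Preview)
Your proposal has a genuine gap, and it stems from a misreading of what ``good portions'' means in this paper. You take a good portion to be the sub-hypercube obtained by fixing one matching $H^{c_i}_{i,a_i}$ per coordinate and restricting $f$ to it; you then hope a random such restriction remains $\Omega(\eps)$-far so that the hypercube Margulis result applies. But this restriction step is exactly the unproven dimension-reduction claim. An edge of $S^-_f$ survives the random choice of $(a_i,c_i)$ only with probability $\Theta(1/\log n)$, and there is no argument that the \emph{distance to monotonicity} of $f$ restricted to a random sub-hypercube stays $\Omega(\eps)$; known dimension-reduction results lose polynomial factors in $d$, and nothing here prevents a $\log n$ loss either. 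Since the theorem asserts a clean $\Omega(\eps^2)$ with no $d$ or $n$ dependence, any averaging over scales that bleeds such factors kills the argument. You flag this as ``the hardest step'' but offer no mechanism for it.

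The paper does something structurally different. It never restricts to fixed scales. Instead it takes a maximal violation matching $M^*$ that minimizes the average $\AH$-distance $r$ between matched pairs (and, among those, maximizes $\Psi(M)=\sum d_\bA^2$), and runs the dichotomy on $r$: one shows $I^-_f=\Omega(r\eps)$ and $\Gamma^-_f=\Omega(\eps/r)$, whose product is $\Omega(\eps^2)$ for every $r$. The ``good portions'' are not sub-hypercubes but \emph{good consistent pairs} $(S,T)$: subsets of the endpoints of $M^*_i$ (pairs at distance exactly $i$) whose cover graph---the union of all length-$i$ shortest paths from $S$ to $T$ in $\AH$---happens to be an $i$-layered DAG. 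The key combinatorial work (the conflict/rematching procedure in \Sec{good}) shows that the optimality of $M^*$ forces the endpoints of $M^*_i$ to decompose into pairwise-independent good pairs; a generalized Lehman--Ron (\Lem{lehmanron}) then routes vertex-disjoint paths within each good pair. No averaging over scales is involved, and the layered structure is extracted from $M^*$ itself rather than imposed by a restriction. Your plan would need an entirely new argument to bridge the restriction gap, whereas the paper sidesteps it.
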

As in the Boolean hypercube case, the Margulis-style isoperimetry theorem allows one to analyze the tester in \Fig{alg}. 
We follow the clean analysis of Khot et al~\cite{KMS15} and discuss this in \Sec{tester_analysis}. 
\smallskip

The proof of \Thm{main-margulis} follows the structure of that in Chakrabarty and Seshadhri~\cite{ChSe13-j} for functions over the Boolean hypercube. Fix a function $f$ which is $\eps$-far from being monotone. Consider a matching on the vertices of $\AH$ consisting of disjoint pairs of violations (a matching in the violation graph of $f$). A folklore theorem states that any such matching that is \textit{maximal} has cardinality $\geq \eps n^d/2$. 
We focus on maximal matchings in the violation graph of $f$ which {\em minimize} the average shortest path distance in $\AH$ between its endpoints. That is, we consider $M$ which minimizes $\frac{1}{|M|}\sum_{(x,y)\in M} d_\bA(x,y)$.
Let $r$ be this minimum value.

The following theorem can be proved using the techniques developed in~\cite{ChSe13-j,ChSe13}. 
In \cite{ChSe13-j}, Chakrabarty and Seshadhri prove that $I_f^- = \Omega(r\eps)$ for Boolean functions over the hypercube $\{0,1\}^d$. A similar observation holds for real-valued functions over $\AH$. Indeed, \cite{ChSe13} proves $I_f^- = \Omega(\eps)$ for such functions. We show that $I_f^- = \Omega(r\eps)$ holds for Boolean functions over $\AH$ and defer the complete proof to the appendix \Sec{edge} as it is not the main contribution of this paper and the techniques are very similar to the previous works discussed.

\begin{theorem}\label{thm:edge}
	If the average distance between the endpoints of $M^*$ is $r$, then $I^{-}_f = \Omega(r\eps)$.
\end{theorem}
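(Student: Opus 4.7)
The plan is to adapt the framework developed by Chakrabarty and Seshadhri~\cite{ChSe13-j} for the Boolean hypercube to the augmented hypergrid $\AH$, while importing routing ideas from their real-valued hypergrid result~\cite{ChSe13}. At a high level the proof has two parts: (a) a \emph{per-pair} bound saying that for each $(x,y)\in M^*$, every shortest directed path from $x$ to $y$ in $\AH$ contains $\Omega(d_\bA(x,y))$ edges of $S_f^-$; and (b) an \emph{aggregation} step via a Lehman--Ron-style routing in $\AH$ that turns the per-pair bound into the desired global lower bound on $|S_f^-|$.

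For part (a), fix $(x,y)\in M^*$ (so $f(x)=1,f(y)=0$) and a shortest monotone path $P$ from $x$ to $y$ in $\AH$. The key local observation is that an edge $(u,v)$ on $P$ that does \emph{not} lie in $S_f^-$ must satisfy either $f(u)=0$ or $f(v)=1$: in the former case $(x,u)$ is a violation with $d_\bA(x,u)<d_\bA(x,y)$, and in the latter case $(v,y)$ is. Using the minimality of $M^*$, I would then argue by a swap that too many such non-violating edges on $P$ cannot occur: removing $(x,y)$ from $M^*$ and inserting the strictly shorter violation produces a matching of smaller average distance, and the deletion can be re-maximalized by adding only pairs local to $x$ and $y$, carefully controlling the loss. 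Making this precise forces an $\Omega(d_\bA(x,y))$ fraction of the edges of $P$ into $S_f^-$.

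For part (b), I would invoke the generalization of the Lehman--Ron theorem to $\AH$ established in \Sec{lehmanron}. This yields a routing of one unit of flow from each matched $x$ to its partner $y$ along monotone shortest paths in $\AH$, with edge congestion bounded by a small (possibly $\polylog$) factor. Since each unit of flow traverses $\Omega(d_\bA(x,y))$ edges of $S_f^-$ by part (a), the total ``violation flow'' is
\[
\Omega\Bigl(\sum_{(x,y)\in M^*} d_\bA(x,y)\Bigr)=\Omega(r|M^*|)=\Omega(r\eps n^d),
\]
using the folklore bound $|M^*|\geq\eps n^d/2$. Dividing out the congestion gives $|S_f^-|=\Omega(r\eps n^d)$, i.e.\ $I_f^-=\Omega(r\eps)$.

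The main obstacle is that $\AH$ is \emph{not} a layered DAG: a single augmented edge of length $2^a$ can telescope many unit moves along a coordinate, so shortest paths in $\AH$ are not merely permutations of coordinate steps, as they are in the hypercube. This causes trouble in both parts: the swap argument in (a) must track how re-maximalization interacts with non-uniform step lengths and with augmented edges that ``jump over'' possibly matched vertices, while the Lehman--Ron routing in (b) requires the ``good portions'' decomposition from \Sec{good} to carve out layered subregions in which classical routing arguments can be applied. Managing this non-layeredness, rather than any qualitatively new combinatorial idea, is what pushes the bulk of the proof into the appendix.
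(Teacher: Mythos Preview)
Your plan has genuine gaps in both parts. In part~(a), the per-pair swap does not go through: when $(u,v)$ on the path has, say, $f(u)=0$, you propose to replace $(x,y)$ by the shorter violation $(x,u)$, but $u$ may already be matched in $M^*$, so this is not a legal move; and even when it is, re-maximalizing after deleting $(x,y)$ can force you to add pairs of arbitrary distance (there is no reason they are ``local to $x$ and $y$''), so the average-distance-decreases contradiction does not fire. In fact the per-pair statement---that every shortest $x$--$y$ path carries $\Omega(d_\bA(x,y))$ edges of $S_f^-$---is not what is true; what holds is an \emph{aggregate} bound over all pairs and all edge-matchings. In part~(b), the generalized Lehman--Ron theorem of \Sec{lehmanron} is the wrong tool: it yields vertex-disjoint paths for $\ell$-good consistent pairs, not a low edge-congestion flow for the pairs of $M^*$ (which sit at varying distances and need not form a good pair). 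In the paper, \Lem{lehmanron} and the good-portions machinery of \Sec{good} are used only for the \emph{vertex} bound (\Thm{vertex}), not for \Thm{edge}.

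The paper's actual proof (Appendix, \Sec{edge}) is a per-matching alternating-path argument. One picks $M$ maximal, of minimum average distance, and among those maximizing a potential $\Phi(M)=\sum_{(x,y)}\sum_{H}\mu_H(x,y)$. For each $H\in\mathbf{H}$, pairs are classified as $H$-cross, $H$-straight, or $H$-skew. From each cross-pair lower endpoint $x$ one builds a sequence $S_x$ alternating $H$-edges with $H$-straight $M$-pairs; if $S_x$ terminates without a violating $H$-edge, a rematching $M'$ either strictly decreases the average distance or keeps it equal while strictly increasing $\Phi$ (the skew case), contradicting the choice of $M$. Disjointness of the $S_x$'s gives $|S_f^-\cap H|\ge\tfrac12|cr_H(M)|$, and summing over $H$ yields $|S_f^-|\ge\tfrac12\sum_{(x,y)\in M}d_\bA(x,y)=\tfrac{r|M|}{2}\ge\tfrac{r\eps n^d}{4}$. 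The non-layeredness of $\AH$ is handled by the potential $\Phi$ (absorbing skew pairs), not by any routing decomposition.
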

Therefore, if $r$ is large then the edge boundary is large. \Thm{main-margulis} follows from the next theorem which shows if $r$ is small, then there is a large matching in $S^{-}_f$. 
\begin{theorem}\label{thm:vertex}
	If the average distance between the endpoints of $M^*$ is $r$, then $\Gamma^{-}_f = \Omega(\eps/r)$.
\end{theorem}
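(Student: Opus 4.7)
The plan is to adapt the routing-based proof of \cite{ChSe13-j} from the hypercube to the augmented hypergrid $\AH$. By Markov's inequality applied to the $d_\bA$-distances in $M^*$, at least half of the (at least $\eps n^d / 2$) pairs in $M^*$ have distance at most $2r$; let $M'$ denote this sub-matching. For each pair $(x,y)\in M'$ I fix a monotone $\AH$-path from the $1$-endpoint $x$ down to the $0$-endpoint $y$ of length at most $2r$. Since $f$ drops from $1$ to $0$ along such a path, every path must contain at least one edge of $S^-_f$. The aim is to choose these paths so that each $\AH$-edge is used by few of them; then the total path-length is at most $2r\cdot|M'|$, and low congestion yields $\Omega(|M'|/r)=\Omega(\eps n^d/r)$ distinct violating edges. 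From this set one carves out a matching in $S^-_f$ of the same order greedily, exploiting the fact that $\AH$ decomposes into a bounded-degree collection of matchings $H^c_{i,a}$.

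The low-congestion routing is the crux. In the hypercube case, Chakrabarty--Seshadhri invoke the Lehman--Ron routing theorem, which relies crucially on the hypercube being a layered DAG ordered by Hamming weight. The augmented hypergrid is not layered in this sense because an edge of $H^c_{i,a}$ jumps $2^a$ units in coordinate $i$, so the standard layering breaks. To work around this I would, as previewed in the introduction, isolate \emph{good portions} of $[n]^d$ within which a carefully chosen subgraph of $\AH$ is genuinely layered, prove a Lehman--Ron analogue on each good portion, and use that analogue to route the $M'$-paths that lie inside the portion. Summing the low-congestion contributions across the good portions produces the desired family of almost-distinct violating edges.

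The main obstacle is two-fold: first, defining the notion of \emph{good portion} so that the restricted subgraph is layered yet still rich enough to support monotone routing between the endpoints of the $M'$-pairs lying inside; and second, showing that whenever $f$ is $\eps$-far from monotone, a constant fraction of $M'$ actually falls inside good portions. The introduction explicitly flags this second step as the paper's key combinatorial contribution, so I would expect the bulk of the technical work to live there. The minimality of $M^*$ should enter exactly as in \cite{ChSe13-j}: any violation pair strictly shorter than expected that surfaces from the routing could be swapped into $M^*$, contradicting minimality, and this is what both pins down the path selection and controls the congestion. Putting these pieces together gives $\Gamma^-_f = \Omega(\eps/r)$.
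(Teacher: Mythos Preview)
Your high-level plan matches the paper---generalize Lehman--Ron to layered ``good'' pieces of $\AH$---but two concrete steps would fail. First, the $1/r$ loss does not come from congestion: the paper buckets $M^*$ by exact distance into $M^*_i$, pigeonholes to find some $i\le 2r$ with $|M^*_i|\ge \eps n^d/(8r)$, and then obtains $|M^*_i|$ \emph{vertex-disjoint} paths from $S^*_i$ to $T^*_i$ via the generalized Lehman--Ron lemma. Each such path contains a violating $\AH$-edge, and vertex-disjointness means these edges already form a matching in $S^-_f$ of size $\Omega(\eps n^d/r)$. Your alternative---low edge-congestion followed by greedy matching extraction using the decomposition of $\AH$ into the $H^c_{i,a}$'s---would lose an extra $\Theta(d\log n)$ factor, since there are $2d\log n$ such matchings, not $O(1)$; this would not give $\Gamma^-_f=\Omega(\eps/r)$.

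Second, you invoke only the minimum-average-distance property of $M^*$, but the paper crucially also takes $M^*$ to \emph{maximize} $\Psi(M)=\sum_{(x,y)\in M} d_\bA^2(x,y)$ among all such maximal matchings. This secondary extremality is precisely what makes the ``good portions'' work (\Lem{good} and the Rematching \Lem{rematch}): it forces the cover graph $\AH^{(S_j,T_j)}$ of each piece to be a layered DAG (so that the generalized Lehman--Ron \Lem{lehmanron} applies) and forces distinct pieces $(S_j,T_j)$, $(S_{j'},T_{j'})$ to have vertex-disjoint cover graphs (so the disjoint paths found in each piece stay disjoint globally). Without $\Psi$-maximization neither conclusion holds; minimality of the average distance alone only rules out strictly shorter rematchings, not the equal-total-distance uncrossings needed to establish independence of the pieces. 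The ``good portions'' themselves are not subsets of $[n]^d$ but $i$-consistent pairs $(S_j,T_j)$ partitioning the endpoints of $M^*_i$, produced by a recursive conflict-merging procedure that relies on $\Psi$-maximality throughout.
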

The above theorem is where the novelty of this paper lies. In the next subsection, we make key definitions and outline the roadmap of the proof of \Thm{vertex} 
which constitutes the bulk of the paper.
\Sec{tester_analysis} contains the final analysis of the tester which follows from the above theorem via by now standard analysis procedures. The interested reader can read \Sec{tester_analysis} independent of the remainder of the paper.

\subsection{Proof of \Thm{vertex}: A Roadmap}
Among all maximal matchings which have average distance $r$, choose $M^*$ to be the one which {\em maximizes} the following potential function
\[
\Psi(M) := \sum_{(x,y) \in M} d^2_\bA(x,y)
\]
Maximizing $\Psi(\cdot)$ has the effect of uncrossing pairs in the matching, which is useful when we use $M^*$ for finding structured subgraphs in the augmented hypergrid. We also point out that this is the same potential function used in \cite{ChSe13-j} for the hypercube case. Let $M^*_i \subseteq M^*$ be the pairs $(x,y)$ with $d_\bA(x,y)=i$.
Since the average distance of $M^*$ is $r$, we get $\sum_{i\leq 2r} |M^*_i| \geq |M|/2 \geq \eps n^d/4$. 
For any $i$, let $S^*_i$ be the "lower endpoints" of $M^*_i$ which evaluate to $1$ and $T^*_i$ be the "upper endpoints" which evaluate to $0$. 
We now make a few definitions.

\begin{definition} [Consistent Sets] \label{def:consistent}
	A pair of subsets $(S,T)$ of any poset $G$ is said to be {\em $\ell$-consistent} if there exists a bijection $\phi:S\to T$ such that 
	$d_G(s,\phi(s)) = \ell$.
\end{definition}
\noindent
Note that for all $i$, we have that the sets $(S^*_i,T^*_i)$ are $i$-consistent.
The following definitions are key for proving the theorem.
\begin{definition}[Cover Graph induced by Consistent Sets]\label{def:cover}
	Given a pair $(S,T)$ of $\ell$-consistent sets in a poset $G$, we let $\calP^{(S,T)}$ denote the collection of paths in $G$ which originate from some vertex $s\in S$, terminate in some vertex $t\in T$, is a shortest path from $s$ to $t$, and has length {\em exactly} $\ell$. 
	The $\ell$-cover graph $G^{(S,T)}$ is a subgraph of $G$ formed by taking the union of all paths in $\calP^{(S,T)}$.
\end{definition}
We remark here that in any $\ell$-consistent pair $(S,T)$ there may be $s\in S$ and $t\in T$ such that $d_G(s,t) \neq \ell$. In that case $\calP^{(S,T)}$ doesn't contain any path from $s$ to $t$. However, $G^{(S,T)}$ may contain a path from $s$ to $t$. We illustrate an example of this fact in \Fig{cover_example}.

\begin{figure}
  \hspace*{5cm} \includegraphics[width=.35\linewidth]{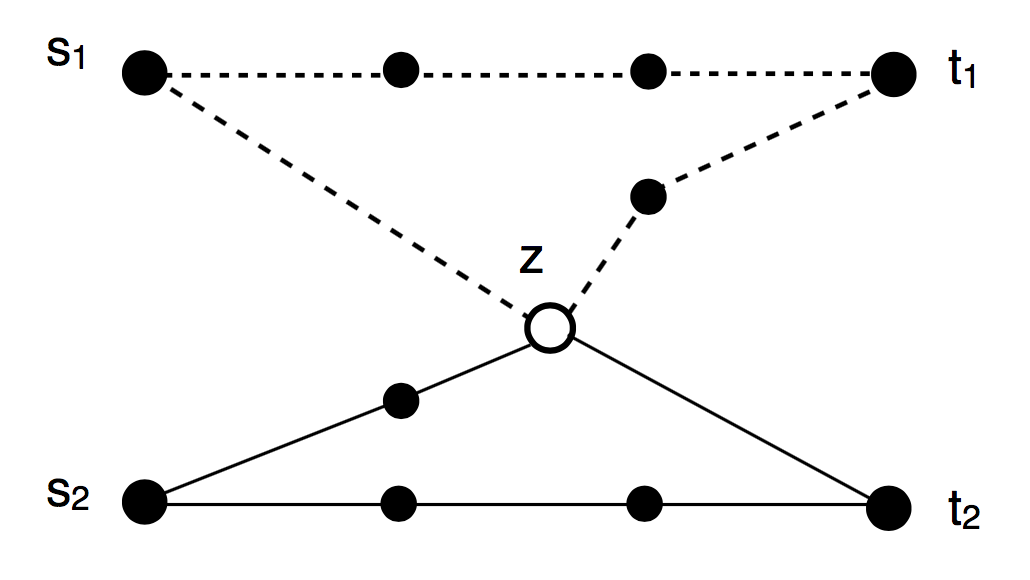}
  \caption{\small{Dashed and solid lines represent the shortest paths from $s_1$ to $t_1$ and $s_2$ to $t_2$, respectively. Let $S = \{s_1,s_2\}$ and $T = \{t_1,t_2\}$. $(S,T)$ is a $3$-consistent pair since $d(s_1,t_1) = d(s_2,t_2) = 3$. Observe that $d(s_1,t_2) = 2$ and $d(s_2,t_1) = 4$ and so the paths from $s_1$ to $t_2$ and from $s_2$ to $t_1$ are not in $\calP^{(S,T)}$; however these paths \textit{will be present} in the cover graph $G^{(S,T)}$. Furthermore, the fact that $z$ lies at step $1$ along a length $3$ shortest path from $s_1$ to $t_1$ \textit{and} at step $2$ along a length $3$ shortest path from $s_2$ to $t_2$ shows that $G^{(S,T)}$ is \textit{not} a $3$-layered DAG and so $(S,T)$ is not $3$-good. However, it is instructive to note that $M^*$ would match $s_1$ to $t_2$ and $s_2$ to $t_1$ since this maximizes $\Psi(\cdot)$.} }
  \label{fig:cover_example}
\end{figure}

An $\ell$-layered DAG is a directed acyclic graph with nodes partitioned into $\ell+1$ layers $L_0,\ldots,L_\ell$ where each edge goes from a vertex in some $L_i$ to a vertex in $L_{i+1}$. A layered DAG is a very structured subgraph.
\begin{definition}[Good Pairs of Consistent Sets]\label{def:good}
	An $\ell$-consistent pair $(S,T)$ is {\em $\ell$-good} if the graph $G^{(S,T)}$ is an $\ell$-layered DAG.
\end{definition}
If $G$ is the hypercube $\{0,1\}^d$, and $(S,T)$ are consistent pairs such that Hamming weight of each vertex in $S$ is the same, and that of each vertex in $T$ is the same (that is $S$ lies in one "level" of the hypercube and so does $T$), then it is easy to see $(S,T)$ is a good pair. 
Lehman and Ron~\cite{LR01} prove that for such $S$ and $T$, one can find $|S| = |T|$ {\em vertex-disjoint} paths. 
Our first lemma is a generalization of the Lehman-Ron theorem~\cite{LR01} to arbitrary good pairs in the augmented-hypergrid.
This, in some sense, abstracts out the sufficient conditions needed for a Lehman-Ron like theorem. We prove this lemma in \Sec{lehmanron}.
\begin{lemma}[Generalized Lehman-Ron]\label{lem:lehmanron}
	If $S,T$ are two subsets of $\AH$ such that $(S,T)$ is an $\ell$-good, consistent pair for some $\ell>0$, then there exists $|S|$ vertex disjoint paths from $S$ to $T$.
\end{lemma}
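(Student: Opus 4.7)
The plan is to apply Menger's theorem to reduce the problem to a minimum vertex cut computation, then prove the cut lower bound by induction on $\ell$ using an intermediate-layer argument.

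Before invoking Menger, I would first establish $S = L_0$ and $T = L_\ell$ inside $G^{(S,T)}$. For any $v \in L_0$, the layered structure forces $v$ to be the source of some length-$\ell$ path in $\calP^{(S,T)}$ and hence $v \in S$. Conversely, each $s \in S$ is, by $\ell$-consistency, the source of a length-$\ell$ shortest path in $\AH$ to $\phi(s)$, which belongs to $\calP^{(S,T)}$, so $s \in L_0$; symmetric reasoning gives $T = L_\ell$. Menger's theorem then reduces the lemma to showing that every vertex cut in $G^{(S,T)}$ separating $S$ from $T$ has cardinality at least $|S|$.

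I would establish this cut lower bound by induction on $\ell$. The base case $\ell = 1$ is immediate: the bijection $\phi$ furnishes $|S|$ vertex-disjoint edges, as each pair $(s,\phi(s))$ is itself an edge of $\AH$. For $\ell > 1$, I would pick an intermediate layer $L_i$ (say $i = 1$) and construct a subset $M \subseteq L_i$ of size $|S|$ together with a matching from $S$ to $M$ along single edges of $G^{(S,T)}$, such that $(S,M)$ is a $1$-good consistent pair and $(M,T)$ is an $(\ell - 1)$-good consistent pair in $\AH$. The subpath principle for shortest paths---every prefix/suffix of a length-$\ell$ shortest path is itself shortest---guarantees $d_{\bA}(s,m) = 1$ and $d_{\bA}(m,t) = \ell - 1$ for matched pairs, yielding the required consistencies. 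Applying the inductive hypothesis to $(M,T)$ then gives $|S|$ vertex-disjoint paths from $M$ to $T$, and prepending the $S$-to-$M$ matching edges produces the desired vertex-disjoint paths from $S$ to $T$.

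The main obstacle is the construction of $M$. I would approach it through Hall's theorem applied to the bipartite graph between $S$ and $L_1$ induced by $G^{(S,T)}$: the task is to verify $|N(A)| \geq |A|$ for every $A \subseteq S$, after which a perfect matching saturating $S$ defines $M$. This is the step where the $\ell$-good (layered DAG) hypothesis must be exploited decisively. The heuristic is that if Hall's condition failed for some $A$, then the length-$\ell$ shortest paths emanating from $A$ would have to funnel through fewer than $|A|$ vertices of $L_1$; an exchange argument---swapping the prefix and suffix of two such paths at a shared vertex $v \in L_1$---would then produce either a pair of $S$-$T$ endpoints at $\AH$-distance strictly less than $\ell$ (violating the shortest-path requirement in the definition of $\calP^{(S,T)}$) or a vertex lying at two distinct depths on different paths of $\calP^{(S,T)}$ (violating the layered DAG property guaranteed by $\ell$-goodness). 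This mirrors the exchange strategy used in the original Lehman--Ron proof for the hypercube, now lifted to the augmented hypergrid via the abstracted notion of a good pair.
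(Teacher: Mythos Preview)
Your overall plan---split at an intermediate layer, induct on $\ell$---is the right shape, but two steps are genuine gaps, and the paper's proof is organized quite differently to get around them.

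\textbf{Gap 1: the Hall argument does not work.} Your proposed contradiction from a Hall violation is that swapping prefix and suffix of two paths at a shared vertex $v\in L_1$ would yield either an $S$--$T$ pair at distance $<\ell$ or a vertex at two different depths. But neither happens: if $s_1\!\to\! v\!\to\! t_1$ and $s_2\!\to\! v\!\to\! t_2$ are both length-$\ell$ shortest paths with $v\in L_1$, then the swapped paths $s_1\!\to\! v\!\to\! t_2$ and $s_2\!\to\! v\!\to\! t_1$ are also length $\ell$, and $v$ sits at depth $1$ on all of them. Sharing a vertex at the \emph{same} layer is entirely compatible with $\ell$-goodness; the definition only forbids a vertex appearing at two distinct depths. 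So you never reach a contradiction, and Hall's condition remains unproved. In fact the paper does not establish $|L_1|\ge |S|$ directly; it proves only the disjunction $|L_1|\ge m$ \emph{or} $|L_{\ell-1}|\ge m$, and it needs a separate structural lemma (degree monotonicity along paths in $G^{(S,T)}$, proved via the coordinate-shift structure of the augmented hypergrid) plus induction on $m$ to get even that.

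\textbf{Gap 2: $(M,T)$ need not be consistent.} Suppose you somehow obtain a matching $S\to M\subseteq L_1$. To invoke the inductive hypothesis on $(M,T)$ you need a bijection $\psi:M\to T$ with $d_{\bA}(m,\psi(m))=\ell-1$ for every $m$. Your ``subpath principle'' gives $d_{\bA}(s,m)=1$, but it does \emph{not} give $d_{\bA}(m,\phi(s))=\ell-1$: the edge $(s,m)$ lies on some path in $\calP^{(S,T)}$, but that path may terminate at some $t'\neq\phi(s)$, and nothing rules out $d_{\bA}(m,\phi(s))\in\{\ell-2,\ell-1,\ell\}$. So you have no bijection witnessing consistency of $(M,T)$. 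This is exactly why the paper works with a \emph{tripartite} graph $G_{S,\widehat{L},T}$ (with $\widehat{L}$ the large layer from the disjunction) and applies Menger there: a min-cut argument---again using induction on $m$---produces a $3$-dimensional matching $\{(s,v,t)\}$, which simultaneously furnishes the bijections $S\to V$ and $V\to T$ needed for consistency on both sides. Only then does the induction on $\ell$ go through.

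In short, the paper runs a double induction on $(\ell,m)$, proves a hypergrid-specific degree-monotonicity claim to locate a large intermediate layer, and uses Menger on a tripartite reachability graph to manufacture the two consistent bijections at once. Your single induction on $\ell$ with a one-sided Hall matching misses both the structural input and the simultaneous-matching step.
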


\begin{definition}[Independent Good Pairs]\label{def:independent}
	Two $\ell$-good pairs $(S,T)$ and $(S',T')$ are said to be independent if any path in $G^{(S,T)}$ is vertex-disjoint from any path in $G^{(S',T')}$.
\end{definition}
\noindent
Independent good pairs are in some sense "far away" from each other -- if we find vertex disjoint paths from $S$ to $T$, and from $S'$ to $T'$, then these paths will not intersect each other.

Recall the definition of the matching $M^*_i$ from the beginning of this section.
Our second lemma shows that if $M^*$ is the $\Psi$-maximizing matching in {\em any} (not necessarily augmented hypergrid) poset, then for all $i$ there is a collection of pairwise independent, $i$-good, consistent pairs of total size $|M^*_i|$. This is proved in \Sec{good}.

\begin{lemma}[Existence of large pairwise-independent, good consistent pairs]\label{lem:good}
Given any poset $G$ and a function $f:G\to \{0,1\}$, let $M^*$ be the maximal cardinality matching with minimum average distance $r$ and maximum $\Psi(\cdot)$ among these. Let $M^*_i$ be the subset of $M^*$ whose endpoints are at distance exactly $i$. Then there exists a collection of pairwise independent $i$-good pairs $(S_1,T_1),\ldots,(S_k,T_k)$ such that $S^*_i = S_1\cup \cdots \cup S_k$ and $T^*_i = T_1 \cup \cdots \cup T_k$.
\end{lemma}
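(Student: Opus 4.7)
\textbf{Proof proposal for \Lem{good}.}

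The plan is to define the partition via a natural equivalence relation on the pairs of $M^*_i$, whose classes will be exactly $C_1,\ldots,C_k$, and verify the required properties using the extremal choice of $M^*$. The key technical input is a pairwise layer-preservation claim obtained by a $2$-pair exchange argument that exploits both the minimum-average-distance and the maximum-$\Psi$ conditions on $M^*$.

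\textbf{Setting up the partition.} Call two pairs $(s,t),(s',t')\in M^*_i$ \emph{linked} if there exist length-$i$ shortest paths $P:s\to t$ and $P':s'\to t'$ in $G$ sharing at least one common vertex, and let $\sim$ denote the transitive closure. Let $C_1,\ldots,C_k$ be its equivalence classes; set $S_j:=\{s:(s,t)\in C_j\}$ and $T_j:=\{t:(s,t)\in C_j\}$, with bijection $\phi_j:S_j\to T_j$ induced by $M^*$. Every pair in $M^*_i$ has endpoint distance exactly $i$, so each $(S_j,T_j)$ is $i$-consistent. Independence is automatic: a shared vertex of $G^{(S_j,T_j)}\cap G^{(S_{j'}, T_{j'})}$ would lie on length-$i$ shortest paths from pairs in both classes, linking those pairs and merging the classes.

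\textbf{The swap argument (the crux).} I will prove the following \emph{pairwise layer-preservation} claim: if $(s_1,t_1),(s_2,t_2)\in M^*_i$ and a vertex $v$ lies at step $j_1$ along some length-$i$ shortest $s_1\to t_1$ path and at step $j_2$ along some length-$i$ shortest $s_2\to t_2$ path, then $j_1=j_2$. Suppose not, and consider $M':=(M^*\setminus\{(s_1,t_1),(s_2,t_2)\})\cup\{(s_1,t_2),(s_2,t_1)\}$. Since $v$ lies on shortest paths that are monotone in the poset (in $\AH$ every edge goes from a lower to an upper point), $s_1\le v\le t_2$; combined with $f(s_1)=1\neq 0=f(t_2)$ this yields $s_1<t_2$, so $(s_1,t_2)$ is a genuine violation, and similarly for $(s_2,t_1)$. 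Thus $M'$ is a maximal violation matching on the same vertex set as $M^*$. The triangle inequality gives
\[
d_G(s_1,t_2)\le j_1+(i-j_2),\qquad d_G(s_2,t_1)\le j_2+(i-j_1),
\]
whose sum is at most $2i$. If the sum is strictly less than $2i$, the total distance (and hence average distance) of $M'$ is strictly less than that of $M^*$, contradicting minimality of $r$. Otherwise both inequalities are tight, so $d_G(s_1,t_2)=i+(j_1-j_2)$ and $d_G(s_2,t_1)=i-(j_1-j_2)$, and
\[
\Psi(M')-\Psi(M^*)=\bigl(i+(j_1-j_2)\bigr)^2+\bigl(i-(j_1-j_2)\bigr)^2-2i^2=2(j_1-j_2)^2>0,
\]
contradicting $\Psi$-maximality of $M^*$.

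\textbf{Good-ness via well-defined layers, and the main obstacle.} With the pairwise claim in hand, fix a class $C_j$ and, for each $v\in G^{(S_j,T_j)}$, define $\mathrm{layer}(v):=d_G(s,v)$ where $s\in S_j$ admits a length-$i$ shortest path to $\phi_j(s)$ through $v$; the pairwise claim applied across all pairs of $C_j$ sharing $v$ shows this is well-defined. Every edge of $G^{(S_j,T_j)}$ lies on some such length-$i$ path, along which the layer increments by one per step, so adjacent layers differ by exactly $1$ and $G^{(S_j,T_j)}$ is an $i$-layered DAG, giving $i$-goodness by \Def{good}. The main delicate point I expect is ensuring the swap produces a valid violation pair in the required generality: in $\AH$ monotonicity of shortest paths makes $s_1<t_2$ and $s_2<t_1$ immediate, but for the statement over an arbitrary poset one needs the underlying graph distance $d_G$ to be realized by order-respecting paths (true for the Hasse-style graphs we use here), and this is exactly the hypothesis one must extract to make the argument fully general.
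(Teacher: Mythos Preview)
Your $2$-swap argument is the right core idea, but as written it only controls \emph{diagonal} paths---those going from some $s$ to its own match $M^*(s)$---whereas the cover graph $G^{(S_j,T_j)}$ is built from \emph{all} length-$i$ shortest paths from $S_j$ to $T_j$, including cross-paths from $s$ to $t$ with $t\neq M^*(s)$ (see \Def{cover} and \Fig{cover_example}). This creates two gaps. First, your function $\mathrm{layer}(v)$ is only defined for vertices sitting on some diagonal path, and even where it is defined you have not shown that an edge $(u,w)$ lying only on a cross-path satisfies $\mathrm{layer}(w)=\mathrm{layer}(u)+1$; your pairwise claim says nothing about cross-paths. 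Second, and more seriously, your independence argument assumes that a shared vertex of two cover graphs lies on \emph{diagonal} paths from each class. But if $v$ lies on a cross-path $s_1\to t_1'$ inside class $j$ and on some path $s_2\to t_2$ inside class $j'$, your ``linked'' relation never fires, and when the two levels of $v$ happen to coincide your $2$-swap yields no contradiction at all---same total distance, same $\Psi$.

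The paper closes exactly this hole with the Rematching Lemma: for \emph{any} $\hat s\in S_j$ and $\hat t\in T_j$---not just a matched pair---one can rewire all of $M^*$ on $C_j$ so that every surviving pair still has distance $i$ and only $\hat s,\hat t$ are left unmatched. This is strictly stronger than a single $2$-swap; it threads through an arbitrarily long chain of conflicts, and it is precisely what lets one handle the cross-path $\hat s\to t$, $s\to \hat t$ configuration that witnesses a failure of layeredness. Correspondingly, the paper's partition is produced by an \emph{iterated} conflict-merging procedure in which ``conflict'' is allowed to look at cross-paths, not just your one-shot transitive closure of diagonal links; your finer partition need not be pairwise independent. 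Your swap computation is exactly the base case of that rematching lemma; what is missing is the inductive chaining and the coarser, conflict-based partition that makes both goodness and independence go through.
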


\begin{proof}[Proof of \Thm{vertex}]
	We know there exists $1\leq i\leq 2r$ with $|M^*_i| \geq |M^*|/4r \geq \eps n^d/8r$.
	\Lem{good} gives us a collection $(S_1,T_1),\ldots,(S_k,T_k)$ of pairwise-independent, $i$-good sets for each $i$. \Lem{lehmanron} implies for each $1\leq j\leq k$, we have a collection of $|S_j|$ vertex disjoint paths from $S_j$ to $T_j$. Since they are pairwise independent, the union of these collections is vertex disjoint. This implies we have $\geq \eps n^d/8r$ vertex disjoint paths from a point that evaluates to $1$ to a point that evaluates to $0$. Since each path must contain at least one edge from $S^-_f$, the theorem follows.
\end{proof}

\section{Finding Good Portions in General Posets: Proof of \Lem{good}}\label{sec:good}

We are given the matching $M^*$ which is the maximal cardinality minimum average-distance matching maximizing $\Psi(M^*)$. $M^*_i$ is the subset which looks at pairs exactly at distance $i$. We find the sets $(S_1,T_1),\!\ldots,\!(S_k,T_k)$ via a recursive algorithm (Procedure to Get Pairwise Conflict-free $C_i$'s). To describe the algorithm, we first make a definition.

%
%
%
%
%
%
%
%
\begin{definition} [Conflicting Sets] \label{def:conflict}
Given a pair of disjoint subsets $C,C'\subseteq M^*_i$, let $S:=\{s:\exists (s,t)\in C\}$ and $T:=\{t:\exists(s,t)\in C\}$, and similarly define $S'$ and $T'$. We say that $C$ and $C'$ conflict if there exists {\em shortest} paths $p$ going from some $s\in S$ to some $t\in T$, and $p'$ going from some $s'\in S'$ to some $t'\in T'$ such that (a) $p$ and $p'$ have a vertex $z$ in common, and (b) $d_p(s,z) = d_{p'}(s',z) = j$ and $d_p(z,t) = d_{p'}(z,t') = i - j$.
\end{definition}

Therefore, two sets conflict if there are shortest paths from their respective $(S,T)$'s intersecting "at \emph{the same} level". Note that the paths needn't be from $s$ to $M^*(s)$, nor do we say the sets conflict if the paths intersect but "at \emph{different} levels". However, as seen later in this section, the pairwise conflict-free sets we obtain from $M^*$ via our recursive algorithm indeed have pairwise disjoint cover graphs as otherwise we would obtain another matching with either (a) smaller average distance or (b) the same average distance and larger $\Psi(\cdot)$, contradicting our definition of $M^*$. The following procedure returns a collection of subsets $C_1,\ldots,C_k$ of $M^*_i$, such that they are pairwise conflict-free. The sets
$S_i,T_i$ are obtained by taking the lower and upper endpoints of $C_i$.

%
%
%
%
%

\begin{framed} \label{procedure}
	\noindent
	\textbf{Procedure to Get Pairwise Conflict-free $C_i$'s:} \\

\noindent Suppose $M^*_\ell = \{(x_1,y_1),(x_2,y_2),...,(x_m,y_m)\}$. The procedure is recursively defined:\smallskip

\noindent \textbf{Base Step:} Define the leaves of $\mathcal{H}$ as $C^{(0)}_{1} = \{(x_1,y_1)\}, C^{(0)}_{2} = \{(x_2,y_2)\}, ..., C^{(0)}_{m} = \{(x_m,y_m)\}$. Construct the base \textit{conflict graph} $G^{(0)}$ as follows: each $C^{(0)}_{i}$ is a vertex and $C^{(0)}_{i}$ is connected by an edge to $C^{(0)}_{j}$ if they conflict (\Def{conflict}) and $i \neq j$. Exit if $G^{(0)}$ has no edges.\smallskip
%

\noindent \textbf{Recursive Step:} For $i\geq 0$:
for the $j$th connected component in $G^{(i)}$, construct a set $C^{(i+1)}_j$ which is the union of all $C^{(i)}_k$'s in the $j$th connected component. Construct the graph $G^{(i+1)}$ on these nodes indexed by $C^{(i+1)}_j$'s. Exit if $G^{(i+1)}$ has no edges.
Note that the number of nodes in $G^{(i+1)}$ is strictly less than that in $G^{(i)}$ since the latter has at least one edge. 
Also note $G^{(i+1)}$ may have new edges since the conflict sets are getting bigger.\smallskip

\noindent \textbf{Termination:} Since the number of vertices in the $G^{(i)}$'s strictly decrease, this procedure terminates.
Let $C^{(\omega)}_1,C^{(\omega)}_2,\ldots,C^{(\omega)}_k$ be the collection of sets at this level. By definition these are pairwise conflict free. Let $S_i$ (resp, $T_i$) be the lower (resp, upper) set of endpoints of the pairs in $C^{(\omega)}_i$. \smallskip

\noindent {\bf Return} $(S_1,T_1),\ldots,(S_k,T_k)$.

%
%
%
%
\end{framed}
First note that the lower and upper endpoints of any set $C^{(\alpha)}_j$ is $i$-consistent since they can be paired using the matching $M^*_i$. Also note that at any iteration $\alpha$, every matched pair $(s,t)\in M^*_i$ is in some $C^{(\alpha)}_j$. Therefore, the $S_i$'s partition $S^*_i$ and similarly $T_i$'s partition $T^*_i$. What remains to be proven is that (a) each $(S_i,T_i)$ is good, and (b) they are pairwise independent. Before we do so we need the following "rematching lemma" which is key.

Fix one of the sets $C^{(\omega)}_j$ in the conflict-free collection, and let $(S_j,T_j)$ be the sets obtained. For the sake of the rematching lemma let us forsake the subscript $j$. 
%

\begin{lemma} [Rematching Lemma] \label{lem:rematch}
 For any $\hat{s} \in S$, $\hat{t} \in T$, it is possible to rearrange $M^*_i$ to form a new matching $M'$ with the following properties:

\begin{itemize}[noitemsep]
    \item For any $s \in S$ with $s \neq \hat{s}$: $d(s,M'(s)) = i$.
    \item $\hat{s}$ and $\hat{t}$ are the only vertices which become unmatched. 
\end{itemize}
\end{lemma}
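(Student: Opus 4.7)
I will prove the Rematching Lemma by induction on the recursion depth $\omega$ at which the set $C = C_j^{(\omega)}$ containing $\hat{s}$ and $\hat{t}$ was formed. The base case $\omega = 0$ is immediate: $C$ is a single pair in $M^*_i$, which must be $(\hat{s}, \hat{t})$, and unmatching it gives the desired $M'$ (the condition on $s \neq \hat{s}$ holds vacuously).

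For the inductive step, $C$ is obtained by merging some $C^{(\omega-1)}$-components $D_0, D_1, \ldots, D_m$ which form a connected component of the conflict graph $G^{(\omega-1)}$. If $\hat{s}$ and $\hat{t}$ both lie in some $D_a$, I simply invoke the inductive hypothesis on $D_a$. Otherwise, say $\hat{s} \in S(D_0)$ and $\hat{t} \in T(D_r)$ for distinct $D_0, D_r$. Connectedness of the component yields a conflict path $D_0, D_1, \ldots, D_r$, and each consecutive conflict between $D_\ell$ and $D_{\ell+1}$ provides endpoints $\alpha_\ell, \beta_\ell \in D_\ell$ and $\alpha'_\ell, \beta'_\ell \in D_{\ell+1}$, together with length-$i$ shortest paths from $\alpha_\ell$ to $\beta_\ell$ and from $\alpha'_\ell$ to $\beta'_\ell$ that meet at a common vertex $z_\ell$ at step $j_\ell$.

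To assemble $M'$, I will apply the inductive hypothesis to each $D_\ell$ separately to carry out the following unmatchings: in $D_0$ unmatch $\hat{s}$ and $\beta_0$; in each interior $D_\ell$ (for $1 \leq \ell \leq r-1$) unmatch $\alpha'_{\ell-1}$ and $\beta_\ell$; in $D_r$ unmatch $\alpha'_{r-1}$ and $\hat{t}$. Each of these is a legitimate inductive application since $\alpha'_{\ell-1} \in S(D_\ell)$ and $\beta_\ell \in T(D_\ell)$, and $D_\ell$ has strictly smaller recursion depth. I then add the $r$ ``cross pairs'' $(\alpha'_\ell, \beta_\ell)$ for $\ell = 0, \ldots, r-1$. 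A vertex count confirms that exactly $\hat{s}$ and $\hat{t}$ remain unmatched, and the upward monotonicity of shortest paths in $\AH$ (which gives $\alpha'_\ell < z_\ell < \beta_\ell$ in the partial order) makes each cross pair a valid violation. Non-cross pairs are at distance $i$ by the inductive hypothesis.

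The hard part will be showing that each cross pair $(\alpha'_\ell, \beta_\ell)$ lies at distance \emph{exactly} $i$, rather than just $\leq i$ (which is already immediate by splicing the two shortest paths at $z_\ell$). For the matching lower bound I will build an auxiliary matching $M''$ and exploit the optimality of $M^*$: apply the inductive hypothesis to unmatch $\alpha_\ell, \beta_\ell$ inside $D_\ell$ and $\alpha'_\ell, \beta'_\ell$ inside $D_{\ell+1}$, leave the rest of $M^*$ untouched, and add \emph{both} cross pairs $(\alpha_\ell, \beta'_\ell)$ and $(\alpha'_\ell, \beta_\ell)$ (both valid violations by the same monotonicity argument). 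Counting gives $|M''| = |M^*|$, and since the set of matched vertices is preserved, $M''$ is again a maximal matching. Minimality of the average distance of $M^*$ then forces $d(\alpha_\ell, \beta'_\ell) + d(\alpha'_\ell, \beta_\ell) \geq 2i$, which combined with the splicing upper bound $\leq i$ on each summand yields the required equality $d(\alpha_\ell, \beta'_\ell) = d(\alpha'_\ell, \beta_\ell) = i$.
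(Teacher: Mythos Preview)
Your proof is correct and follows essentially the same inductive scheme as the paper: induct on the recursion level, walk along a conflict path $D_0,\dots,D_r$ in $G^{(\omega-1)}$, apply the inductive hypothesis inside each $D_\ell$ to free up the right pair of endpoints, and stitch across with cross pairs. Your notation with $\alpha_\ell,\beta_\ell,\alpha'_\ell,\beta'_\ell$ is in fact cleaner than the paper's, which overloads the symbols $s_k,t_k$ for the witnesses coming from \emph{two} adjacent conflicts.

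The one substantive difference is your treatment of the claim $d(\alpha'_\ell,\beta_\ell)=i$. The paper simply asserts ``Thus $d(s,t)=i$ for all combinations'' directly from the conflict definition; but as you correctly observe, condition~(b) of \Def{conflict} (meeting at the same level~$j_\ell$) only yields the splicing upper bound $d(\alpha'_\ell,\beta_\ell)\le i$, not equality. Your auxiliary construction---rematching inside $D_\ell$ and $D_{\ell+1}$ via the inductive hypothesis, adding both cross pairs, and invoking minimality of the average distance of $M^*$ to force $d(\alpha_\ell,\beta'_\ell)+d(\alpha'_\ell,\beta_\ell)\ge 2i$---is exactly the right way to close this, and is a genuine improvement in rigor over the paper's presentation. (The paper's later use of optimality in \Lem{wrapup} shows the authors had this kind of argument available; they just didn't spell it out inside the Rematching Lemma itself.)
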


We defer the proof of the rematching lemma and first note how it helps us.
Once again, since $S$ and $T$ arise from $C^{(\omega)}_j$, they are $i$-consistent and indeed $(s,M^*(s))$ is the pairing. 
What the above lemma says is that for {\em any} $\hat{s}$ and $\hat{t}$, we can "rewire" the matching to $M'$ so that the average distance still remains $i$. In particular, if $d(\hat{s},\hat{t}) < i$, we would have a contradiction since we would get a different maximal matching with strictly less distance. With this in mind, let's use the rematching lemma to prove that the $(S_i,T_i)$'s are good and pairwise independent (\Def{independent}) thus proving \Lem{good}.

\begin{lemma}\label{lem:wrapup}
    Let $(S_1,T_1), \ldots, (S_k,T_k)$ be the pairs of sets returned by "Procedure to Get Pairwise Conflict-free $C_i$'s" on input $M^*_i$.
	Each $(S_j,T_j)$ is $i$-good. Moreover, $(S_j,T_j)$ and $(S_{j'},T_{j'})$ are independent for $j \neq j'$.
\end{lemma}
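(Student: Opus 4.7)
The plan is to verify the two conclusions of \Lem{wrapup}---$i$-goodness of each $(S_j,T_j)$ and pairwise independence of $(S_j,T_j),(S_{j'},T_{j'})$ for $j\ne j'$---in that order. Both are by contradiction, and both combine \Lem{rematch} with the extremality of $M^*$: among all maximum-cardinality matchings in the violation graph of $f$ it minimizes average distance, and among those it maximizes $\Psi$.

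First, $(S_j,T_j)$ is $i$-consistent: since $C_j^{(\omega)}\subseteq M^*_i$, restricting $M^*$ to $C_j^{(\omega)}$ gives a bijection $\phi_j:S_j\to T_j$ with $d_\bA(s,\phi_j(s))=i$. For $i$-goodness, $G^{(S_j,T_j)}\subseteq\AH$ is automatically a DAG, so only layeredness needs justification. Suppose for contradiction some vertex $z$ lies on a length-$i$ shortest path in $G^{(S_j,T_j)}$ from $s_1\in S_j$ to $t_1\in T_j$ at position $a$, and on another such path from $s_2\in S_j$ to $t_2\in T_j$ at position $b$, with $a<b$. Concatenating the $s_1\to z$ prefix with the $z\to t_2$ suffix is a monotone walk of length $a+(i-b)<i$, so $(s_1,t_2)$ is a violating pair with $d_\bA(s_1,t_2)<i$. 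Applying \Lem{rematch} to $C_j^{(\omega)}$ with $\hat s=s_1,\hat t=t_2$ rearranges $M^*_i$ so that $s_1,t_2$ become unmatched and all remaining pairs still have distance $i$; adjoining the pair $(s_1,t_2)$ yields a matching of the same cardinality as $M^*$ but strictly smaller total distance, contradicting the minimum-average-distance property of $M^*$.

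For independence, suppose that some vertex $z$ lies on a length-$i$ shortest path in $G^{(S_j,T_j)}$ from $s_1\in S_j$ to $t_1\in T_j$ at position $a$, and on a length-$i$ shortest path in $G^{(S_{j'},T_{j'})}$ from $s_2\in S_{j'}$ to $t_2\in T_{j'}$ at position $a'$. If $a=a'$, these paths are exactly the configuration of \Def{conflict}, so $C_j^{(\omega)}$ and $C_{j'}^{(\omega)}$ conflict, contradicting the pairwise conflict-freeness output by the procedure. Otherwise WLOG $a<a'$; concatenating prefixes and suffixes through $z$ shows that both $(s_1,t_2)$ and $(s_2,t_1)$ are violating pairs, with $d_\bA(s_1,t_2)\le a+(i-a')<i$ and $d_\bA(s_2,t_1)\le a'+(i-a)$, so their distances sum to at most $2i$. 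Applying \Lem{rematch} separately to $C_j^{(\omega)}$ with $(\hat s,\hat t)=(s_1,t_1)$ and to $C_{j'}^{(\omega)}$ with $(\hat s,\hat t)=(s_2,t_2)$ (two operations on disjoint parts of $M^*_i$) frees exactly the four vertices $s_1,t_1,s_2,t_2$ while keeping every remaining pair of $M^*_i$ at distance $i$. Adjoining the cross pairs $(s_1,t_2)$ and $(s_2,t_1)$ produces a matching $M''$ of the same cardinality as $M^*$. If the two new distances sum to strictly less than $2i$, then $M''$ has strictly smaller total distance than $M^*$, contradicting minimality. Otherwise the sum equals $2i$; writing $x=a'-a>0$ forces the two distances to equal $i-x$ and $i+x$, and
\[
\Psi(M'')-\Psi(M^*)=(i-x)^2+(i+x)^2-2i^2=2x^2>0,
\]
contradicting the maximality of $\Psi(M^*)$.

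The main obstacle I anticipate is ensuring that these operations yield a legal matching in the violation graph. The two rematchings in the independence step act on the disjoint subsets $C_j^{(\omega)},C_{j'}^{(\omega)}$ of $M^*_i$, so they compose without interference; and the walks concatenated through $z$ are monotone in the product order on $[n]^d$ (every shortest path in $\AH$ is monotone), which guarantees the poset comparabilities $s_1<t_2$ and $s_2<t_1$ that make the cross pairs genuine monotonicity violations.
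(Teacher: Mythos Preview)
Your proof is correct and follows essentially the same approach as the paper's. In fact, in the independence step you are slightly more careful than the paper: you explicitly split into the sub-cases where the two cross distances sum to strictly less than $2i$ (contradicting minimum average distance) versus exactly $2i$ (contradicting maximality of $\Psi$), whereas the paper's proof only states the latter and leaves the former implicit.
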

\begin{proof}

		Recall the definition of the cover graph -- we need to show $G^{(S_j,T_j)}$ is a layered DAG. 
		Suppose not. Then there must exist a vertex $z$ which (a) lies on a path $p$ from $\hat{s}\in S$ to $t\in T$, (b) also lies on a path $p'$ from $s\in S$ to $\hat{t}\in T$ where both paths are of length $i$, are shortest paths between their endpoints,
		and (c) $d_p(\hat{s},z) \neq d_{p'}(s,z)$. The situation is illustrated as follows:
$$\hat{s} \xrightarrow[]{a} z \xrightarrow[]{i-a} t$$ and $$s \xrightarrow[]{b} z \xrightarrow[]{i-b} \hat{t} \text{.}$$ 
		where we assume wlog that $a<b$.
		
		Now, by \Lem{rematch} (rematching lemma), there exists a rearrangement $M'$ of the endpoints of $(S_j,T_j)$ 
		such that $\forall s\neq \hat{s}, d(s,M'(s))=i$, and only $\hat{s}$ and $\hat{t}$ are unmatched. 
		However, $\hat{s}$ and $\hat{t}$ have a path of length $=a+i-b < i$. Therefore, we can add $(\hat{s},\hat{t})$ to $M'$ to obtain a matching whose average length is strictly smaller than that of $M^*$. Contradiction. Therefore $(S_j,T_j)$ must be $i$-good.\smallskip
		
		We now claim $(S_1,T_1)$ and $(S_2,T_2)$ are independent. Suppose not, and there is a shortest path $p_1$ from $s_1\in S_1$ to $t_1\in T_1$ which intersects a shortest path $p_2$ from $s_2\in S_2$ to $t_2\in T_2$. Suppose $z$ is the first (nearest to the $s$'s) at which they meet. Since each $(S_j,T_j)$ is good, the graph $G^{(S_j,T_j)}$ is layered, and therefore these paths have to be shortest paths of length $i$.
		Two cases arise: (a) $d_{p_1}(s_1,z) = d_{p_2}(s_2,z)$. But then that would mean the corresponding $C^{(\omega)}_1$ and $C^{(\omega)}_2$ conflict. Contradiction. (b) $d_{p_1}(s_1,z) < d_{p_2}(s_2,z)$. Again apply the rematching \Lem{rematch} to get two rewired matchings $M'_1$ and $M'_2$ which leave $s_1,t_1$ and $s_2,t_2$ unmatched while all the other pairs are at distance $i$. Now add the pairs $(s_1,t_2)$ and $(s_2,t_1)$ in the matching. Observe this has a larger $\Psi()$ since we replaced two pairs at distance $i$ with two pairs with unequal distances summing to $2i$. Contradiction again.
In sum, all the $(S_j,T_j)$'s are pairwise independent and good.
%
%
\end{proof}
\subsection{Proof of the Rematching \Lem{rematch}}
Just for this proof, we use $M^*$ without the superscript. This is purely for brevity's sake. \Fig{rematch} accompanies the inductive step. 

Let $a$ be the smallest index such that $(\hat{s},M(\hat{s}))$ and $({M}^{-1}(\hat{t}), \hat{t})$ lie in the same $C^{(a)}_\ell$ for some $\ell$. We know $1\leq a\leq \omega$. We prove by induction on $a$.\smallskip

\noindent \textbf{Base Case:} $a=1$. Since $(\hat{s},M(\hat{s})),(M^{-1}(\hat{t}), \hat{t}) \in C^{(1)}_\ell$, we know there is a path from $\{(\hat{s},M(\hat{s}))\}$ to $\{(M^{-1}(\hat{t}), \hat{t})\}$ in $G^{(0)}$. Suppose that the length of the shortest such path is $q$. Let this
path be from $C^{(0)}_0 = \{(\hat{s},M(\hat{s}))\}$ to $C^{(0)}_q = \{(M^{-1}(\hat{t}),\hat{t})\}$.
Let the $j$th node in this path be $C^{(0)}_j = \{(s_j,M(s_j))\}$. 
By definition, $\{(s_j,M(s_j))\}$ conflicts with $\{(s_{j+1},M(s_{j+1}))\}$. Therefore, we can rewire the matching $M'$ which maps for all $1\leq j\leq q$, $s_j$ to $t_{j-1}$. By the definition of conflict, each of these pairs are at distance exactly $i$. And $\hat{s}$ and $\hat{t}$ are the ones left unmatched. On the rest of the pairs, $M'$ and $M$ agree. \smallskip
%

\noindent \textbf{Inductive Step:} Since $a$ is the smallest value such that $(\hat{s},M(\hat{s})),(M^{-1}(\hat{t}), \hat{t}) \in C^{(a)}_\ell$ we know there are sets $C^{(a-1)}_j$ and $C^{(a-1)}_{j'}$ such that they're disjoint, and $(\hat{s},M(\hat{s})) \in C^{(a-1)}_j$ and $(M^{-1}(\hat{t}),\hat{t}) \in C^{(a-1)}_{j'}$. 
Moreover, by construction of $C^{(a)}_\ell$ we know that there is a path from $C^{(a-1)}_{j}$ to $C^{(a-1)}_{j'}$ in the \textit{conflict} graph, $G^{(a-1)}$. Let the shortest such path be of length $q$ and let the shortest path be $C^{(a-1)}_{j} = C_0, C_1, \cdots, C_q = C^{(a-1)}_{j'}$. Let $S_k = \{x | (x,y) \in C_k\}$ and $T_k = \{y | (x,y) \in C_k\}$ for $0 \leq k \leq q$.

For any $k$ in the range $0 \leq k < q$, the sets $C_k$ and $C_{k+1}$ conflict.
Therefore, we know there exists $s_k \in S_k, t_k \in T_k$ and $s_{k+1} \in S_{k+1}, t_{k+1} \in T_{k+1}$ and $z_{k+1}$ such that $$s_k \prec z_{k+1} \prec t_k \text{ and } s_{k+1} \prec z_{k+1} \prec t_{k+1} \text{.}$$
Also, we have $d(s_k,z_{k+1}) = d(s_{k+1},z_{k+1})$ and $d(z_{k+1},t_k) = d(z_{k+1},t_{k+1})$. Thus $d(s,t) = i$ for all combinations of $s \in \{s_k,s_{k+1}\}$ and $t \in \{t_k,t_{k+1}\}$.

By induction on $C_0$ (which recall is $C^{(a-1)}_j$), we can rearrange $M \cap C_0$ to get $M'$ where $\hat{s}$ and $t_0$ are the only unmatched endpoints from $C_0$ (since $\hat{s}$ and $t_0$ are both endpoints from $C_0$). Now, for all $k$ in the range $1 \leq k< q$, by induction on $C_k$, we can rearrange $M \cap C_k$ to get $M'$ where $s_k$ and $t_k$ are the only unmatched endpoints from $C_k$. Finally, by induction on $C_q$, we can rearrange $M \cap C_q$ to get $M'$ where $s_q$ and $\hat{t}$ are the only unmatched endpoints from $C_q$. Our matching is now $M'$ where $\forall (x,y) \in M'$, $d(x,y) = i$ and the sets of unmatched endpoints are $\{\hat{s} = s_0, s_1, s_2, ..., s_q\}$ and $\{t_0,t_1,...,t_{q-1},t_q = \hat{t}\}$. By the existence of $z_k$ for $1 \leq k \leq q$ we can set $M'(s_k) = t_{k-1}$ for all $k$ in the range $1 \leq k \leq q$. Moreover, $d(s_k, t_{k-1}) = i$ for all $k$. The only remaining unmatched endpoints are $\hat{s}$ and $\hat{t}$. This completes the proof of the rematching \Lem{rematch}.

\begin{figure}
  \hspace*{1.5cm} \includegraphics[width=.75\linewidth]{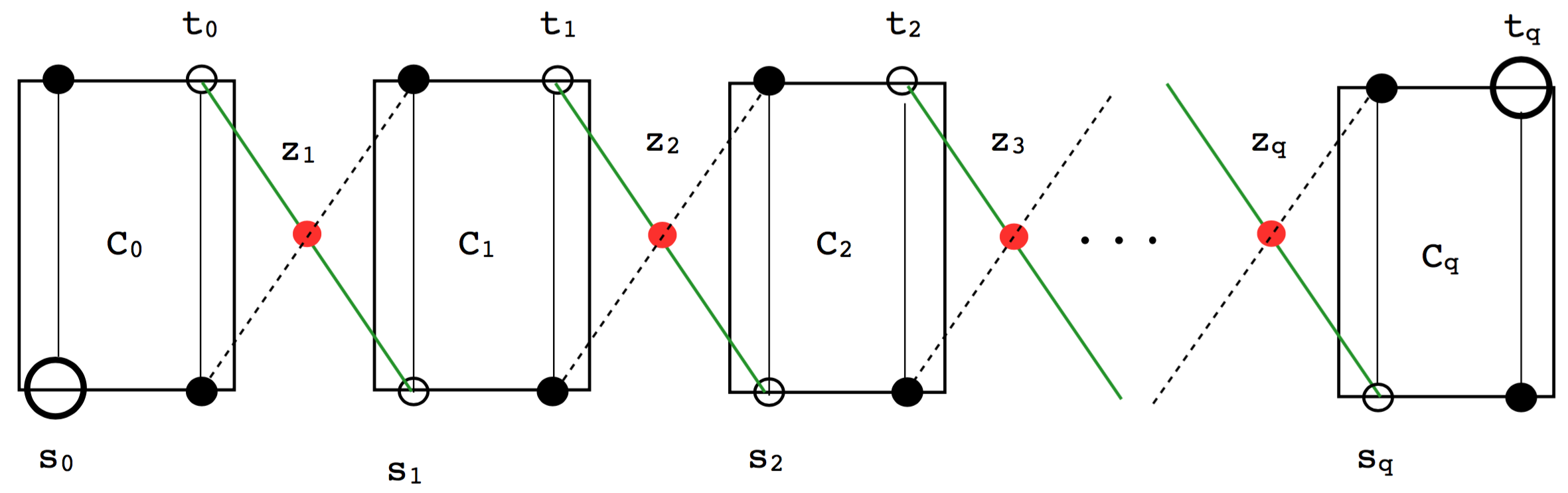}
  \caption{\small{An accordion-like structure illustrating the proof of the rematching lemma. Unfilled circles represent vertices left unmatched by inductively invoking the lemma on each $C_k$. Solid black, vertical lines represent the matching $M$ while solid green, diagonal lines represent the newly matched pairs of $M'$. Dashed, diagonal lines illustrate the additional length $i$ shortest paths passing through each $z_k$. This accordion-like structure recursively takes place within each $C_k$. Note that $s_0 = \hat{s}$ and $t_q = \hat{t}$.}}
  \label{fig:rematch}
\end{figure}

\section{Routing on the Augmented Hypergrid: Proof of~\Lem{lehmanron}}\label{sec:lehmanron}

In this section we prove the generalization of the routing theorem of Lehman-Ron~\cite{LR01} for good pairs $(S,T)$ in $\AH$. 
This proof is akin to the proof in~\cite{LR01}.

Suppose $(S,T)$ is a $\ell$-good consistent pair in $\AH$ with $|S| = |T| = m$. We show that there exists $m$ vertex disjoint paths from $S$ to $T$ in the $\ell$-cover $\AH^{(S,T)}$. Since $(S,T)$ are $\ell$-consistent, there is a bijection $\phi: S \to T$ 
with $d(s,\phi(s))=\ell$ (recall definitions \ref{def:consistent}, \ref{def:cover} and \ref{def:good}). The proof is by induction on $\ell$ and $m$. The base cases are trivial. If $m = 1$, then any path we choose from $S$ to $T$ suffices. If $\ell = 1$, then $\phi$ immediately gives a matching of edges in $\AH$ and this gives us our vertex disjoint paths.

Since $(S,T)$ is good, $\AH^{(S,T)}$ is a layered graph. Let $L_i$ be its $i$'th layer for $0\leq i\leq \ell$.
For a vertex $u \in \AH^{(S,T)}$, let $\delta^-(u),\delta^+(u)$ denote the in and out degree of $u$ in $\AH^{(S,T)}$. 

\begin{claim}\label{clm:degree}
    If $v$ is reachable from $u$ in $\AH^{(S,T)}$, then $\delta^+(u) \geq \delta^+(v)$ and $\delta^-(u) \leq \delta^-(v)$.
\end{claim}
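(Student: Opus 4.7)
The plan is to prove both inequalities by induction on the length of a directed $u$-to-$v$ path in $\AH^{(S,T)}$, which reduces everything to the base case of a single edge $u \to v$ of the cover graph. In that base case I will exhibit an injection from the out-neighbors of $v$ in $\AH^{(S,T)}$ into the out-neighbors of $u$ (and symmetrically from in-neighbors of $u$ into in-neighbors of $v$); composing these injections along the successive edges of the path then yields the full claim for arbitrary reachable pairs.

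The out-degree injection is the natural ``coordinate swap''. Given an out-neighbor $w$ of $v$, set $\iota(w) := u + (w - v)$, i.e., apply the coordinate step used by $v \to w$ at $u$ instead. When the edges $u \to v$ and $v \to w$ modify distinct dimensions, the two coordinate updates commute, so $\iota(w)$ is a valid vertex of $[n]^d$ with $u \to \iota(w)$ and $\iota(w) \to w$ both edges of $\AH$. When they modify the same dimension, a short case analysis (using that a witness $P_1 \in \calP^{(S,T)}$ forces $u \to v \to w$ to sit inside a shortest path, which rules out equal step sizes except in the degenerate case $\iota(w) = v$, where $v$ is already an out-neighbor of $u$) shows that $\iota(w)$ is still a legitimate vertex with the two required $\AH$-edges. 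Injectivity of $\iota$ is immediate since $\iota(w) + (v - u) = w$.

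The main obstacle is certifying that the new edge $u \to \iota(w)$ actually lies in the cover graph $\AH^{(S,T)}$, not merely in $\AH$; for this I must exhibit a member of $\calP^{(S,T)}$ through it. Take $P_0 \in \calP^{(S,T)}$ witnessing $u \to v$, with endpoints $s_0 \in S, t_0 \in T$, and $P_1 \in \calP^{(S,T)}$ witnessing $v \to w$, with endpoints $s_1 \in S, t_1 \in T$. Splicing the prefix of $P_0$ from $s_0$ to $u$, the two-edge detour $u \to \iota(w) \to w$, and the suffix of $P_1$ from $w$ to $t_1$ produces a walk $W$ of length exactly $\ell$ from $s_0$ to $t_1$ in $\AH$. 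To conclude $W \in \calP^{(S,T)}$ I need $W$ to be a shortest path, i.e., $d_\bA(s_0, t_1) = \ell$. One direction is easy: $s_0$ is already reachable to $t_1$ in $\AH^{(S,T)}$ through $v$, so by $\ell$-layering there is a length-$\ell$ walk in $\AH$, giving $d_\bA(s_0, t_1) \leq \ell$. The other direction $d_\bA(s_0, t_1) \geq \ell$ will exploit the rigidity of the layering: the layer assignments force $d_\bA(s_0, u) = i$ and $d_\bA(w, t_1) = \ell - i - 2$, and combined with $d_\bA(u, \iota(w)) = d_\bA(\iota(w), w) = 1$, any shorter $s_0$-$t_1$ path would descend below layer $i$ or skip past layer $i+2$, contradicting the $\ell$-good structure at $u$ or $w$.

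The in-degree inequality is entirely symmetric: for each in-neighbor $u_0$ of $u$, set $v_0 := v + (u_0 - u)$ and apply the analogous swap and splicing to a witness for $u_0 \to u$ together with $P_0$, placing $v_0 \to v$ inside the cover graph. Iterating the base-case injections along successive edges of a directed $u$-to-$v$ path completes the induction and yields both $\delta^+(u) \geq \delta^+(v)$ and $\delta^-(u) \leq \delta^-(v)$.
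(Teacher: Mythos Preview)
Your approach is essentially the paper's: the map $\iota(w)=u+(w-v)$ is exactly the paper's $u\oplus_i a$, and certifying the new edge by splicing a prefix of one witness path with a suffix of another is precisely how the paper places $(u,u\oplus_i a)$ on a length-$\ell$ shortest path from $S$ to $T$. The only structural difference is that the paper argues directly for arbitrary reachable $u,v$ (routing through the translated segment $u\oplus_i a \to v\oplus_i a$, which inherits the length of the cover-graph $u\to v$ segment), whereas you first induct down to the adjacent case; this is harmless but not needed.

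One caveat on your write-up. Your justification that $d_\bA(s_0,t_1)\geq\ell$ (``any shorter $s_0$-to-$t_1$ path would descend below layer $i$ or skip past layer $i+2$, contradicting the $\ell$-good structure at $u$ or $w$'') does not work as written. Layers live in the cover graph $\AH^{(S,T)}$, not in $\AH$; a hypothetical shorter $s_0$-to-$t_1$ path in $\AH$ need not pass through $u$, $w$, or indeed any cover-graph vertex besides the endpoints, so there is nothing for it to contradict ``at $u$ or $w$.'' The paper's own proof is equally terse at this point (it simply asserts that $d(s,t')<\ell$ would force a layer-skipping edge in $\AH^{(S,T)}$), so you are matching its level of detail; but your phrasing suggests a specific mechanism that is not actually available. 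A minor second point: in your same-dimension case you write that $P_1$ ``forces $u\to v\to w$ to sit inside a shortest path,'' but $P_1$ only contains the edge $v\to w$, not $u\to v$; the correct (and sufficient) observation is the one you also make, that equal step sizes give $\iota(w)=v$, which is already an out-neighbor of $u$ in the cover graph.
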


\begin{proof} 
	
	For $v \in [n]^d$, let $v \oplus_i a = (v_1, ..., v_{i-1}, v_i + a, v_{i+1}, ..., v_d)$ and $v \oplus_i -a = (v_1, ..., v_{i-1}, v_i - a, v_{i+1}, ..., v_d)$. For notational convenience in the following proof we let $V$ and $E$ stand for the vertex and edge sets of $\AH^{(S,T)}$.

To establish $\delta^+(u) \geq \delta^+(v)$ we show $(v, v \oplus_i a) \in E$ \textit{implies} $(u, u \oplus_i a) \in E$. Suppose $u \in L_i$ and $v \in L_j$ where $i < j$. Since, $u,v \in V$ we know there is some $s \in S$ and $t \in T$ such that $d(s,u) = i$, $d(u,v) = j-i$ and $d(v,t)=\ell-j$. Since $v \oplus_i a \in V$ there is a path from $v \oplus_i a$ to some $t' \in T$ of length $\ell - j - 1$. Clearly, $d(u \oplus_i a,v \oplus_i a) = j - i$ and $d(s,u \oplus_i a) = i + 1$. This gives a path from $s$ to $t'$ of length $\ell$. Finally, we cannot have $d(s,t') < \ell$ since this would contradict the fact that $\AH^{(S,T)}$ is a $\ell$-layered DAG. That is, this would imply there is an edge in $E$ joining a vertex in some $L_k$ to $L_{k'}$ where $k' > k - 1$. Thus, $(u, u \oplus_i a)$ lies on a shortest path of length $\ell$ from $S$ to $T$ and so $(u, u \oplus_i a) \in E$.

Similarly, $(u \oplus_i -a, u) \in E$ \textit{implies} $(v \oplus_i -a, v) \in E$ and so $\delta^-(u) \leq \delta^-(v)$. The proof is analogous to the previous paragraph and so is ommitted. \end{proof}
\noindent
We make use of \Clm{degree} to show that there must exist a layer of $\AH^{(S,T)}$ with size at least $m$. 

\begin{claim}\label{clm:routing_claim}
    $|L_1| \geq m$ \textit{or} $|L_{\ell-1}| \geq m$.
\end{claim}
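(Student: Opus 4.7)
Proof proposal. My plan is to argue by contradiction: suppose both $|L_1|<m$ and $|L_{\ell-1}|<m$ and derive an impossibility from \Clm{degree}.

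First, I would extract two global degree bounds from \Clm{degree}. Since every vertex $v$ of $\AH^{(S,T)}$ is reachable from some $s\in L_0$ and $\delta^+$ is non-increasing along directed paths, we have $\delta^+(v)\le \max_{s\in L_0}\delta^+(s)\le |L_1|$ for every $v$ (because each source's out-neighbors sit in $L_1$, so its out-degree is bounded by the layer size). Dually, $\delta^-(v)\le |L_{\ell-1}|$ for every $v$.

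Second, I would set up edge counts at both ends. Every $s\in L_0$ lies on a length-$\ell$ shortest path in the cover graph, hence $\delta^+(s)\ge 1$, so $e_0:=|E(L_0,L_1)|\ge m$; symmetrically, $e_{\ell-1}\ge m$. Combining $e_0=\sum_{v\in L_1}\delta^-(v)$ with the global bound $\delta^-(v)\le |L_{\ell-1}|$ yields $|L_1|\cdot |L_{\ell-1}|\ge m$ — a useful first inequality, but weaker than the claim (consistent with both quantities being $\sqrt{m}$).

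The core of the argument is then to amplify this by propagating in-degree lower bounds forward. Since $|L_1|<m$, averaging gives some $v^*\in L_1$ with $\delta^-(v^*)\ge\lceil m/|L_1|\rceil$, and by \Clm{degree} every descendant of $v^*$ inherits this lower bound on in-degree. Applied to every $v\in L_1$ and aggregated, this should show that every sink of $L_\ell$ has $\delta^-(t)\ge m/|L_1|$, yielding $e_{\ell-1}\ge m\cdot (m/|L_1|)$. Combined with the global ceiling $e_{\ell-1}\le |L_{\ell-1}|\cdot |L_1|$, this forces $|L_{\ell-1}|\ge m^2/|L_1|^2\ge m$ (after iterating the argument if $|L_1|$ is not small enough on one pass), contradicting our assumption.

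The main obstacle I foresee is making the "covering" step of the amplification rigorous: one must show that the union of descendant sets of high-$\delta^-$ vertices in $L_1$ really does reach essentially every sink of $L_\ell$, so that the sum-of-in-degrees at $L_\ell$ picks up the propagated lower bound. I expect this requires the finer \emph{directional} form of \Clm{degree} visible in its proof — namely, that the out-direction set of $v$ is contained in that of every ancestor of $v$, and dually for in-directions — which should let one inductively pick, for each sink $t\in L_\ell$, an ancestor in $L_1$ with a guaranteed in-degree lower bound, and thereby close the argument.
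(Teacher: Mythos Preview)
Your argument has a genuine gap at the amplification step. From averaging over $L_1$ you only obtain a \emph{single} vertex $v^*\in L_1$ with $\delta^-(v^*)\ge m/|L_1|$, and \Clm{degree} then lower-bounds $\delta^-$ only for descendants of $v^*$. You need the bound for \emph{every} $t\in L_\ell$, but a given sink $t$ may have as its $L_1$-ancestors only vertices of in-degree $1$; nothing in \Clm{degree} (even in its directional form) forces $t$ to inherit the average rather than the minimum in-degree over $L_1$. So the inequality $e_{\ell-1}\ge m\cdot(m/|L_1|)$ is unjustified, and the vague ``iterate if needed'' does not help: the first pass already yields only $|L_1|\cdot|L_{\ell-1}|\ge m$, and there is no mechanism that sharpens this on a second pass.

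What you are missing is the use of the \emph{induction hypothesis on $m$}, which is how the paper produces the needed bijection. Assuming $|L_{\ell-1}|=m'<m$, pick any $m'$ sources $S'\subset S$ and set $T'=\phi(S')$; then $(S',T')$ is still $\ell$-good, so by induction there are $m'$ vertex-disjoint $S'$--$T'$ paths, each of which must hit $L_{\ell-1}$, giving a bijection $\psi:S'\to L_{\ell-1}$. Now \Clm{degree} applies pairwise along these paths: $\delta^+(s)\ge\delta^+(\psi(s))$ for each $s\in S'$. Summing, and using $\delta^+(s)\ge 1$ for the remaining $m-m'>0$ sources, yields
\[
\delta^+(S)>\sum_{s\in S'}\delta^+(s)\ge\sum_{v\in L_{\ell-1}}\delta^+(v)=\delta^-(T).
\]
The symmetric argument from $|L_1|<m$ gives $\delta^-(T)>\delta^+(S)$, a contradiction. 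The induction is what replaces your unproven ``every sink sees a high-in-degree $L_1$ ancestor'' with an actual one-to-one pairing on which \Clm{degree} can be summed.
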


\begin{proof} Suppose $|L_{\ell-1}| = m' < m$. Pick \textit{any} $m'$ vertices in $S$ to get $S'$ and let $T' := \phi(S')$ be the image of $\phi$ on $S'$. Clearly, $(S',T')$ is $\ell$-consistent by virtue of the bijection $\phi$. Moreover, $\AH^{(S',T')}$ is a subgraph of $\AH^{(S,T)}$, which is an $\ell$-layered DAG since $(S,T)$ is $\ell$-good. Thus, $\AH^{(S',T')}$ must also be $\ell$-layered and so $(S',T')$ is $\ell$-good. Thus, by induction there exists $m'$ vertex disjoint paths from $S'$ to $T'$. This induces a bijection $\psi: S' \to L_{\ell-1}$ such that the path beginning at $s \in S'$ contains the vertex $\psi(s)$. We now have the following inequality: 

\begin{align}\label{routing_proof:ineq_1}
    \delta^+(S) = \sum_{s \in S} \delta^+(s) > \sum_{s \in S'} \delta^+(s) \geq \sum_{s \in S'} \delta^+(\psi(s)) = \sum_{v \in L_{\ell-1}} \delta^+(v) = \delta^-(T) \text{.}
\end{align}

The first inequality holds because every vertex in $S$ has positive out-degree (by \Def{consistent}). The second inequality holds by \Clm{degree}. The second to last equality holds since $\psi$ is a bijection. The final equality is because an edge $(u,v)$ satisfies $u \in L_{\ell-1}$ if and only if $v \in L_\ell = T$ since $\AH^{(S,T)}$ is $\ell$-layered. 


Now, suppose $|L_{1}| = m' < m$. In a similar fashion, pick \textit{any} $m'$ vertices in $S$ to get $S'$ and again let $T' := \phi(S')$. By induction there exists $m'$ vertex disjoint paths from $S'$ to $T'$ and this induces a bijection $\psi': L_{1} \to T'$. Through an analogous argument we get

\begin{align} \label{routing_proof:ineq_2}
    \delta^-(T) = \sum_{t \in T} \delta^-(t) > \sum_{t \in T'} \delta^-(t) \geq \sum_{t \in T'} \delta^-(\psi^{-1}(t)) = \sum_{u \in L_{1}} \delta^-(v) = \delta^+(S) \text{.}
\end{align}

Hence, $|L_{\ell-1}| < m$ implies $\delta^+(S) > \delta^-(T)$, while $|L_1| < m$ implies $\delta^+(S) < \delta^-(T)$ and so we have a contradiction when both are true. Therefore, either $|L_1| \geq m$ or $|L_{\ell-1}| \geq m$. \end{proof}

Now, let $\widehat{L}$ denote $L_i$ for which $|L_i| \geq m$ and $i \in \{1,m-1\}$, which exists due to \Clm{routing_claim}. Define the tripartite graph $G_{S,\widehat{L},T}$ where for $s \in S$, $v \in \widehat{L}$, $t \in T$, $(s,v)$ is an edge if $v$ is reachable from $s$ in $\AH^{(S,T)}$ and $(v,t)$ is an edge if $t$ is reachable from $v$ in $\AH^{(S,T)}$. 

Suppose for a moment that there are $m$ vertex disjoint paths in $G_{S,\widehat{L},T}$ from $S$ to $T$. This induces a 3-dimensional matching $R = \{(s,v,t) : s \in S, v \in V, t \in T\}$ of size $m$ such that $(s,v,t) \in R$ means that $s$ is routed to $t$ by a path which contains $v$. Furthermore, $V \subseteq \widehat{L}$ with $|S| = |V| = |T| = m$. Now define $\phi_1:S \to V$ and $\phi_2:V \to T$ as $\phi_1(s) = v$, $\phi_2(v) = t$ when $(s,v,t) \in R$. Observe that the existence of $\phi_1$ and $\phi_2$ shows that $(S,V)$ and $(V,T)$ are $i$-consistent and $(\ell-i)$-consistent pairs, respectively. Moreoever, notice that any shortest path of length $i$ from $S$ to $V$ in $\AH^{(S,T)}$ is a subpath of some shortest path of length $\ell$ from $S$ to $T$ in $\AH^{(S,T)}$. Thus, $\AH^{(S,V)}$ is a subgraph of $\AH^{(S,T)}$. Additionally, it's easy to see that the $j$'th layer of $\AH^{(S,V)}$ is a subset of the $j$'th layer of $\AH^{(S,T)}$ for any $j \in [i]$. Thus, $(S,V)$ is $i$-good in $\AH$ and an analogous argument shows $(V,T)$ is $(\ell-i)$-good in $\AH$.

By induction, there are $m$ vertex disjoint paths from $S$ to $V$ and $m$ vertex disjoint paths from $V$ to $T$. Stitching these paths together yields $m$ vertex disjoint paths from $S$ to $T$. \\

Now, we prove there must be $m$ vertex disjoint paths in $G_{S,\widehat{L},T}$. Suppose, for the sake of contradiction, this is not true. Then, by Menger's theorem there exists a cut $C$ separating $S$ from $T$ in $G_{S,\widehat{L},T}$ with $|C| < m$. \\

\noindent - \textit{Case (1):} $C \cap S = C \cap T = \emptyset$. So $C \subseteq \widehat{L}$. Recall that for any $v \in \widehat{L}$ there exists $s \in S, t \in T$ such that $(s,v)$ and $(v,t)$ are edges in $G_{S,\widehat{L},T}$ (since $\widehat{L}$ is simply a set of vertices in the cover graph $G^{(S,T)}$ and $\widehat{L} \cap S = \widehat{L} \cap T = \emptyset$). Thus if $C \neq \widehat{L}$, then $C$ would not be a cut. Thus, $C = \widehat{L}$, but $|\widehat{L}| \geq m$ and we have a contradiction. \\

\noindent - \textit{Case (2):} $C \cap S \neq \emptyset$ and/or $C \cap T \neq \emptyset$. Let $S' \subset S$ be the set of vertices not in $C$ which are \textit{not} mapped by $\phi$ to vertices in $C$ and let $T' = \phi(S')$. Note that $S' \cap C = T' \cap C = \emptyset$. Let $\widehat{L}' \subseteq \widehat{L}$ denote the set of vertices on any path from $S'$ to $T'$. We have $|S'| \geq m - (|C \cap S| + |C \cap T|) > |C| - (|C \cap S| + |C \cap T|) = |C \cap \widehat{L}|$. By induction (on $m$) there are $|S'|$ vertex disjoint paths from $S'$ to $T'$ in $G'_{S',\widehat{L}',T'}$ (defined analogously to $G_{S,\widehat{L},T}$) and so $|\widehat{L}'| \geq |S'|$. Thus, $|\widehat{L}'| > |C \cap \widehat{L}|$ and so there exists a vertex in $\widehat{L}'$ that is not in $C$, contradicting the assumption that $C$ separates $S$ from $T$.\\

\noindent
This completes the proof of~\Lem{lehmanron}.

\section{Analysis of the Tester}\label{sec:tester_analysis}

Given our Margulis-type theorem (\Thm{main-margulis}), one can obtain a $o(d) \cdot \polylog~n$-tester by modifying either Chakrabarty-Seshadhri~\cite{ChSe13-j}, Chen et al~\cite{ChenST14}, or Khot-Minzer-Safra~\cite{KMS15}. 
The last, in our opinion, provides the cleanest analysis.
We borrow their techniques to analyze our tester. 

Recall that for each dimension $i \in [d]$, our tester chooses some edge matching $H_i := H^{c_i}_{i,a_i}$ from the set $\mathbf{H}$. 
We remind the reader that $S_f$ is the set of edges $(x,y)$ in $\AH$ where $f(x) \neq f(y)$. Moreover, $S_f^+ = \{(x,y) \in S_f : f(x) = 0, f(y) = 1\}$ and $S_f^- = \{(x,y) \in S_f : f(x) = 1, f(y) = 0\}$. The total, positive and negative influences are denoted respectively by $I_f := |S_f|/n^d$, $I^+_f := |S^+_f|/n^d$ and $I^-_f := |S^-_f|/n^d$.

We will need the following lemma that bounds the total influence in $\AH$, when
the negative influence is not too large.
The corresponding theorem for Boolean hypercubes (Theorem 9.1 in \cite{KMS15}) is easy; we need to work a bit harder for $\AH$. 
We defer the proof to \Sec{influence}.

\begin{lemma} \label{lem:influence} 
    If $I_f^- < \sqrt{d}$, then $I_f < 7\sqrt{d}\log n$. 
\end{lemma}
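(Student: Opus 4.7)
My plan is to adapt the Fourier / Cauchy--Schwarz argument that~\cite{KMS15} use for the Boolean hypercube (their Theorem~9.1) to the augmented hypergrid. Writing $I_f = 2I_f^- + (I_f^+ - I_f^-)$ and using the hypothesis $I_f^- < \sqrt d$, it suffices to bound the signed influence $I_f^+ - I_f^-$ by $O(\sqrt{d}\log n)$.

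The first step is to combine the two matchings $H^0_{i,a}$ and $H^1_{i,a}$ at each $(\text{direction},\text{level})$ pair. Fix $i \in [d]$ and $a \in [\log n]$: their union is exactly the set of pairs $(x, x+2^a e_i)$ whose endpoints both lie in $[n]^d$, and by a simple re-indexing (substituting $y = x+2^a e_i$ in the shifted sum and canceling common terms) one obtains
\[
\bigl(I^{+,H^0_{i,a}}_f + I^{+,H^1_{i,a}}_f\bigr) - \bigl(I^{-,H^0_{i,a}}_f + I^{-,H^1_{i,a}}_f\bigr) \;=\; \frac{1}{n^d}\!\!\sum_{x\,:\,x_i \le n-1-2^a}\!\!\bigl(f(x+2^a e_i)-f(x)\bigr) \;=\; \la f,\psi_{i,a}\ra,
\]
where $\psi_{i,a}(x) := \mathbf{1}[x_i \ge n-2^a] - \mathbf{1}[x_i < 2^a]$ depends only on $x_i$ and takes values in $\{-1,0,1\}$, and $\la\cdot,\cdot\ra$ denotes the expectation inner product under the uniform measure on $[n]^d$.

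Next, set $\Psi := \sum_{i \in [d],\,a \in [\log n]}\psi_{i,a}$ and sum the identity above over all $(i,a)$: by Cauchy--Schwarz, $I_f^+ - I_f^- = \la f,\Psi\ra \le \norm{f}_2\norm{\Psi}_2 \le \norm{\Psi}_2$, using $\norm{f}_2 \le 1$ for Boolean $f$. To bound $\norm{\Psi}_2^2 = \sum_{i,j,a,b}\la\psi_{i,a},\psi_{j,b}\ra$: the coordinates are independent under the uniform measure and each $\psi_{i,a}$ has mean zero (its $+1$ and $-1$ supports each have size $2^a$), so all cross terms with $i \ne j$ vanish. For $i = j$ and $a \le b$, the support of $\psi_{i,a}$ nests inside that of $\psi_{i,b}$ with matching signs, giving $\la\psi_{i,a},\psi_{i,b}\ra = \EX[\psi_{i,a}^2] = 2^{a+1}/n$. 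A short geometric-series evaluation of $\sum_{a,b\in[\log n]} 2^{\min(a,b)+1}/n$ bounds the per-dimension sum by a constant ($\le 6$). Therefore $\norm{\Psi}_2^2 = O(d)$, whence $I_f^+ - I_f^- = O(\sqrt d)$ and $I_f \le 2I_f^- + O(\sqrt d) < 7\sqrt d\log n$ under the hypothesis.

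The main obstacle, relative to the hypercube version, is that the step-functions $\psi_{i,a}$ at different levels within the same dimension are \emph{not} orthogonal -- their nested supports produce nonzero cross inner-products. What saves the argument is the clean formula $\la\psi_{i,a},\psi_{i,b}\ra = 2^{\min(a,b)+1}/n$, whose double sum over $a,b \in [\log n]$ telescopes via a geometric series to an $O(1)$ per-dimension contribution rather than the $O(\log n)$ one might naively fear. In fact this argument yields the sharper bound $I_f = O(\sqrt d)$; the weaker $O(\sqrt d\log n)$ stated in the lemma already suffices for the tester analysis in \Sec{tester_analysis}.
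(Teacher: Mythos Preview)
Your proof is correct and takes a genuinely different (and cleaner) route than the paper's. The paper works dimension-by-dimension on lines: it singles out a \emph{single} Walsh coefficient per dimension, namely $\widehat{f}(\be_{i,\log(n/2)})$ associated with the longest matching, and proves the one-dimensional inequality $\Delta I_f \leq \log n \cdot (4I_f^- - \widehat{f}(\be_1))$ via a sorting argument (compare $f$ to its sorted version $S(f)$, analyze the monotone case directly, and control the change in the Walsh coefficient by the distance to monotonicity). Averaging over lines, summing over $i$, and applying Cauchy--Schwarz together with Parseval ($\sum_i \widehat{f}(\be_i)^2 \leq 1$) then yields the stated bound with the $\log n$ factor unavoidable in their setup. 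By contrast, you use \emph{all} scales $a$ simultaneously: the telescoping identity $\sum_c (I^{+,H^c_{i,a}}_f - I^{-,H^c_{i,a}}_f) = \langle f,\psi_{i,a}\rangle$ lets you write the full signed influence as a single inner product $\langle f,\Psi\rangle$ with an explicit test function, and a direct $\ell^2$ computation (exploiting the nested supports of the $\psi_{i,a}$ within a dimension, and their mean-zero independence across dimensions) gives $\|\Psi\|_2^2 \leq 6d$. This avoids both the Fourier machinery and the sorting lemma, and---as you note---actually yields the sharper bound $I_f^+ - I_f^- = O(\sqrt d)$, improving the paper's $O(\sqrt d \log n)$. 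One cosmetic point: your indicators in $\psi_{i,a}$ assume the coordinates are indexed $\{0,\ldots,n-1\}$; with the paper's convention $[n]=\{1,\ldots,n\}$ the thresholds shift by one, but the mean-zero and nesting properties (and hence the whole argument) are unchanged.
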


\noindent We first make some observations before taking on the main analysis.

\begin{observation} [Edge Tester] \label{obs:edge_tester} 
    With probability $\Theta(\frac{1}{\log d})$ our tester chooses $\tau = 1$. Thus our tester is the edge tester on $\AH$ described by \cite{ChSe13} with probability $\Theta(\frac{1}{\log d})$. 
\end{observation}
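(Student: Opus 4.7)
The observation splits into two claims that I would handle in sequence: first, that $\Pr[\tau = 1] = \Theta(1/\log d)$, and second, that conditioned on $\tau = 1$ the algorithm of \Fig{alg} reduces to the edge tester on $\AH$ described in~\cite{ChSe13}.

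For the probability computation, I would start from the defining inequality of $p$ in the tester: by maximality, $2^p \leq \sqrt{d/(10\log d)} < 2^{p+1}$, so $p = \lfloor \tfrac{1}{2}\log_2(d/(10\log d)) \rfloor$, and hence $p+1 = \Theta(\log d)$ for all sufficiently large $d$. Since $t$ is drawn uniformly from $\{0, 1, \ldots, p\}$ and $\tau = 2^t$, the event $\tau = 1$ is the same as the event $t = 0$, whose probability is $1/(p+1) = \Theta(1/\log d)$.

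For the second claim I would trace through the tester under the hypothesis $\tau = 1$. The earlier steps draw a uniform point $x \in [n]^d$, a uniformly random matching $H_i \in \mathbf{H}$ for each coordinate $i$, and collect in $S$ the coordinates where $x$ is a lower endpoint of $H_i$. In the step that produces $y$, since $|T| = \tau = 1$, the algorithm either sets $y = x$ (which happens exactly when $S = \emptyset$, and trivially does not produce a violation) or picks $i \in S$ uniformly at random and sets $y_i := x_i + 2^{a_i}$, $y_j := x_j$ for $j \neq i$. The resulting pair $(x,y)$ is thus a single edge of the matching $H_{i,a_i}^{c_i}$, i.e., an edge of $\AH$ incident to $x$, and the final step rejects iff $(x,y) \in S_f^-$. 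Hence, conditioned on $\tau = 1$, the tester samples one edge of $\AH$ from some distribution and rejects iff it is a violation, which is exactly the behavior of the edge tester in~\cite{ChSe13}.

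The one point requiring a sanity check --- and the only nontrivial step I foresee --- is that the induced distribution on sampled edges (uniform $x$, uniform matching per coordinate, uniform $i$ from the random set $S$) agrees with the edge-sampling distribution used in~\cite{ChSe13} up to a constant factor. I would dispatch this by a short concentration argument: for a fixed $x$, each coordinate is independently a lower endpoint of its random matching $H_i$ with probability close to $1/2$ (the $H_{i,a}^0$ matchings are perfect and the $H_{i,a}^1$ matchings cover a near-$1/2$ fraction of endpoints), so $|S|$ concentrates around $d/2$ with high probability. Consequently, any fixed edge of $\AH$ incident to $x$ is selected with probability within a constant factor of $2/d$, matching what the analysis of~\cite{ChSe13} requires and validating the reduction.
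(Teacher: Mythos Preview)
Your proposal is correct, and in fact more detailed than the paper itself: the paper states this as a bare observation with no accompanying argument. Your computation of $\Pr[\tau=1]=1/(p+1)=\Theta(1/\log d)$ is exactly what is intended, and your tracing of the $\tau=1$ case down to a single edge of $\AH$ is accurate.

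Your final ``sanity check'' is a genuine point that the paper glosses over. Conditioned on $\tau=1$, the distribution on the sampled edge is \emph{not} literally the uniform edge distribution of the tester in~\cite{ChSe13}: the probability of picking a particular edge incident to $x$ is $\tfrac{1}{2\log n}\cdot \EX[1/|S|]$ rather than exactly $1/\deg(x)$. Your concentration argument for $|S|=\Theta(d)$ is the right fix and suffices for all downstream uses (namely \Obs{inf} and the proof of \Thm{main-theorem}), which only need that violating edges are sampled with probability $\Omega(I_f^-/(d\log n))$. So the paper's phrasing ``is the edge tester'' should be read as ``is, up to constant factors in the edge-sampling probabilities, the edge tester,'' and your argument supplies precisely that.
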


\begin{observation} [Total Influence Bound] \label{obs:inf} 
    If $|I_f^-| \geq \sqrt{d}$ then the edge tester detects a violation with probability $\Omega\Big(\frac{1}{\sqrt{d}\log n}\Big)$  since $\AH$ contains $\Theta(n^d \cdot d \log n)$ edges. This, combined with the previous observation proves \Thm{main-theorem} for the case $|I_f^-| \geq \sqrt{d}$. Thus, we assume $I_f^- < \sqrt{d}$. By \Lem{influence},
$I_f < 7\sqrt{d}\log n$. 
\end{observation}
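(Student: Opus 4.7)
The plan is three-fold: (i) compute $|E(\AH)|$ up to constants, (ii) lower bound the probability that the tester at $\tau = 1$ samples any fixed edge of $\AH$, and (iii) combine with $I_f^- \geq \sqrt{d}$ and \Obs{edge_tester} to obtain \Thm{main-theorem} in this regime and justify the reduction to $I_f^- < \sqrt{d}$ via \Lem{influence}.

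For the edge count of $\AH$: in each of the $d$ dimensions and for each step length $2^a$ with $0 \leq a \leq \log_2 n$, every axis-parallel line in that direction contains exactly $n - 2^a$ edges of that length; summing $\sum_{a=0}^{\log n}(n - 2^a) = \Theta(n \log n)$ edges per line and multiplying by the $n^{d-1}$ lines per direction and the $d$ directions yields $|E(\AH)| = \Theta(d \cdot n^d \log n)$.

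For the per-edge sampling probability under $\tau = 1$: fix any violating edge $(x, y) \in S^-_f$ with $y = x + 2^a e_i$. The tester of \Fig{alg} picks $x$ with probability $1/n^d$; selects in dimension $i$ the unique matching $H_i = H^{c_i}_{i, a} \in \mathbf{H}$ containing $(x, y)$ (among $2 \log n$ possible matchings for that dimension) with probability $1/(2\log n)$; and since $i \in S$ then holds deterministically, it picks $i$ as the single element of $T$ with probability $1/|S| \geq 1/d$. Summing over the $|S^-_f| = I^-_f \cdot n^d \geq \sqrt{d}\cdot n^d$ violating edges gives
\[
\Pr[\text{reject} \mid \tau = 1] \;\geq\; \frac{|S^-_f|}{2 d \cdot n^d \log n} \;\geq\; \frac{\sqrt{d}}{2 d \log n} \;=\; \Omega\!\left(\tfrac{1}{\sqrt{d}\log n}\right),
\]
the claimed edge-tester bound.

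Chaining with the $\Theta(1/\log d)$ bound on $\Pr[\tau = 1]$ from \Obs{edge_tester}, the full tester rejects with probability $\Omega(1/(\sqrt{d}\log d \log n))$, which exceeds the target rate $\Omega(\eps^{4/3}/(d^{5/6}\log^{3/2}d\log^{4/3}n))$ of \Thm{main-theorem} for every $\eps \in (0,1]$ and $d$ large. Hence \Thm{main-theorem} is established in the regime $I^-_f \geq \sqrt{d}$, so we may assume $I^-_f < \sqrt{d}$ throughout the rest of the paper; under this hypothesis \Lem{influence} directly supplies $I_f < 7\sqrt{d}\log n$. The only subtle point is that the tester of \Fig{alg} does not sample edges of $\AH$ uniformly, so one cannot simply invoke $|S^-_f|/|E(\AH)|$; however the crude bound $1/|S|\geq 1/d$ is enough to absorb the non-uniformity, and the substantive work has been outsourced to \Lem{influence} and the isoperimetric statements in \Sec{isoperimetry}.
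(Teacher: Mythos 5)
Your proof is correct and follows essentially the same reasoning the paper leaves implicit: you count $|E(\AH)| = \Theta(d\,n^d\log n)$, lower-bound the per-edge sampling probability at $\tau = 1$ by $1/(2d\,n^d\log n)$ via the bound $1/|S|\ge 1/d$, sum over the $|S^-_f|\ge \sqrt{d}\,n^d$ violating edges (a valid disjoint union), and then combine with the $\Theta(1/\log d)$ chance that $\tau=1$ to beat the target rate of \Thm{main-theorem} whenever $\eps\le 1$. Your explicit remark that the tester is not a uniform edge sampler but that the crude $1/d$ bound absorbs the skew is a useful clarification of what the paper's terse appeal to the edge count and \Obs{edge\_tester} is really using.
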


\smallskip

We use the following definition of $\tau$-persistence that is nearly identical to the definition given by \cite{KMS15}.

\begin{definition}\label{def:persistence} Fix $\tau$ and $x \in \AH$.
Consider the tester selecting the parameter $\tau$ in~\Step{tau} of \Fig{alg}.
We define $x$ to be $\tau$-\textit{persistent} if $Pr_y[f(x) \neq f(y)] \leq \frac{1}{10} \text{.}$
(The probability is over the distribution of $y$ as defined in~\Step{y} of \Fig{alg}.)
\end{definition}

\noindent
We will need the following lemma, which is obtained from the proof of Lemma 9.3 in~\cite{KMS15}.

\begin{lemma} [Paraphrased from the proof of Lemma 9.3 in~\cite{KMS15}] \label{lem:kms} Consider a Boolean function $g:\{0,1\}^r \rightarrow \{0,1\}$.
Let $\tau \in [1,\sqrt{r/\log r}]$. For any sufficiently large $B$, there exists 
constant $\beta > 0$ such that the following holds. Pick a uniform random vertex $x$
with Hamming weight in $[r/2 - B \sqrt{r\log r}, r/2 + B \sqrt{r\log r}]$.
Let $y$ be a random vertex obtained by flipping $\tau$ random zeros of $x$ to $1$. Then 
$\Pr_{x,y}[g(x) \neq g(y)] \leq \beta \tau I_g/r$. \smallskip

In particular, suppose $B$ is so chosen so that the probability a uniform at random $x$ has Hamming weight in $[r/2 - B \sqrt{r\log r}, r/2 + B \sqrt{r\log r}]$ is at least $1-1/r^{10}$. Given a uniform at random $x$, if $x$ has less than $\tau$ zeros, define $y=x$. Otherwise, obtain $y$ as before. Then $\Pr_{x,y}[g(x) \neq g(y)] \leq \beta \tau I_g/r + \frac{1}{r^{10}} $
\end{lemma}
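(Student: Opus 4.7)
The plan is to couple $(x, y)$ via a random monotone path $x = x_0 \to x_1 \to \cdots \to x_\tau = y$ in $\{0,1\}^r$, where each step flips a uniformly random zero coordinate of the current vertex to a one. This coupling realizes exactly the joint distribution of $(x, y)$ described in the lemma, and if $g(x_0) \neq g(x_\tau)$ then at least one step $(x_{i-1}, x_i)$ must traverse a bichromatic edge. By a union bound over the $\tau$ steps,
\[
\Pr_{x,y}[g(x) \neq g(y)] \;\leq\; \sum_{i=1}^{\tau} \Pr[g(x_{i-1}) \neq g(x_i)],
\]
so it suffices to bound each summand by $O(I_g/r)$.

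The crux will be showing that the marginal distribution $D_i$ of $x_i$ is pointwise at most $C/2^r$ for some absolute constant $C$, on its support (the shifted band $B_i := [r/2 - B\sqrt{r\log r} + i,\ r/2 + B\sqrt{r\log r} + i]$). For a target $v$, I would count the starting points $u \leq v$ with $|u|_1 = |v|_1 - i$ (there are $\binom{|v|_1}{i}$ of them) and observe that the $i!$ orderings of the coordinates flipped between $u$ and $v$ each have probability $\prod_{j=0}^{i-1}(r - |u|_1 - j)^{-1}$. Combined with $D_0(u) \leq 2 \cdot 2^{-r}$ (which follows from the choice of $B$), this telescopes to
\[
D_i(v) \;\leq\; \frac{2}{2^r} \prod_{j=0}^{i-1} \frac{|v|_1 - j}{r - |v|_1 + j + 1}.
\]
For $|v|_1 \in B_i$ and $i \leq \tau \leq \sqrt{r/\log r}$, each ratio equals $1 + O(\sqrt{\log r/r})$, so the whole product is $\exp\!\left(O(\tau\sqrt{\log r/r})\right) = O(1)$, as required.

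With the density bound in hand, let $S_g$ denote the set of bichromatic edges of $g$ oriented from the lower endpoint. The per-step probability expands as $\Pr[g(x_{i-1}) \neq g(x_i)] = \sum_{(u,v) \in S_g,\ u \in B_{i-1}} D_{i-1}(u)/(r - |u|_1)$, which using $r - |u|_1 = \Theta(r)$ in the band and $|S_g| = \Theta(I_g \cdot 2^r)$ is at most $O(I_g/r)$. Summing over $i = 1, \ldots, \tau$ yields the main bound $\Pr[g(x) \neq g(y)] \leq \beta \tau I_g/r$. The "in particular" statement then follows because, when $x$ is sampled uniformly on $\{0,1\}^r$, the event $\{x \in B_0\}$ has probability at least $1 - 1/r^{10}$ by the choice of $B$, and any $x \in B_0$ automatically has at least $\tau$ zeros since $\tau \leq \sqrt{r/\log r}$ is much smaller than $r/2 - B\sqrt{r\log r}$; conditioning on this event reduces to the first part of the lemma, while the complementary event contributes at most $1/r^{10}$ to the failure probability.

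The main obstacle is the density bound on $D_i$; everything else is essentially a union bound followed by a two-line estimate. The bound succeeds thanks to the careful calibration of the parameter regime: the band width $O(\sqrt{r\log r})$ keeps every $|v|_1$ within $o(r)$ of $r/2$, and the step budget $\tau \leq \sqrt{r/\log r}$ is precisely what forces the telescoping product $\prod_{j<i} (|v|_1-j)/(r-|v|_1+j+1)$ to stay bounded by a constant. This pairing of scales is exactly what is needed, and explains why the hypothesis couples the two in this way.
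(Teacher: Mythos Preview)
The paper does not supply its own proof of this lemma; it is stated as being ``obtained from the proof of Lemma 9.3 in~\cite{KMS15}'' and used as a black box. Your argument is correct and is essentially the KMS approach: couple $(x,y)$ by a monotone random walk that flips one zero at a time, union-bound over the $\tau$ steps, and control each step via a pointwise density bound on the intermediate marginals $D_i$. The density computation you give is accurate, and your observation that the band width $O(\sqrt{r\log r})$ and the step budget $\tau\le\sqrt{r/\log r}$ are exactly calibrated so that $\prod_{j<i}(|v|_1-j)/(r-|v|_1+j+1)=\exp(O(1))$ is precisely the point of the parameter choice. Since the paper itself defers the proof to~\cite{KMS15}, there is nothing further to compare.
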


We prove our main lemma for hypergrids, by applying \Lem{kms}
to randomly chosen hypercubes in the hypergrid.


\begin{lemma}\label{lem:persistence} Let $\tau \in [1,\sqrt{d/10\log d}]$ be any integer. The fraction of $\tau$-non-persistent vertices in $\AH$ is at most 
$O\big(\tau \cdot \frac{I_f}{d \log n} + d^{-9}\big)$.
\end{lemma}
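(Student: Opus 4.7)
The plan is to reduce to \Lem{kms} by observing that once the tester's random choices $(a_i, c_i)_{i=1}^d$ are fixed, the matchings $\{H_i\}$ partition $[n]^d$ into sub-hypercubes inside which the tester's walk is exactly the KMS walk. Call dimension $i$ \emph{active at $x$} if $x_i$ lies in a matched pair of $H^{c_i}_{i,a_i}$; vertices agreeing on inactive coordinates and whose active coordinates lie in the same matched pair form a sub-hypercube $C$ of dimension $r_C = |A(x)|$, and $f|_C$ is a Boolean function $g_C : \{0,1\}^{r_C} \to \{0,1\}$. Conditional on $C$, $x$ is uniform on $\{0,1\}^{r_C}$, the set $S$ of \Step{S} in \Fig{alg} is the set of zero-coordinates of $x$ in $C$, and the walk of \Step{y} is exactly the KMS walk: flip $\tau$ uniformly random zeros of $x$ (or leave $y = x$ if there are fewer than $\tau$ zeros).

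Because $H^0_{i,a_i}$ is a perfect matching for every $a_i$, and a direct calculation bounds the matched fraction of $H^1_{i,a_i}$ averaged over $a_i$ by a constant bounded away from zero, the expected active probability per dimension is at least a constant. By independence across dimensions and Chernoff's inequality, $r_C \geq d/4$ except with probability $e^{-\Omega(d)}$ over $(x,(a_i,c_i))$, and on this event the hypothesis $\tau \leq \sqrt{r_C/\log r_C}$ of \Lem{kms} is met since $\tau \leq \sqrt{d/(10\log d)}$.

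Fix $(a_i, c_i)$ and decompose
\[
\Pr_{x,y}\!\bigl[f(x) \neq f(y) \,\big|\, (a_i, c_i)\bigr] \;=\; \sum_C \frac{|C|}{n^d}\, \Pr_{x \in C,\, y}[g_C(x) \neq g_C(y)],
\]
bounding the $r_C < d/4$ sub-hypercubes trivially by $1$ and the rest by \Lem{kms}. The key identity is $|C|\, I_{g_C} = 2 B_C$, where $B_C$ counts bichromatic edges of $\bigcup_i H_i$ lying inside $C$; since every edge of any $H_i$ is contained in a unique sub-hypercube, summing and then taking expectation over $(a_i,c_i)$ gives
\[
\Exp_{(a_i,c_i)} \sum_C \frac{|C|}{n^d}\, I_{g_C} \;=\; \frac{2}{n^d}\cdot \frac{|S_f|}{2\log n} \;=\; \frac{I_f}{\log n},
\]
because each bichromatic edge of $\AH$ lies in exactly one of the $2\log n$ matchings $\{H^c_{i,a}\}_{a,c}$ in its dimension. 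Altogether $\Pr_{x,y}[f(x) \neq f(y)] = O(\tau I_f/(d\log n) + d^{-10})$, and applying Markov's inequality to the random variable $\Pr_y[f(x) \neq f(y)]$ over uniform $x$ yields the stated $O(\tau I_f/(d\log n) + d^{-9})$ bound on the fraction of $\tau$-non-persistent vertices.

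The main obstacle is setting up the sub-hypercube decomposition cleanly so that the tester's walk lands inside a single sub-hypercube and exactly matches the KMS walk there. Once this is established, verifying the influence identity amounts to attributing each matching edge to its unique sub-hypercube so that the sum telescopes to the total bichromatic count of $\bigcup_i H_i$, and a second-moment-free averaging over $(a_i,c_i)$ then produces the clean $I_f/\log n$ factor.
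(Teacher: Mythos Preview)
Your proposal is correct and follows essentially the same route as the paper's proof: fix the matchings $\cH$, observe that $\bigcup_i H_i$ partitions $[n]^d$ into sub-hypercubes in which the tester's walk is exactly the KMS walk, apply \Lem{kms} on cubes of dimension $\geq d/4$, handle the small cubes by a Chernoff bound (the paper does this over the $c_i$'s alone rather than over $(x,\cH)$, but your version is equally valid), and then average the restricted influences over the $2\log n$ matchings per dimension to obtain the $I_f/\Theta(\log n)$ factor before Markov.
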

\begin{remark}\emph{
		Readers familiar with~\cite{KMS15} will note that the extra $d^{-9}$ term is absent in Lemma 9.3 in their paper. 
		This is because the authors implicitly assume that the point $x$ lies in the ``middle layers'' of the hypercube. 
		\Lem{kms} makes it explicit which leads to the correction factor in \Lem{persistence} above. 
		Indeed this correction is needed even in~\cite{KMS15} -- consider the function $f$ over the Boolean hypercube which is $1$ at the all-ones vector and $0$ everywhere else. The total influence $I_f = d/2^d$. The vertices having exactly $\tau$ zeros are non-persistent and so
		the fraction of $\tau$-non-persistent vertices is ${d\choose \tau} / 2^d \approx d^\tau/2^d$, while $\tau I_f/d = \tau/2^d$.
		However, if $f$ is $\eps$-far from monotone, then $I_f \geq \eps$ and thus the first term subsumes the correction factor.
		We make it explicit since $n$ could in general be $\gg d$.
	}
\end{remark}

\begin{proof}[{\bf Proof of \Lem{persistence}}] 
Let $\cU$ denote the uniform distribution over $[n]^d$. We will show
\begin{equation}
\label{eq:bonda}
\EX_{x \sim \cU} \big[\Pr_y [f(x) \neq f(y)]\big] \leq O\left(\tau \cdot \frac{I_f}{ d \log n}\right) + O(d^{-9})
\end{equation}
where the probability over $y$ is the distribution obtained by \Step{y} of \Fig{alg}.
This will prove the lemma since $\Pr_y [f(x) \neq f(y)] > 1/10$ for $\tau$-non-persistent vertices. \smallskip

Observe that $y$ is chosen by first choosing the random collection $\cH$ of matchings in \Step{H}, and then (possibly)
choosing $T$ in \Step{y}. For a fixed $x$, let $\chi(x,\cH,T)$ denote the indicator for the event $f(x) \neq f(y)$ given $\cH$ and $T$. 
Therefore, we get 
\begin{equation}\label{eq:laddoo}
\EX_{x \sim \cU}\Pr_y[f(x)\neq f(y)] = \EX_{x \sim \cU}\EX_{\cH} \EX_{T} [\chi(x,\cH,T)]
\end{equation}
\noindent
Given $\cH$, consider the DAG on $[n]^d$ obtained by adding all directed edges $(u,v) \in H_i$ for all $H_i \in \cH$. Observe that this partitions the $[n]^d$ into connected components, each of which is a hypercube. In particular, if $x\in [n]^d$ participates in $k$ of the matchings in $\cH$, then the connected component containing $x$ is a $k$-dimensional hypercube. To see this, observe that if $x$ participates in $H_1$ and $H_2$, then if $y_1$ and $y_2$ are the points $x$ is matched to in $H_1$ and $H_2$ respectively, then $y_1$ also participates in $H_2$ and $y_2$ also participates in $H_1$, and indeed are matched to the same point.

Let $\cD_{\cH}$ be the distribution on these hypercubes, where each cube is chosen with probability proportional to the number of vertices it contains.
For any such cube $C$, let $\cU_C$ denote the uniform distribution over all vertices in $C$.
Note that $x\sim \cU$ can be obtained by first sampling $C\sim \cD_{\cH}$ and then sampling $x\sim \cU_C$. Therefore, we get that the RHS of \eqref{eq:laddoo} is
\begin{eqnarray}
    \EX_{x \sim \cU} \EX_{\cH} \EX_{T} [\chi(x,\cH, T)] = \EX_{\cH} \EX_x \EX_{T} [\chi(x, \cH, T)]
    = \EX_{\cH} \EX_{C \sim \cD_{\cH}} \EX_{x \sim \cU_C} \EX_T [\chi(x,\cH,T)]  \label{eq:x-switch}
\end{eqnarray}
Let us now analyze $\EX_{x \sim \cU_C} \EX_T [\chi(x,\cH,T)]$. 
Let $f_{|C}$ be the function restricted to the points in the hypercube $C$. Note that $\chi(x,\cH,T)=1$ if and only if $f_{|C}(x) \neq f_{|C}(y)$ where $y = x$ if $x$ has less than $\tau$ zeros, and otherwise is obtained 
by flipping $\tau$ random zeros of $x$ in $C$ to $1$.
This is exactly the random process described in \Lem{kms} above. Therefore, we get 
\[
\EX_{x \sim \cU_C} \EX_T [\chi(x,\cH,T)] \leq O\left(\tau \cdot \frac{I_{f_{|C}}}{\dim(C)}\right) + \dim(C)^{-1/10}
\]
where $\dim(C)$ is the dimension of the cube $C$. We now break into two cases: if $\dim(C) < d/4$, we use the trivial upper bound of $1$, otherwise we use the inequality above. Using $\psi(C)$ to be the indicator that $\dim(C) < d/4$, plugging into \eqref{eq:x-switch} we get
\begin{eqnarray}
    \EX_{x \sim \cU} \EX_{\cH} \EX_{T} [\chi(x,\cH, T)] & \leq &\EX_{\cH} \EX_{C \sim \cD_{\cH}}[\psi(C) + O(\tau I_{f|_C}/d) + O(d^{-10})] \nonumber \\
    & = & \EX_{\cH} \EX_{C \sim \cD_{\cH}} [\psi(C)] + O(\tau/d)\EX_{\cH} \EX_{C \sim \cD_{\cH}} [I_{f|_C}] + O(d^{-10}) \label{eq:break}
\end{eqnarray}
\noindent
Each matching $H \in \cH$ is of the form $H^{c_i}_{i,a_i}$ and is a perfect matching if $c_i = 0$.
If more than $d/4$ of the $d$ possible $c_i$'s are set to $0$, then every cube $C$ in the partition of $[n]^d$ so obtained will have dimension at least $d/4$. Thus $\psi(C) = 0$ for all $C$. The probability that less than $d/4$ of the $c_i$'s are set to $0$, by a Chernoff bound, is $\leq 2^{-d/10}$. Therefore, $\EX_{\cH} \EX_{C \sim \cD_{\cH}} [\psi(C)] \leq 2^{-d/10}$.
%
%

We now deal with the second term in \eqref{eq:break}. For convenience, for any dimension $i$ matching $H_i$,
let $x + H_i$ be the upper endpoint of the $H_i$-edge containing $x$ as a lower
endpoint, if this edge exists. If not, let $x+H_i$ be $x$.
\begin{eqnarray}
\EX_{\cH} \EX_{C \sim \cD_{\cH}} [I_{f|_C}]  =  \EX_{\cH} \EX_{C \sim \cD_{\cH}} \sum_{i \in [d]} \EX_{x \sim \cU_C} [\bone_{f(x) \neq f(x+H_i)}] 
 =  \sum_{i \in [d]} \EX_{H_i} \EX_{x \in \cU} [\bone_{f(x) \neq f(x+H_i)}] \label{eq:inf}
\end{eqnarray}
Observe that $I_f = \sum_{i \in [d]} \sum_{H_i} \EX_{x \in \cU} [\bone_{f(x) \neq f(x+H_i)}]$.
Since there are $2\log n$ choices of $H_i$, the expression in \Eqn{inf} is precisely $I_f/(2\log n)$.
Putting it all together, we get $\EX_{x \sim \cU} \EX_{\cH} \EX_{T} [\chi(x,\cH, T)]\leq O(\tau I_f/(d\log n)) + O(d^{-10}) + 2^{-d/10}$.
Noting that the last two terms add up to $\leq d^{-9}$ for large enough $d$, we are done due to~\eqref{eq:bonda} and \eqref{eq:laddoo}. 
\end{proof}

\subsection{Main Analysis of the Tester}

\noindent
We are now equipped to prove the following \Lem{detect}, which is the main analysis of our tester. \Lem{detect} easily implies \Thm{main-theorem}. Once again, this is similar to the analysis in~\cite{KMS15}.

\def\EM{E_{\scriptscriptstyle{M}}}
\begin{lemma} \label{lem:detect} Suppose there exists a \textit{{\em matching} of violated edges} $\EM\subseteq S^{-}_f$ in $\AH$ of size $|\EM| = \sigma n^d$ (i.e. $\Gamma^-_f \geq \sigma$) for some $1\geq \sigma \geq \frac{\eps^{2/3}\log^{1/3}n }{d^{1/6}}$. Then the tester described in \Fig{alg} with inputs $f$ and $\eps$ detects a violation with probability $\Omega\Big(\frac{\sigma^2}{\sqrt{d}(\log^{3/2}d)(\log^2 n) }\Big) = \Omega\left(\frac{\eps^{4/3}}{d^{5/6}(\log^{3/2}d)(\log^{4/3} n)}\right)$.
\end{lemma}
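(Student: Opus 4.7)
The plan is to mirror the Khot--Minzer--Safra~\cite{KMS15} analysis in the augmented hypergrid setting, leveraging \Lem{influence} and \Lem{persistence}. By \Obs{inf} we may assume $I_f^- < \sqrt{d}$, so \Lem{influence} yields $I_f < 7\sqrt{d}\log n$. The core idea is to single out one walk length $\tau = 2^t$ and show that the matching $\EM$ already forces enough rejections at that scale alone.

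Fix an edge $(u,v) \in \EM$ lying in the matching $H^{c}_{i,a}$, and condition on the tester drawing $x = u$ (probability $1/n^d$). With probability $1/(2\log n)$ the tester chooses $H_i = H^{c}_{i,a}$ in \Step{H}, forcing $i$ into the set $S$ of dimensions where $u$ is a lower endpoint; by standard Chernoff arguments $|S| \geq \tau$ with probability $1-o(1)$ since $\tau \leq \sqrt{d/10\log d}$. On this event, $\Pr[i \in T] \geq \tau/|S| \geq \tau/d$ in \Step{y}. Conditioned on all of the above, the walk moves $u$ to $v$ in the dimension-$i$ step and then performs $\tau - 1$ additional steps using the remaining matchings $(H_j)_{j \neq i}$ and a uniform random $(\tau-1)$-subset of $S \setminus \{i\}$. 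The crucial observation is that since $v_j = u_j$ for $j \neq i$ and $v$ is an upper endpoint of $H^c_{i,a}$ in dimension $i$, the set $S \setminus \{i\}$ coincides exactly with the set of dimensions in which $v$ is a lower endpoint; therefore the residual process is distributionally identical to the tester's $(\tau-1)$-step walk started at $v$. Consequently, if $v$ is $(\tau-1)$-persistent (\Def{persistence}), the final vertex $y$ satisfies $f(y)=f(v)=0$ with probability $\geq 9/10$, and the tester rejects. The per-violation conditional rejection probability is thus $\Omega(\tau/(d \log n))$.

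Now choose $\tau$ to be the largest power of two with both $2^t \leq c \sigma \sqrt{d}$ (for a small constant $c > 0$) and $2^t \leq 2^p = \Theta(\sqrt{d/\log d})$. With this choice, \Lem{persistence} applied with parameter $\tau - 1$ gives that the fraction of non-persistent vertices is $O(\tau/\sqrt{d} + d^{-9}) \leq \sigma/4$ (using the hypothesis $\sigma \geq \eps^{2/3}\log^{1/3}n/d^{1/6}$ to absorb the $d^{-9}$ term). Hence at least $\sigma n^d / 2$ edges in $\EM$ have $(\tau-1)$-persistent upper endpoints; summing the per-violation bound over these favorable violations (each contributing weight $1/n^d$ for the uniform draw of $x$) and multiplying by the $\Omega(1/\log d)$ probability that the tester picks this specific $\tau$ in \Step{tau}, the overall rejection probability is $\Omega(\sigma \tau / (d \log n \log d))$. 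Plugging in: if $\sigma \leq 1/\sqrt{\log d}$ then $\tau = \Theta(\sigma\sqrt{d})$, giving $\Omega(\sigma^2/(\sqrt{d} \log n \log d))$; otherwise $\tau = \Theta(\sqrt{d/\log d})$, giving $\Omega(\sigma/(\sqrt{d}\log^{3/2}d\log n))$. Both bounds exceed the claimed $\Omega(\sigma^2/(\sqrt{d}\log^{3/2}d \log^2 n))$ (in the second regime because $\sigma \leq 1 \leq \log n$). The main obstacle is precisely the distributional identification of the residual walk from $v$ with a fresh $(\tau-1)$-walk from $v$; without this identification, persistence of $v$ (a statement about independent random matchings) cannot be transferred into the tester's conditional distribution on the second leg of the walk from $u$.
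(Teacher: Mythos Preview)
Your proposal is correct and follows essentially the same approach as the paper's own proof: reduce to $I_f = O(\sqrt{d}\log n)$ via \Obs{inf} and \Lem{influence}, use \Lem{persistence} to bound the non-persistent upper endpoints of $\EM$, and then argue that if the tester picks the ``right'' $\tau$, lands on a good lower endpoint, picks the right matching in the violating dimension, and includes that dimension in $T$, then persistence of the upper endpoint forces rejection with probability $\geq 9/10$. Your event decomposition and the key distributional identification of the residual walk from $v$ with a fresh $(\tau-1)$-walk are exactly what the paper invokes (the paper's events $\calE_1,\calE_2,\calE_3,\calE_4$), with the same small sloppiness that in a genuinely fresh walk dimension $i$ could in principle be resampled; both arguments implicitly use that this discrepancy is $O(\tau/d)=o(1)$.

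The one substantive difference is your choice of $\tau$: the paper takes $\tau \in [\tfrac{\sigma}{2\log n}\sqrt{d/(10\log d)},\ \tfrac{\sigma}{\log n}\sqrt{d/(10\log d)}]$, while you take $\tau \approx \min(c\sigma\sqrt{d},\ \sqrt{d/\log d})$, i.e.\ without the extra $1/(\log n\sqrt{\log d})$ factor. Your choice is actually tighter --- it gives $\Omega(\sigma^2/(\sqrt{d}\log d\log n))$ rather than the paper's $\Omega(\sigma^2/(\sqrt{d}\log^{3/2}d\log^2 n))$ --- because you only need the non-persistent fraction to be $\leq \sigma/4$, not $o(\sigma)$. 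Both choices prove the stated lemma. One small edge case you should handle explicitly: if $c\sigma\sqrt{d}<1$ there is no admissible power of two in your definition; in that regime just take $\tau=1$ (the edge tester), and the resulting bound $\Omega(\sigma/(d\log n\log d))$ still dominates $\Omega(\sigma^2/(\sqrt{d}\log^{3/2}d\log^2 n))$ since then $\sigma < 1/\sqrt{d}$.
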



\begin{proof} 

	Given $\sigma$, call $\tau$ "good" if $\frac{\sigma}{2 \log n} \sqrt{\frac{d}{10\log d}} \leq \tau \leq \frac{\sigma}{\log n} \sqrt{\frac{d}{10\log d}}$. Note that given the bounds on $\sigma$, the above range is a subinterval of $[1,\sqrt{d/10\log d}]$, and therefore the probability the sampled $\tau$ is good is $\Omega(1/\log d)$. Henceforth we condition on $\tau$ being good.
	
	Given $\tau$, \Lem{persistence} tells us that there are at most $O(\tau I_f/d)$ vertices~\footnote{We assume here that $\eps>1/d$ since otherwise the dependence on $d$ is not a meaningful quantity to study. In this case we have $I_f \geq \eps$ and thus $\tau I_f/d \gg d^{-9}$. We do not use the $\log n$ term in the denominator as this could in principle be $\gg d^9$. } which are $(\tau-1)$-non-persistent.
	For our choice of $\tau$, and since $I_f \leq O(\sqrt{d}\log n)$ by \Obs{inf}, we get that there are a $o(\sigma)$ fraction of the vertices which are $(\tau-1)$-non-persistent.
	
	Let $B = \{y | (x,y) \in \EM\}$; we have $|B| = \sigma n^d$ and let $B'\subseteq B$ be the $(\tau-1)$-persistent vertices of $B$. Let $A'$ be the $\EM$-matched vertices of $B'$. Note that $|A'| = |B'| = (1-o(1))\sigma n^d$ since the fraction of $(\tau-1)$-non-persistent vertices is $o(\sigma)$.
	Let $(x,z)$ be the pair sampled by the tester. Consider the following events.
	
	\begin{itemize}[noitemsep]
		\item $\calE_1$: $x$ lies in $A'$ with the edge $(x,y)\in E_M$ with $y\in B'$. Let $(x,y)$ lie in $H^{c_i}_{i,a_i}$.
		\item $\calE_2$: The matching sampled for dimension $i$ is indeed $H^{c_i}_{i,a_i}$.
		\item $\calE_3$: $i\in T$, that is, $i$ is one of the chosen $\tau$ dimensions.
		\item $\calE_4$: $f(y) = f(z)$, which means $f(z)=0$ and thus implies $(x,z)$ is a violation.
	\end{itemize}
	The probability the tester rejects is $\geq \Pr[\calE_1,\calE_2,\calE_3,\calE_4]$. Note that
	$\Pr[\calE_4|\calE_1,\calE_2,\calE_3] \geq 9/10$. This is because $\calE_1,\calE_2,\calE_3$ implies the distribution of $z$ on taking $\tau$ steps from $x$ is the same as taking $(\tau-1)$-steps from $y$. Since $y$ is $(\tau-1)$-persistent, we get the desired result. Also note
	$\calE_1,\calE_2, \calE_3$ are independent and therefore $\Pr[\calE_1,\calE_2,\calE_3] = \sigma(1-o(1))\cdot \Omega(1/\log n)\cdot (\tau/d)$.
	If $\tau$ is good, the latter is $\Omega\left(\frac{\sigma^2}{\sqrt{d}\log^{1/2} d\cdot \log^2 n}\right)$.
	Since the probability $\tau$ is good is $\Omega(1/\log d)$, the result follows.
\end{proof}


\noindent \textbf{Proof of \Thm{main-theorem}:} 
Let $\EM\subseteq S^{-}_f$ be the largest matching of violated edges, and let $|\EM| =\sigma n^d$. If $\sigma \geq \frac{\eps^{2/3}\log^{1/3}n}{d^{1/6}}$, then we are done by \Lem{detect}. Otherwise, by the Margulis-type \Thm{main-margulis}, we get $I^{-}_f = \Omega\left(\frac{\epsilon^{4/3} d^{1/6}}{\log^{1/3} n}\right)$. Since with probability $\Omega(1/\log d)$ the tester samples edges of the augmented hypergrid uniformly at random, and since  there are $O(n^d\cdot d\log n)$ edges in the augmented hypergrid,
the theorem follows.

\subsection{Proof of~\Lem{influence}: An Upper Bound on the Total Influence}\label{sec:influence}

In this section we prove~\Lem{influence}, a generalization of a theorem from \cite{KMS15} regarding the total influence of $f$. 
This fact may be of independent interest. Unlike the corresponding statement in~\cite{KMS15}, we weren't able to find a simpler proof than what we provide below. It uses the definition of Walsh functions described in~\cite{BlRY14} as a Fourier basis of functions over $[n]^d$ and the crux is proving the bound for the $d=1$ case. The $\sqrt{d}$ appears, as in the case of~\cite{KMS15}, from Parseval's identity and Cauchy-Schwarz.

We refer the reader to \Sec{fourier} for the facts we need about Walsh functions (these are from Blais et al~\cite{BlRY14}).
The Fourier coefficients are indexed by a collection of $d$-sets; for the rest of this section we need only one coefficient for each dimension $i$. In general, for $i \in [d]$ and $j \in [\log n]$, define $\be_{ij} := (S_1, ..., S_d)$  where $S_i = \{j\}$ and $S_{i'} = \emptyset$ for all $i' \neq i$.
In particular, the coefficient we need is $\widehat{f}(\be_{i,\log(n/2)})$ which evaluates to (see~\Obs{walsh})
%
%
\begin{align}\label{eq:def_walsh}
\widehat{f}(\be_{i,\log(n/2)}) = \frac{1}{2} \Expect_{(x,y)\in H^0_{i,\log(n/2)}} [f(x) - f(y)] 
\end{align}
That is, $\widehat{f}(e_{i,\log(n/2)})$ is the expected difference of $f$ evaluated on the endpoints of a u.a.r edge from the "longest" edge matching of dimension $i$, $H^0_{i,\log(n/2)}$~\footnote{Note that there is no corresponding "odd" matching for the step size $\log (n/2)$, say $H^1_{i,\log(n/2)}$, since all of the $2^{\log(n/2)} = n/2$ length edges can be contained in a single perfect matching.}. 
Henceforth, for notational simplicity we use $\be_i := \be_{i,\log (n/2)}$.
The only "Fourier Analysis" fact we need is $\sum_{i=1}^d \widehat{f}^2(\be_i) \leq 1$ (\Lem{Parseval}).

The proof follows by first arguing for the lines ($d=1$ case) and taking a direct sum. 
Recall $S_f$ is the set of sensitive edges partitioned into $S^+_f \cup S^-_f$. We have $I_f = |S_f|/n$,
and similarly $I^+_f$ and $I^-_f$ are defined. Let's define $\Delta I_f := I^+_f - I^-_f$.

\begin{lemma}\label{lem:influence_N_line} For any function $f:\AHl \to \{0,1\}$, $\Delta I_f \leq \log n\cdot\left(4I^-_f - \widehat{f}(\be_1)\right)$.
\end{lemma}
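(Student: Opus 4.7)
The plan is to prove the inequality by induction on $\log n$. I reformulate it as $|S^+_f| \leq (4\log n + 1)|S^-_f| - \log n\,(A - B)$, where $A := \sum_{x<n/2} f(x)$ and $B := \sum_{x \geq n/2} f(x)$, noting that $\widehat{f}(\be_1) = (A-B)/n$. The base case $n = 2$ is immediate by enumerating the four Boolean functions on $\{0,1\}$.

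For the inductive step, I split $f$ into halves $f_L := f|_{[0, n/2)}$ and $f_R := f|_{[n/2, n)}$ and partition the edges of $\AHl$ into four classes: intra-left edges (forming $\mathbf{A}_{n/2,1}$ on $f_L$); intra-right edges (on $f_R$); cross-half edges of length $< n/2$; and the length-$n/2$ perfect matching $H^0_{1,\log n - 1}$ pairing each $x<n/2$ with $x+n/2$. Applying the inductive hypothesis to $f_L$ and $f_R$ handles the first two classes. For the long matching, writing $\alpha$ and $\beta$ for its counts of $+$ and $-$ edges, one has both $\widehat{f}(\be_1) = (\beta - \alpha)/n$ and the identity $A - B = \beta - \alpha$, so the $-\log n\,\widehat{f}(\be_1)$ term on the right-hand side of the lemma is precisely the budget allocated to these long edges.

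Substituting the inductive bounds and using the decomposition $A - B = (A_L - B_L) + (A_R - B_R) + 2(B_L - A_R)$, where $A_L, B_L, A_R, B_R$ are the sums of $f$ on the four quarters of $[n]$, reduces the goal to the combinatorial inequality
\[
\alpha + \alpha' + (A_L - B_L) + (A_R - B_R) + 2\log n\,(B_L - A_R) \leq 4\bigl(|S^-_{f_L}| + |S^-_{f_R}|\bigr) + (4\log n + 1)(\beta + \beta'),
\]
where $\alpha', \beta'$ count the $+$ and $-$ cross-half edges of length $< n/2$. A key structural observation is that the cross-half edges of $f$ of length $< n/2$ coincide exactly with the midpoint-crossing edges of the middle-half function $f_M := f|_{[n/4, 3n/4)}$ viewed on $[n/2]$, yielding the recursion $\alpha'(f) = \alpha(f_M) + \alpha'(f_M)$ (and similarly for $\beta'$).

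The main obstacle is proving this reduced inequality. I would attempt a second application of the inductive hypothesis, this time to $f_M$: since $\widehat{f_M}(\be_1) = (B_L - A_R)/(n/2)$ exactly captures the cross-quarter imbalance, the IH yields $\alpha'(f) \leq |S^+_{f_M}| \leq (4\log(n/2)+1)|S^-_{f_M}| - \log(n/2)(B_L - A_R)$, which cancels $\log(n/2)(B_L - A_R)$ out of the $2\log n\,(B_L - A_R)$ term. The remaining residual $(\log n + 1)(B_L - A_R)$ and the $(4\log(n/2)+1)|S^-_{f_M}|$ contribution must then be absorbed by the slack $4(|S^-_{f_L}|+|S^-_{f_R}|)$ earned per inductive level and the $(4\log n+1)(\beta+\beta')$ budget. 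The delicate bookkeeping here is that $|S^-_{f_M}|$ counts $-$ edges lying inside $[n/4,3n/4)$, which must be charged to $|S^-_{f_L}|, |S^-_{f_R}|$ and $\beta'$ without double-counting. Should the two-level induction not close directly, an alternative route is to construct $2^a$ vertex-disjoint routing paths from $[0, 2^a)$ to $[n - 2^a, n)$ for each $a$, exploiting that any $1$-to-$0$ endpoint pair forces a $-$ violating edge on its path. The constants $4$ and $\log n$ appear tuned precisely so that the induction closes, and any looser choice would fail.
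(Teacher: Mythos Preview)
Your inductive set-up is correct through the derivation of the reduced inequality
\[
\alpha + \alpha' + (A_L - B_L) + (A_R - B_R) + 2\log n\,(B_L - A_R) \;\leq\; 4\bigl(|S^-_{f_L}| + |S^-_{f_R}|\bigr) + (4\log n + 1)(\beta + \beta'),
\]
but the proposed way of discharging it via the inductive hypothesis on $f_M$ does \emph{not} close. After you substitute $\alpha' \leq |S^+_{f_M}| \leq (4\log(n/2)+1)|S^-_{f_M}| - \log(n/2)(B_L - A_R)$, the resulting inequality you must verify is
\[
\alpha + (4\log n - 3)|S^-_{f_M}| + (\log n + 1)(B_L - A_R) + (A_L - B_L) + (A_R - B_R) \;\leq\; 4\bigl(|S^-_{f_L}| + |S^-_{f_R}|\bigr) + (4\log n + 1)(\beta + \beta').
\]
This is false in general. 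Take $n = 32$ and $f$ the indicator of $[9,12]$ (a single block of $1$'s in the second eighth). Then $A_L = A_R = B_R = 0$, $B_L = 4$, $\alpha = \alpha' = 0$, $\beta = \beta' = 4$, $|S^-_{f_L}| = 7$, $|S^-_{f_R}| = 0$, and $|S^-_{f_M}| = 11$. The left side is $0 + 17\cdot 11 + 6\cdot 4 - 4 + 0 = 207$, while the right side is $4\cdot 7 + 21\cdot 8 = 196$. The reduced inequality itself holds for this $f$ (both sides are $36 \leq 196$), so the lemma is not at fault; it is the bound $\alpha' \leq |S^+_{f_M}|$ followed by the IH on $f_M$ that loses too much. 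Structurally, the $-$ edges of $f_M$ lying inside $[n/4+1,n/2]$ have already been ``paid for'' by the IH on $f_L$; charging them again with coefficient $4\log(n/2)+1$ against only the constant slack $4$ cannot balance. Your alternative routing-path idea is not developed enough to assess.

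The paper's proof takes a completely different route and avoids induction on $n$. It introduces the \emph{sorted} function $S(f)$ (the unique monotone Boolean function on $[n]$ with the same number of $1$'s as $f$) and proves three facts: (i) for monotone $g$ one has $\Delta I_g \leq -\log n\cdot\widehat{g}(\be_1)$ directly by counting; (ii) sorting can only increase $\Delta I$, i.e.\ $\Delta I_{S(f)} \geq \Delta I_f$, shown by decomposing the augmented line into maximal arithmetic-progression paths and observing that $p^+(x)-p^-(x)$ is monotone in $x$; and (iii) $-\widehat{S(f)}(\be_1) \leq -\widehat{f}(\be_1) + 4I^-_f$, since $f$ and $S(f)$ differ on at most $2\eps_f n$ points and $\eps_f \leq 2I^-_f$ on the line. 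Chaining these gives the lemma in one line. This argument is short and uses no recursion; the constant $4$ arises from the crude bound on the Hamming distance between $f$ and $S(f)$, not from any per-level slack.
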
 \smallskip

\noindent
Before proving the above lemma, let us see how this implies~\Lem{influence}. 
Fix a dimension $i$ and let $L_i$ be the lines along dimension $i$. Fix an $\ell\in L_i$ and let $f_{|\ell}$ be the function restricted to $\ell$.
The above lemma gives
$-\widehat{f_{|\ell}(\be_i)} \geq \frac{\Delta I_{f_{|\ell}}}{\log n} - 4I^{-}_{f_{|\ell}}$.
Taking average over all lines $\ell \in L_i$ will set the LHS to $\widehat{f}(\be_i)$, while the RHS will give the contribution of $I^+_f$ and $I^-_f$ in the $i$th dimension. Summing over all dimensions gives
\begin{equation}\label{eq:pain}
\sum_{i=1}^d |\widehat{f}(\be_i)| \geq \frac{I^+_f - I^-_f}{\log n} - 4I^-_f = \frac{I_f - 2I^-_f}{\log n} - 4I^{-}_f > \frac{I_f}{\log n} - 6I^{-}_f
\end{equation}
where the $I^+_f,I^-_f$'s are defined over $\AH$ and the last inequality used $\log n > 1$. Since $I^-_f < \sqrt{d}$, we may assume $I_f > 7\log n\cdot I^{-}_f$, for otherwise~\Lem{influence} is trivially true. Thus the RHS of \Eqn{pain} $> \frac{I_f}{7\log n}$, while the LHS, by Cauchy-Schwarz and Parseval is at most $\sqrt{d}$. This proves~\Lem{influence}.

\begin{proof}[Proof of \Lem{influence_N_line}]
	Our first claim observes that the lemma is true for monotone $f$. 
\begin{claim}\label{clm:influence_M_line} If $f:\AHl \to \{0,1\}$ is monotone, then $\Delta I_f \leq \log n \cdot (-\widehat{f}(\be_1))$.
\end{claim}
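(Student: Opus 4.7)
The plan is to reduce the claim to an elementary counting inequality, exploiting the fact that any monotone Boolean function on the chain $[n]$ is a threshold function. First, I would observe that since $f$ is monotone, the set $S^-_f$ of decreasing edges in $\AHl$ is empty, so $I^-_f = 0$ and $\Delta I_f = I^+_f = I_f$. Let $t$ denote the threshold, so $f(x)=1$ iff $x\geq t$, and set $k := \min(t,\,n-t)$. The claim then reduces to showing $I_f \leq (\log n)\cdot k/n$, once we establish $-\wh{f}(\be_1) = k/n$.

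Next, I would evaluate $-\wh{f}(\be_1)$ directly from \Eqn{def_walsh}. The matching $H^0_{1,\log(n/2)}$ consists of the $n/2$ pairs $(x,x+n/2)$ with $x \in \{0,\ldots,n/2-1\}$, and for a threshold $f$ the contribution $f(x) - f(x+n/2)$ equals $-1$ precisely when $x < t \leq x + n/2$, and $0$ otherwise. A short case split on whether $t \leq n/2$ or $t > n/2$ shows that the number of such $x$ equals $k$, giving
\[
-\wh{f}(\be_1) \;=\; \tfrac{1}{2}\cdot\tfrac{2}{n}\cdot k \;=\; \tfrac{k}{n}.
\]

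Finally, I would count $|S_f|$ matching by matching. Since $H^0_{1,a}\cup H^1_{1,a}$ covers every length-$2^a$ edge of $\AHl$ exactly once, a sensitive edge of length $2^a$ is any pair $(x,x+2^a)$ with $x < t \leq x+2^a$ and $0 \leq x \leq n-1-2^a$. A direct count (handling the cases $t \leq 2^a$ vs.\ $t > 2^a$ and using that $2^a \leq n/2$) yields $\min(2^a,\,t,\,n-t) = \min(2^a, k)$ such pairs, so
\[
|S_f| \;=\; \sum_{a=0}^{\log n - 1} \min(2^a, k) \;\leq\; (\log n)\cdot k,
\]
using the trivial bound $\min(2^a, k) \leq k$ term by term across the $\log n$ matchings. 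Dividing by $n$ would yield $I_f \leq (\log n)\cdot (k/n) = (\log n)\cdot(-\wh{f}(\be_1))$, proving the claim. There is no real obstacle here: the only mild observation needed is that $H^0_{1,a}\cup H^1_{1,a}$ aggregates to all length-$2^a$ edges, after which the termwise $\min$-bound is immediate. (One could tighten the estimate to $(\lfloor\log k\rfloor + 2)\cdot k$ by splitting the sum at $a = \lfloor\log k\rfloor$, but the weaker bound stated in \Clm{influence_M_line} is all that the downstream argument in \Eqn{pain} needs.)
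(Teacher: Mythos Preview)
Your proof is correct and follows essentially the same approach as the paper: identify the threshold $j$ (your $t$), compute $-\wh{f}(\be_1)=\min(j,n-j)/n$, and bound the number of sensitive edges by $\min(j,n-j)\cdot\log n$. The only cosmetic difference is that the paper obtains the last bound in one line via a degree argument (each vertex on the minority side has at most $\log n$ upward edges in $\AHl$), whereas you count sensitive edges per length class and apply the trivial bound $\min(2^a,k)\le k$ termwise; these are equivalent.
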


\begin{proof} Consider the index $j \in [n]$ such that $f(i) = 0$ if $i \leq j$ and $f(i) = 1$ if $i > j$. We know such an index exists since $f$ is monotone. From \Eqn{def_walsh}, we get  $-\widehat{f}(\be_1) = \frac{\min (j, n-j)}{n}$ and $I_f^+ - I_f^- = I_f^+ \leq \frac{\min (j,n-j) \log n}{n}$ since the degree of a vertex in $\AHl$ is at most $\log n$. 
\end{proof}

Given $f:\AHl\to\{0,1\}$, let $S(f)$ be the "sorted" function which is the monotone Boolean function over $[n]$ with the same number of ones as $f$. The next claim shows that sorting only increases $\Delta I$.

\begin{claim}\label{clm:sorted} For any $f:\AHl\to \{0,1\}$, $\Delta I_{S(f)} \geq \Delta I_f$.
\end{claim}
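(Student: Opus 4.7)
The plan is to rewrite $\Delta I_f$ as a linear functional in $f$ with non-decreasing coefficients, and then invoke a rearrangement inequality. Concretely, for each edge $(x,y)\in E(\AHl)$ with $x<y$, the quantity $f(y)-f(x)$ equals $+1$ on $S^+_f$, $-1$ on $S^-_f$, and $0$ elsewhere, so
\[
n\cdot\Delta I_f \;=\; |S^+_f|-|S^-_f| \;=\; \sum_{(x,y)\in E(\AHl),\,x<y} \bigl(f(y)-f(x)\bigr) \;=\; \sum_{v\in[n]} f(v)\,w(v),
\]
where $w(v)$ is the \emph{signed degree} of $v$ in $\AHl$: the number of edges on which $v$ is the upper endpoint minus the number on which it is the lower endpoint.

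Next, I will decompose $w$ by step size. From the definitions of $H^0_{1,a}$ and $H^1_{1,a}$, their union is exactly the set of all pairs $(x,x+2^a)$ with both endpoints in $[n]$. Hence $v$ is an upper endpoint of some length-$2^a$ edge iff $v\ge 2^a$, and a lower endpoint iff $v\le n-1-2^a$, giving
\[
w(v)\;=\;\sum_{a=0}^{\log(n/2)}\bigl(\bone[v\ge 2^a]\;-\;\bone[v\le n-1-2^a]\bigr).
\]
Each summand is a \emph{suffix indicator minus a prefix indicator}, so takes values $-1,0,+1$ non-decreasingly as $v$ increases; a sum of non-decreasing functions is non-decreasing, so $w$ itself is monotone non-decreasing on $[n]$.

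Finally, the rearrangement inequality (or equivalently a one-swap exchange argument: replacing a $1$ at index $u$ with a $0$ at index $v>u$ changes $\sum_v f(v)w(v)$ by $w(v)-w(u)\ge 0$) shows that among all Boolean $f$ with a fixed number of ones, the linear functional $\sum_v f(v)w(v)$ is maximized by placing all the ones at the top indices, which is exactly $S(f)$. This yields $\Delta I_{S(f)}\ge\Delta I_f$. There is no real obstacle here: the only substantive observation is that the signed vertex-degree on the augmented line is monotone in the coordinate, which falls out immediately once the edges are grouped by length.
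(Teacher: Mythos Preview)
Your proof is correct and takes essentially the same approach as the paper: both write $n\,\Delta I_f=\sum_{v} f(v)\,c(v)$ for a non-decreasing coefficient sequence $c(v)$ and then invoke a rearrangement/swap argument to conclude that placing all the ones at the top indices (i.e.\ passing to $S(f)$) maximizes the sum. The paper obtains $c(v)=p^+(v)-p^-(v)$ by partitioning the edge set of $\AHl$ into maximal $a$-paths, so that $c(v)$ counts path-terminal minus path-initial incidences; after telescoping along each path this is exactly your signed-degree $w(v)$, and the paper's monotonicity of $p^+$ and anti-monotonicity of $p^-$ is the same observation as your ``suffix indicator minus prefix indicator'' decomposition by step size.
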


\begin{proof} First, observe that the quantity $\Delta I_f$ restricted to a path is precisely equal to the difference of $f$ evaluated on the endpoints of that path. Bearing this in mind, we partition the edge set of $\AHl$ into a collection of paths. We say $\boldsymbol{p}$ is a $a$-path if it consists of edges from $\AHl$ only of length $2^a$ and $\boldsymbol{p}$ is maximal if no edges can be added to $\boldsymbol{p}$ to create a longer $a$-path. Observe that for $a$ in the range $0 \leq a < \log n$, there are $2^a$ maximal $a$-paths, each consisting of $2^{\log n - a}$ vertices and $2^{\log n - a} - 1$ edges. Let $\calP$ denote the set of paths $\boldsymbol{p}$ which are subgraphs of $\AHl$ and $\boldsymbol{p}$ is a maximal $a$-path for some $a \in [\log n]$. Let $\bp_s,\bp_t$ denote the start and end vertex of $\bp$, respectively. For $x \in [n]$ let $p^-(x)$,$p^+(x)$ denote number of paths $\boldsymbol{p} \in \calP$ for which $x$ is a start point or end point, respectively. It follows that 

$$\Delta I_f = \frac{1}{n} \sum_{\bp \in \calP} f(\bp_t) - f(\bp_s) = \frac{1}{n}\sum_{x \in [n]} f(x) \cdot (p^+(x) - p^-(x))$$

where the first equality is because $\AHl = \cup_{\boldsymbol{p} \in \calP}$, any two paths $\boldsymbol{p},\boldsymbol{p}' \in \calP$ are edge-disjoint and the observation made in the first line of this proof. The second equality is obtained simply by rewriting as a sum over the vertices.

Notice now that increasing $x$ can only increase $p^+(x)$ and decrease $p^-(x)$ since every path in $\calP$ is maximal. I.e., the functions $p^+(x)$ and $p^-(x)$ are monotone and anti-monotone, respectively. 
Thus, we can define $\boldsymbol{1}(f) = \{x \in [n] | f(x) = 1\}$ and $\boldsymbol{1}(S(f)) = \{x \in [n] | S(f)(x) = 1\}$ and observe that there is a bijection $\phi: \boldsymbol{1}(f) \to \boldsymbol{1}(S(f))$ such that $\forall x \in \boldsymbol{1}(f)$, $x < \phi(x)$. That is, sorting moves the 1's of $f$ to larger values on $\AHl$ and moves the 0's of $f$ to smaller values on $\AHl$ and since $p^+(x),p^-(x)$ are monotone and anti-monotone, the claim follows.
\end{proof}
The final claim connects the Walsh-coefficients.
\begin{claim}
	\label{clm:final}
	For any $f:\AHl\to\{0,1\}$, $-\widehat{S(f)}(\be_1) \leq -\widehat{f}(\be_1) + 4I^-_f$.
\end{claim}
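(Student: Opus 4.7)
The plan is a two-step reduction: first rewrite the two Walsh coefficients combinatorially so the claim becomes a concrete assertion about the number of violated edges, then use a simple Hamming-distance argument combined with the $I_f^- = \Omega(\epsilon)$ bound from the start of \Sec{isoperimetry}.

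\textbf{Step 1 (algebra).} Since $H^0_{1,\log(n/2)}$ is the perfect matching pairing each $i \in U := \{0,\ldots,n/2-1\}$ with $i+n/2 \in V := \{n/2,\ldots,n-1\}$, formula \eqref{eq:def_walsh} directly gives
\[
-\widehat{f}(\be_1) \;=\; \frac{1}{n}\sum_{i\in U}\bigl(f(i+n/2)-f(i)\bigr) \;=\; \frac{k_V-k_U}{n},
\]
where $k_U$ (resp.\ $k_V$) is the number of $1$'s of $f$ in $U$ (resp.\ $V$). By definition, $S(f)$ places all $k := k_U+k_V$ of its $1$'s in the top-$k$ positions of $[n]$; splitting into cases $k \leq n/2$ vs.\ $k > n/2$ yields $-\widehat{S(f)}(\be_1) = \min(k,n-k)/n$. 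A short algebraic manipulation gives
\[
-\widehat{S(f)}(\be_1) - \bigl(-\widehat{f}(\be_1)\bigr) \;=\; \frac{2m}{n}, \qquad m := \min\bigl(k_U,\; n/2-k_V\bigr).
\]
So the claim reduces to showing $I_f^- \geq m/(2n)$.

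\textbf{Step 2 (Hamming bound).} Every monotone Boolean function on $[n]$ is a threshold function $g_t := 0^t 1^{n-t}$ for some $t\in\{0,\ldots,n\}$. If $t\leq n/2$ then $g_t\equiv 1$ on all of $V$, so $\mathrm{Hamming}(f,g_t) \geq$ (number of $0$'s of $f$ in $V$) $= n/2 - k_V$. If $t>n/2$ then $g_t\equiv 0$ on all of $U$, so $\mathrm{Hamming}(f,g_t) \geq k_U$. Minimising over $t$, $\epsilon_f \cdot n \geq \min(k_U, n/2-k_V) = m$.

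\textbf{Step 3 (invoking the earlier bound).} The paper recalls at the start of \Sec{isoperimetry} (following \cite{DGLRRS99,EKK+00,ChSe13}) that any $\epsilon$-far Boolean function on $\AH$ has $I_f^- = \Omega(\epsilon)$; inspecting the proof on $\AHl$ one sees the leading constant is at least $1/2$, so $I_f^- \geq \epsilon_f/2 \geq m/(2n)$. Consequently $4I_f^- \geq 2m/n = -\widehat{S(f)}(\be_1) + \widehat{f}(\be_1)$, which is exactly the claim.

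\textbf{Anticipated obstacle.} The only delicate point is the exact leading constant in Step 3. If the folklore statement is only available with a worse constant, I would replace it by a direct combinatorial proof of $I_f^- \geq m/(2n)$: the matching $H^0_{1,\log(n/2)}$ alone already yields $|k_U-k_V|$ violations (by the $a,b$ accounting of Step 1), which suffices whenever $|k_U-k_V| \geq m/2$; in the complementary ``balanced'' regime $|k_U-k_V| < m/2$, one harvests the remaining violations from the shorter crossing matchings $H^1_{1,\log(n/4)}, H^1_{1,\log(n/8)}, \ldots$, using that the $V$-ones and $V$-zeros cannot simultaneously avoid all of these crossings once $k_V$ is forced to be close to $k_U$.
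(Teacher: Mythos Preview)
Your main argument (Steps 1--3) is correct and follows essentially the same route as the paper: both reduce the claim to the known bound $I_f^- \geq \eps_f/2$ on the line (cited in the paper from \cite{EKK+00,ChSe13}). The paper gets there via a Lipschitz argument---each of the $\delta n$ bit flips taking $f$ to $S(f)$ changes $-\widehat{f}(\be_1)$ by at most $1/n$, so the coefficient gap is at most $\delta$, and then one shows $\delta \leq 2\eps_f$---whereas you compute the gap \emph{exactly} as $2m/n$ via the half/half split and then bound $\eps_f \geq m/n$ directly. These are cosmetic variants of the same idea.

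One caution about your fallback plan (which you do not actually need, since the paper itself invokes the constant $1/2$): the assertion that $H^0_{1,\log(n/2)}$ alone yields $|k_U - k_V|$ violations is wrong. If $a$ is the number of matched pairs $(i,i+n/2)$ with $f(i)=1,\, f(i+n/2)=0$ and $b$ the number with $f(i)=0,\, f(i+n/2)=1$, then $k_V - k_U = b - a$, so $|k_U - k_V| = |b-a|$, not $a$. The sketched ``balanced regime'' argument would therefore need repair if it were ever required.
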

\begin{proof}
	\def\dist{{\tt dist}}
	Suppose the distance between $f$ and $S(f)$ is $\delta$, that is, we can make $\delta n$ changes to $f$ to get $S(f)$.
	By definition of $\widehat{f}(\be_1)$, each change can increase $-\widehat{f}(\be_1)$ by at most additive $1/n$. Thus,
	$-\widehat{S(f)}(\be_1) \leq -\widehat{f}(\be_1) + \delta$. We next claim that $\delta$ is at most two times the distance, $\eps_f$ to monotonicity; this will prove the lemma since we know for the line $I^-_f \geq \eps_f/2$~\cite{EKK+00,ChSe13}.

 Consider the sorted version of $f$, $S(f) = 0^{j}1^{n-j}$ where $j = |\{x \in \ell | f(x) = 0\}|$. There are $z$ $1$'s in the $j$-prefix of $f$ and $z$ $0$'s in the $n-j$ suffix of $f$ where $\delta n = 2z$. Let $\Delta_p(f) = \{x \in [1,j] | f(x) = 1\}$ and $\Delta_s(f) = \{y \in [j+1,n] | f(y) = 0\}$. Changing the value of $f$ on any set of vertices $V$ where $|V| < z$ to get a new function $f'$ will result in $\Delta_p(f') \neq \emptyset$ and $\Delta_s(f') \neq \emptyset$. Thus, $f'$ cannot be monotone. That is, transforming $f$ into a monotone function requires that we change its value on at least $z$ vertices and so $\eps_f n \geq z = \delta n/2$. \end{proof}
\Lem{influence_N_line} follows from the previous three claims.
\end{proof}

%
\bibliography{derivative-testing}

\newcommand{\etalchar}[1]{$^{#1}$}
\begin{thebibliography}{RRSW11}

\bibitem[AC06]{AC04}
Nir Ailon and Bernard Chazelle.
\newblock Information theory in property testing and monotonicity testing in
  higher dimension.
\newblock {\em Information and Computation}, 204(11):1704--1717, 2006.

\bibitem[ACCL07]{ACCL04}
Nir Ailon, Bernard Chazelle, Seshadhri Comandur, and Ding Liu.
\newblock Estimating the distance to a monotone function.
\newblock {\em Random Structures and Algorithms}, 31(3):371--383, 2007.

\bibitem[BB16]{BeBl16}
Aleksandrs Belovs and Eric Blais.
\newblock A polynomial lower bound for testing monotonicity.
\newblock In {\em Proceedings of the Symposium on Theory of Computing (STOC)},
  pages 1021--1032, 2016.

\bibitem[BBM12]{BBM11}
Eric Blais, Joshua Brody, and Kevin Matulef.
\newblock Property testing lower bounds via communication complexity.
\newblock {\em Computational Complexity}, 21(2):311--358, 2012.

\bibitem[BCSM12]{BCG+10}
Jop Bri\"{e}t, Sourav Chakraborty, David~Garc\'{i}a Soriano, and Ari Matsliah.
\newblock Monotonicity testing and shortest-path routing on the cube.
\newblock {\em Combinatorica}, 32(1):35--53, 2012.

\bibitem[BGJ{\etalchar{+}}12]{BGJ+12}
Arnab Bhattacharyya, Elena Grigorescu, Madhav Jha, Kyoming Jung, Sofya
  Raskhodnikova, and David Woodruff.
\newblock Lower bounds for local monotonicity reconstruction from
  transitive-closure spanners.
\newblock {\em SIAM Journal of Discrete Math}, 26(2):618--646, 2012.
\newblock Conference version in RANDOM 2010.

\bibitem[Bha08]{Bha08}
Arnab Bhattacharyya.
\newblock A note on the distance to monotonicity of boolean functions.
\newblock Technical Report 012, Electronic Colloquium on Computational
  Complexity (ECCC), 2008.

\bibitem[BRY14a]{BeRaYa14}
Piotr Berman, Sofya Raskhodnikova, and Grigory Yaroslavtsev.
\newblock L\({}_{\mbox{p}}\)-testing.
\newblock In {\em Symposium on Theory of Computing (STOC)}, pages 164--173,
  2014.

\bibitem[BRY14b]{BlRY14}
Eric Blais, Sofya Raskhodnikova, and Grigory Yaroslavtsev.
\newblock Lower bounds for testing properties of functions over hypergrid
  domains.
\newblock In {\em {IEEE} 29th Conference on Computational Complexity, {CCC}
  2014, Vancouver, BC, Canada, June 11-13, 2014}, pages 309--320, 2014.

\bibitem[CDJS15]{ChDi+15}
Deeparnab Chakrabarty, Kashyap Dixit, Madhav Jha, and C.~Seshadhri.
\newblock Property testing on product distributions: Optimal testers for
  bounded derivative properties.
\newblock In {\em Proceedings of the Symposium on Discrete Algorithms}, 2015.

\bibitem[CDST15]{ChenDST15}
Xi~Chen, Anindya De, Rocco~A. Servedio, and Li-Yang Tan.
\newblock Boolean function monotonicity testing requires (almost)
  ${O}(n^{1/2})$ non-adaptive queries.
\newblock In {\em Proceedings of the Symposium on Theory of Computing (STOC)},
  pages 519--528, 2015.

\bibitem[CS13]{ChSe13}
Deeparnab Chakrabarty and C.~Seshadhri.
\newblock Optimal bounds for monotonicity and {L}ipschitz testing over
  hypercubes and hypergrids.
\newblock In {\em Symposium on Theory of Computing (STOC)}, pages 419--428,
  2013.

\bibitem[CS14a]{ChSe13-j}
Deeparnab Chakrabarty and C.~Seshadhri.
\newblock An o(n) monotonicity tester for boolean functions over the hypercube.
\newblock {\em SIAM Journal on Computing}, 2014.

\bibitem[CS14b]{ChSe14}
Deeparnab Chakrabarty and C.~Seshadhri.
\newblock An optimal lower bound for monotonicity testing over hypergrids.
\newblock {\em Theory of Computing}, 10:453--464, 2014.

\bibitem[CST14]{ChenST14}
Xi~Chen, Rocco~A. Servedio, and Li-Yang. Tan.
\newblock New algorithms and lower bounds for monotonicity testing.
\newblock In {\em Proceedings of Foundations of Computer Science (FOCS)}, pages
  286--295, 2014.

\bibitem[CWX17]{Chen17}
Xi~Chen, Erik Waingarten, and Jinyu Xie.
\newblock Beyond talagrand: New lower bounds for testing monotonicity and
  unateness.
\newblock In {\em ACM Symposium on Theory of Computation (STOC)}, 2017.

\bibitem[DGL{\etalchar{+}}99]{DGLRRS99}
Yevgeny Dodis, Oded Goldreich, Eric Lehman, Sofya Raskhodnikova, Dana Ron, and
  Alex Samorodnitsky.
\newblock Improved testing algorithms for monotonicity.
\newblock {\em Proceedings of the 3rd International Workshop on Randomization
  and Approximation Techniques in Computer Science (RANDOM)}, pages 97--108,
  1999.

\bibitem[EKK{\etalchar{+}}00]{EKK+00}
Funda Ergun, Sampath Kannan, Ravi Kumar, Ronitt Rubinfeld, and Mahesh
  Viswanathan.
\newblock Spot-checkers.
\newblock {\em Journal of Computer Systems and Sciences (JCSS)},
  60(3):717--751, 2000.

\bibitem[Fis04]{E04}
Eldar Fischer.
\newblock On the strength of comparisons in property testing.
\newblock {\em Information and Computation}, 189(1):107--116, 2004.

\bibitem[FLN{\etalchar{+}}02]{FLNRRS02}
Eldar Fischer, Eric Lehman, Ilan Newman, Sofya Raskhodnikova, and Ronitt
  Rubinfeld.
\newblock Monotonicity testing over general poset domains.
\newblock {\em Proceedings of the 34th Annual ACM Symposium on the Theory of
  Computing (STOC)}, pages 474--483, 2002.

\bibitem[FR10]{FR}
Shahar Fattal and Dana Ron.
\newblock Approximating the distance to monotonicity in high dimensions.
\newblock {\em ACM Tranactions on Algorithms (TALG)}, 6(3), 2010.

\bibitem[GGL{\etalchar{+}}00]{GGLRS00}
Oded Goldreich, Shafi Goldwasser, Eric Lehman, Dana Ron, and Alex Samordinsky.
\newblock Testing monotonicity.
\newblock {\em Combinatorica}, 20:301--337, 2000.

\bibitem[HK03]{HK03}
Shirley Halevy and Eyal Kushilevitz.
\newblock Distribution-free property testing.
\newblock {\em Proceedings of the 7th International Workshop on Randomization
  and Approximation Techniques in Computer Science (RANDOM)}, pages 302--317,
  2003.

\bibitem[HK08]{HK04}
Shirley Halevy and Eyal Kushilevitz.
\newblock Testing monotonicity over graph products.
\newblock {\em Random Structures and Algorithms}, 33(1):44--67, 2008.

\bibitem[KMS15]{KMS15}
Subhash Khot, Dor Minzer, and Muli Safra.
\newblock On monotonicity testing and boolean isoperimetric type theorems.
\newblock In {\em {IEEE} Annual Symposium on Foundations of Computer Science,
  {FOCS}}, 2015.

\bibitem[LR01]{LR01}
Eric Lehman and Dana Ron.
\newblock On disjoint chains of subsets.
\newblock {\em Journal of Combinatorial Theory, Series A}, 94(2):399--404,
  2001.

\bibitem[Mar74]{Mar74}
Grigory~A. Margulis.
\newblock Probabilistic characteristics of graphs with large connectivity.
\newblock {\em Problemy Peredachi Informatsii}, 10(2):101--108, 1974.

\bibitem[RRSW11]{RRSW11}
Dana Ron, Ronitt Rubinfeld, Muli Safra, and Omri Weinstein.
\newblock {Approximating the Influence of Monotone Boolean Functions in
  $O(\sqrt{n})$ Query Complexity.}
\newblock In {\em Proceedings of the 15th International Workshop on
  Randomization and Approximation Techniques in Computer Science (RANDOM)},
  2011.

\bibitem[SS06]{SS06}
Mike Saks and C.~Seshadhri.
\newblock Parallel monotonicity reconstruction.
\newblock In {\em Proceedings of 19th Annual Symposium on Discrete Algorithms
  (SODA)}, pages 962--971, 2006.

\bibitem[Tal93]{Tal93}
Michel Talagrand.
\newblock Isoperimetry, logarithmic sobolev inequalities on the discrete cube,
  and margulis’ graph connectivity theorem.
\newblock {\em Geom. Func. Anal.}, 3(3):295--314, 1993.

\end{thebibliography}
\bibliographystyle{alpha}

\appendix

\section{Assuming $n$ is a power of $2$} \label{sec:power}

The reduction follows directly from the next theorem.

\begin{theorem} \label{thm:power} Given query access to $f:[n]^d \to \{0,1\}$,
we can simulate query access to $g:[N]^d \to \{0,1\}$ with the following properties.
\begin{asparaenum}
    \item $N$ is a power of $2$, and $N = \Theta(nd)$.
    \item If $f$ is monotone, so is $g$. If $f$ is $\eps$-far from monotone, then $g$ is $\eps/6$-far
    from monotone.
    \item A single query to $g$ can be simulated by a single query to $f$.
\end{asparaenum}
\end{theorem}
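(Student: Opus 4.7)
}
The plan is to blow up each coordinate by a factor of $k := \lceil N/n \rceil$ where $N$ is the smallest power of $2$ with $N \geq nd$. This gives $N \leq 2nd$, so $N = \Theta(nd)$. Define the monotone surjection $\pi: [N] \to [n]$ by $\pi(y) := \lceil yn/N \rceil$; then each preimage $\pi^{-1}(i)$ is a contiguous interval of size $K$ or $K+1$, where $K := \lfloor N/n \rfloor \geq d$. Define the simulated function
\[
g(y_1,\dots,y_d) \; := \; f(\pi(y_1),\dots,\pi(y_d)).
\]
Property (3) is immediate: a query to $g$ at $y$ is a single query to $f$ at $\pi(y)$. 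The ``monotone-preserves-monotone'' half of (2) follows from $\pi$ being non-decreasing, and (1) holds by choice of $N$. So only the quantitative distance bound in (2) requires work.

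To prove it I will go by the contrapositive: I will show that for every monotone $h:[N]^d \to \{0,1\}$, there is a monotone $f':[n]^d \to \{0,1\}$ with $\mathrm{dist}(f,f') \leq 2e\cdot \mathrm{dist}(g,h) < 6\cdot \mathrm{dist}(g,h)$. For each $x \in [n]^d$ let $B_x := \pi^{-1}(x_1)\times\cdots\times\pi^{-1}(x_d)$; note $g$ is constant on $B_x$ with value $f(x)$, and $K^d \leq |B_x| \leq (K+1)^d$. Define $f'(x)$ to be the majority value of $h$ on $B_x$ (breaking ties toward $0$).

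The key step is verifying that $f'$ is monotone. For adjacent $x \leq x + \be_i$, the blocks $B_x$ and $B_{x+\be_i}$ differ only in the $i$-th factor, where $\pi^{-1}(x_i)$ and $\pi^{-1}(x_i+1)$ are contiguous intervals with the latter lying entirely above the former. Hence for every fixed choice of coordinates outside dimension $i$, monotonicity of $h$ along the corresponding line implies the average of $h$ over $\pi^{-1}(x_i+1)$ is at least that over $\pi^{-1}(x_i)$; averaging over the other coordinates gives $\mathrm{avg}_{B_x} h \leq \mathrm{avg}_{B_{x+\be_i}} h$, so the majority can only grow, which gives $f'(x) \leq f'(x+\be_i)$. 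Iterating over a monotone path proves $f'$ is monotone on $[n]^d$.

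For the distance, decompose
\[
\mathrm{dist}(g,h)\cdot N^d \;=\; \sum_{x\in[n]^d} \bigl|\{y\in B_x : h(y)\neq f(x)\}\bigr|.
\]
For $x$ with $f(x)\neq f'(x)$, the set on the right is exactly the \emph{majority} set $\{y\in B_x: h(y)=f'(x)\}$, of size at least $|B_x|/2 \geq K^d/2$. Dropping the contribution from the remaining $x$'s and using $N^d \leq n^d(K+1)^d \leq n^d K^d (1+1/d)^d \leq e\cdot n^d K^d$ gives
\[
\mathrm{dist}(g,h)\cdot e\cdot n^d K^d \;\geq\; \frac{K^d}{2}\cdot \mathrm{dist}(f,f')\cdot n^d,
\]
hence $\mathrm{dist}(f,f') \leq 2e\cdot \mathrm{dist}(g,h) < 6\cdot \mathrm{dist}(g,h)$. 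Taking $h$ to be the monotone function closest to $g$ shows that if $f$ is $\eps$-far from monotone then $\mathrm{dist}(g,h) \geq \eps/6$, completing (2). The main obstacle (and the only delicate point) is the monotonicity-preservation argument for $f'$; the rest is careful bookkeeping of the block sizes and the choice of $N$ so that $K\geq d$ keeps the blow-up factor $(1+1/K)^d$ below $e$.
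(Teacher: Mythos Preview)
Your proof is correct, but the distance argument takes a genuinely different route from the paper. The construction of $g$ is the same (pull back along a monotone surjection with fiber sizes in $\{K,K+1\}$, $K\geq d$), but for property~(2) the paper works \emph{forward}: it takes a maximal violation matching $\cM$ for $f$ of size $\geq \eps n^d/2$, and for each $(x,y)\in\cM$ builds an explicit injection $\psi_{(x,y)}:\phi^{-1}(x)\to\phi^{-1}(y)$ that shifts by the appropriate number of blocks in each coordinate, producing at least $(d+i)^d$ violated pairs for $g$; summing and comparing $(d+i)^d$ to $(d+i+1)^d$ yields the factor $6$. You instead go by contrapositive: given any monotone $h$ on $[N]^d$, you round it to a monotone $f'$ on $[n]^d$ via block majority, and show $\mathrm{dist}(f,f')\leq 2e\cdot\mathrm{dist}(g,h)$.

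Your approach is arguably cleaner: it avoids the folklore matching characterization of distance to monotonicity and the slightly delicate construction of $\psi_{(x,y)}$, and the monotonicity of block-majority follows from the one-line observation that along each axis-parallel line the average of a monotone Boolean function over the higher interval dominates that over the lower one. The paper's approach, on the other hand, is more constructive---it exhibits the violation structure in $[N]^d$ explicitly---which fits the paper's broader theme of manipulating violation matchings. Both arrive at the same constant because both ultimately compare $K^d$ to $(K+1)^d$ with $K\geq d$.
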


\begin{proof} First, we argue that there exists an integer $0 \leq i \leq d-1$ such that
$[n(d + i), n(d + i+1)]$ contains a power of $2$. This is because this set of disjoint intervals 
covers $[nd, 2nd]$ which contains a power of $2$. 
Fix $i$ to be the integer with the above property, and let $N \in [n(d + i), n(d + i+1)]$ be the power of $2$. Note $N = \Theta(nd)$.

Note that $N = n(d+i+1) - m$ for some non-negative integer $m\leq n$. It is convenient to rewrite this as
$N = m(d+i) + (n-m)(d+i+1)$. We define a map $\phi:[N] \to [n]$ as follows. If $y \leq m(d+i)$, set $\phi(y) = \lfloor \frac{y}{d+i} \rfloor$.
If $y > m(d+i)$, set $\phi(y) = m + \lfloor \frac{y-m(d+i)}{d+i+1} \rfloor$. It is instructive
to think of $\phi^{-1}$ which maps the first $m$ elements in $[n]$ to intervals in $[N]$ of size $d+i$,
and the remaining $n-m$ elements to intervals in $[N]$ of size $d+i+1$.
Abusing notation, we define $\phi:[N]^d \to [n]^d$ as 
$\phi(y_1, y_2, \ldots, y_d) = (\phi(y_1), \phi(y_2), \ldots, \phi(y_d))$.
Given any $x\in [n]^d$, note that $\phi^{-1}(x)$ is a sub-hypergrid in $[N]^d$ and for distinct $x,y\in [n]^d$
the sub-hypergrids $\phi^{-1}(x)$ and $\phi^{-1}(y)$ are disjoint. Finally the number of points in these sub-hypergrids
are between $(d+i)^d$ and $(d+i+1)^d$.

We define $g:[N]^d \to [n]^d$ as $g(y) = f(\phi(y))$. A single query
to $g$ can be answered using a single query to $f$. Since $\phi$ is monotone,
if $f$ is monotone, then so is $g$.

Suppose $f$ is $\eps$-far from monotone. Then there exists a matching 
$\cM$ of violated pairs in $[n]^d$ with $|\cM|\geq \eps n^d/2$. 
We now construct a matching in $[N]^d$ of size $\geq \eps N^d/6$
of pairs which violate monotonicity 
in $g$. 

Consider any pair $(x,y)$ in $\cM$ with $x \prec y$.
For each $j \in [d]$, let $\alpha_j$ and $\beta_j$ denote the number of length $d+i$ and length $d+i+1$ intervals, respectively, seperating $x_j$ from $y_j$.
Define the function $\psi_{(x,y)}:\phi^{-1}(x) \to \phi^{-1}(y)$ such that for $u \in \phi^{-1}(x)$, $\psi_{(x,y)}(u) := v$ where, for every $j \in [d]$, $v_j := u_j$ when $y_j = x_j$ and $v_j := u_j + \alpha_j(d+i) + \beta_j(d+i+1)$ when $y_j \neq x_j$. Thus, $\psi_{(x,y)}$ maps $u$ to $v$ which for each $j \in [d]$ is $y_j - x_j$ many sub-grids away from $u$ in dimension $j$. Observe that $\psi_{(x,y)}$ is injective (not necessarily bijective since possibly $|\phi(x)^{-1}| < |\phi(y)^{-1}|$) and for any $u \in \phi^{-1}(x)$: (a) $u \prec \psi_{(x,y)}(u)$ and (b) $g(u) > g(\psi_{(x,y)}(u))$.

Now, for $(x,y) \in \cM$, define $M_{(x,y)} := \{(u,\psi_{(x,y)}(u)) : u \in \phi_{-1}(x)\}$ and observe that this is a matching of violated pairs in $[N]^d$ with respect to $g$ and $|M_{(x,y)}| \geq (d+i)^d$. Moreover, for two different pairs $(x,y),(x',y') \in \cM$, $M_{(x,y)}$ and $M_{(x',y')}$ are disjoint and their union forms a matching.

Thus, taking the union over all pairs in $\cM$,
there is a violation matching (for $g$) of size at least 
\[
\frac{\eps n^d}{2} \cdot (d+i)^d \geq \frac{\eps}{2} \cdot [n(d+i+1)]^d \cdot \frac{(d+i)^d}{(d+i+1)^d} \geq \frac{\eps N^d}{6}
\]
and so $g$ is $\geq \eps / 6$-far from monotone. \end{proof}

\section{Fourier Analysis on $\AH$}\label{sec:fourier}

The following definitions and facts are due to Blais et al \cite{BlRY14}. Note that this section relies heavily on $n$ being a power of $2$. Indeed this is assumed in \cite{BlRY14}. Refer to \Sec{power} for our reduction to this case and to \Sec{discussion_power} for our discussion on this point.

\begin{definition} [$1$-dimensional Walsh functions] \label{def:walsh} For $i \in [\log n]$, define $w_i: [n] \to \{-1,1\}$ by $w_i(x) := (-1)^{bit_i(x-1)}$. For any $S \subseteq [\log n]$, the \textit{Walsh function} $w_S: [n] \to \{-1,1\}$ corresponding to $S$ is $$w_S(x) := \prod_{i\in S} w_i(x) \text{.}$$ If $S = \emptyset$, then $w_S(x) := 1$. 
\end{definition}

\begin{definition} [$d$-dimensional Walsh functions] \label{def:walsh_d} The \textit{d-dimensional Walsh function} $w_{\mathbf{S}}: [n]^d \to \{-1,1\}$ corresponding to the vector $\mathbf{S} = (S_1, ..., S_d)$ of subsets $S_i \subseteq [\log n]$ is defined by $$w_{\mathbf{S}}(x_1, ...,x_d) := \prod_{i=1}^d w_{S_i}(x_i) \text{.}$$
\end{definition}

The set of $d$-dimensional Walsh functions form a fourier basis for the set of functions over $[n]^d$. That is, every function $f: [n]^d \to \{-1,1\}$ can be expressed as $$f(x) = \sum_{\substack{\mathbf{S} = (S_1, ..., S_d) \\ S_i \subseteq [\log n]}} \widehat{f}(\mathbf{S}) w_{\mathbf{S}}(x)$$ where $\widehat{f}(\mathbf{S})$ is the fourier coefficient of $f$ corresponding to $\mathbf{S}$. The fourier coefficients both contain useful information about $f$ and satisfy some convenient properties. The following two facts are proved by \cite{BlRY14}.

\begin{fact}\label{fact:1} For any $\mathbf{S},\mathbf{T}$, the $d$-dimensional Walsh function $w_{\mathbf{S} \Delta \mathbf{T}}:[n]^d \to \{-1,1\}$ corresponding to the symmetric difference of $\mathbf{S}$ and $\mathbf{T}$ satisfies $w_{\mathbf{S} \Delta \mathbf{T}} = w_{\mathbf{S}} \cdot w_{\mathbf{T}}$. 
\end{fact}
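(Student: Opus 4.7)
\textbf{Proof proposal for Fact~\ref{fact:1}.} The plan is to prove the identity first in one dimension, then lift to $d$ dimensions by the product structure built into \Def{walsh_d}. For any single coordinate $x \in [n]$ and any $S, T \subseteq [\log n]$, I would compute
\[
w_S(x) \cdot w_T(x) \;=\; \prod_{i \in S} w_i(x) \cdot \prod_{i \in T} w_i(x) \;=\; \prod_{i \in S \cup T} w_i(x)^{\,\bone[i \in S] + \bone[i \in T]}.
\]
Since each $w_i(x) \in \{-1,+1\}$, we have $w_i(x)^2 = 1$, so the indices in $S \cap T$ contribute a factor of $1$ and can be dropped. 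The exponent is $1$ on precisely the indices in $(S \cup T) \setminus (S \cap T) = S \Delta T$, so the product collapses to
\[
\prod_{i \in S \Delta T} w_i(x) \;=\; w_{S \Delta T}(x),
\]
using \Def{walsh}. This handles the $d=1$ case.

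For the $d$-dimensional statement, I would interpret $\mathbf{S}\Delta\mathbf{T}$ as the coordinatewise symmetric difference $(S_1 \Delta T_1, \dots, S_d \Delta T_d)$. Applying \Def{walsh_d} to each factor and then invoking the one-dimensional identity on each coordinate yields
\[
w_{\mathbf{S}}(x) \cdot w_{\mathbf{T}}(x) \;=\; \prod_{i=1}^d w_{S_i}(x_i) \cdot \prod_{i=1}^d w_{T_i}(x_i) \;=\; \prod_{i=1}^d \bigl( w_{S_i}(x_i) \cdot w_{T_i}(x_i) \bigr) \;=\; \prod_{i=1}^d w_{S_i \Delta T_i}(x_i) \;=\; w_{\mathbf{S} \Delta \mathbf{T}}(x),
\]
which is the claimed identity.

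There is no real obstacle here: the argument is a direct computation driven entirely by the fact that $w_i(x)^2 = 1$ (because the Walsh characters take values in $\{-1,+1\}$) together with the product-over-coordinates definition of $w_{\mathbf{S}}$. The only thing to be careful about is fixing the convention that $\mathbf{S}\Delta\mathbf{T}$ denotes coordinatewise symmetric difference, and that the empty-set convention $w_\emptyset(x) = 1$ from \Def{walsh} correctly handles coordinates where $S_i = T_i$.
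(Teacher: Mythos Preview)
Your proposal is correct. The paper does not actually prove \Fact{1} itself; it simply attributes the fact to Blais et al.~\cite{BlRY14} (``The following two facts are proved by \cite{BlRY14}''), so there is no in-paper proof to compare against. Your argument---reducing to the one-dimensional case via $w_i(x)^2=1$ and then extending coordinatewise using the product definition in \Def{walsh_d}---is exactly the standard justification one would expect, and your remark about interpreting $\mathbf{S}\Delta\mathbf{T}$ as the coordinatewise symmetric difference is the right reading.
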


\begin{fact}\label{fact:2} $\Expect_{x \in [n]^d}[w_{\mathbf{S}}(x)] = 0$ unless $\mathbf{S} = \emptyset^d$, in which case $\Expect_{x \in [n]^d}[w_{\mathbf{S}}(x)] = 1$. \\
\end{fact}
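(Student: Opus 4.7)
The plan is to exploit the product structure of the $d$-dimensional Walsh function together with the independence of the coordinates of a uniformly random $x \in [n]^d$, reducing everything to a one-dimensional computation. By \Def{walsh_d} and independence,
\[
\Expect_{x \in [n]^d}[w_{\mathbf{S}}(x)] = \Expect_{x_1,\ldots,x_d}\left[\prod_{i=1}^d w_{S_i}(x_i)\right] = \prod_{i=1}^d \Expect_{x_i \in [n]}[w_{S_i}(x_i)],
\]
so it suffices to show that for each $i$, $\Expect_{x_i \in [n]}[w_{S_i}(x_i)] = 1$ when $S_i = \emptyset$ and $= 0$ otherwise.

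The first case is immediate from \Def{walsh}, since $w_{\emptyset} \equiv 1$. For the second case, I would use that $n$ is a power of $2$ (see \Sec{discussion_power}); as $x$ ranges uniformly over $[n]$, the vector of bits $(\mathrm{bit}_1(x-1), \ldots, \mathrm{bit}_{\log n}(x-1))$ is distributed uniformly over $\{0,1\}^{\log n}$. Then
\[
w_{S}(x) = \prod_{i \in S} (-1)^{\mathrm{bit}_i(x-1)} = (-1)^{\sum_{i \in S}\mathrm{bit}_i(x-1)}.
\]
Fix any $j \in S$ (which exists since $S \neq \emptyset$). Conditioning on the remaining bits $\{\mathrm{bit}_i(x-1) : i \neq j\}$, the bit $\mathrm{bit}_j(x-1)$ is still uniform in $\{0,1\}$, so the conditional expectation of $(-1)^{\mathrm{bit}_j(x-1)}$, and hence of $w_S(x)$, equals $0$. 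By the tower rule the overall expectation is $0$.

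Putting this together, if $\mathbf{S} = \emptyset^d$ then every factor in the product equals $1$ and the total expectation is $1$; otherwise some $S_i$ is non-empty, the corresponding factor vanishes, and the product is $0$. There is no real obstacle beyond bookkeeping; the crucial input is that $n$ is a power of $2$, which guarantees the clean identification of $[n]$ with $\{0,1\}^{\log n}$ used in the one-dimensional step.
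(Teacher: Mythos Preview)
Your proof is correct. The paper does not actually prove this fact; it simply states it as a result from \cite{BlRY14}, so there is no in-paper argument to compare against. Your reduction to the one-dimensional case via the product structure and independence, followed by the bit-flip argument (valid since $n$ is a power of $2$), is the standard and fully rigorous way to establish it.
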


The following lemma gives a useful interpretation of the coefficient $\widehat{f}(\mathbf{S})$ and follows from the above facts.

\begin{lemma}\label{lem:fourier_coeff} $\widehat{f}(\mathbf{S}) = \Expect_{x \in [n]^d} [f(x)w_{\mathbf{S}}(x)]$. 
\end{lemma}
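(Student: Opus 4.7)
The plan is to derive the identity as a direct consequence of the orthogonality of the Walsh basis furnished by \Fact{1} and \Fact{2}. First I would substitute the Fourier expansion $f(x) = \sum_{\mathbf{T}} \widehat{f}(\mathbf{T}) w_{\mathbf{T}}(x)$ into the right-hand side and use linearity of expectation (valid because both the sum and the domain are finite) to get
\[
\Expect_{x \in [n]^d}\left[f(x)\, w_{\mathbf{S}}(x)\right] \;=\; \sum_{\mathbf{T}} \widehat{f}(\mathbf{T}) \cdot \Expect_{x \in [n]^d}\left[w_{\mathbf{T}}(x)\, w_{\mathbf{S}}(x)\right].
\]

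Next I would invoke \Fact{1} to collapse the product $w_{\mathbf{T}}(x)\, w_{\mathbf{S}}(x) = w_{\mathbf{S} \Delta \mathbf{T}}(x)$, reducing each inner expectation to $\Expect_x[w_{\mathbf{S} \Delta \mathbf{T}}(x)]$. Then by \Fact{2}, this expectation vanishes except when $\mathbf{S} \Delta \mathbf{T} = \emptyset^d$ (i.e.\ $S_i = T_i$ for every $i \in [d]$, equivalently $\mathbf{T} = \mathbf{S}$), in which case it equals $1$. Only the $\mathbf{T} = \mathbf{S}$ term therefore survives in the sum, leaving $\widehat{f}(\mathbf{S})$, as required.

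I do not anticipate any real obstacle, as this is the standard orthogonality computation that extracts Fourier coefficients from the Parseval-type basis $\{w_{\mathbf{S}}\}$. The only point meriting a moment's care is that the symmetric difference in \Fact{1} is interpreted coordinate-wise on $d$-tuples of subsets, so that $\emptyset^d$ in \Fact{2} is the unique index killing the expectation; this is immediate from \Def{walsh_d} applied to \Fact{1}.
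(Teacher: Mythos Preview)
Your proposal is correct and follows essentially the same approach as the paper's own proof: expand $f$ in the Walsh basis, use \Fact{1} to write $w_{\mathbf{T}}w_{\mathbf{S}} = w_{\mathbf{S}\Delta\mathbf{T}}$, then apply \Fact{2} so that only the $\mathbf{T}=\mathbf{S}$ term survives. The only cosmetic difference is that the paper first simplifies the pointwise product before taking expectations, whereas you take the expectation first and then simplify; the arguments are otherwise identical.
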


\begin{proof} Using the fourier expansion of $f$ and applying \Fact{1} we get $$f(x)w_{\mathbf{S}}(x) = \sum_{\mathbf{T}} \widehat{f}(\mathbf{T}) w_{\mathbf{T}}(x) \cdot w_{\mathbf{S}}(x) = \sum_{\mathbf{T}} \widehat{f}(\mathbf{T}) w_{\mathbf{S} \Delta \mathbf{T}}(x) \text{.}$$ Applying \Fact{2}, taking the expectation of each side and applying linearity yields $$\Expect_{x \in [n]^d} [f(x)w_{\mathbf{S}}(x)] = \sum_{\mathbf{T}} \widehat{f}(\mathbf{T}) \Expect_{x \in [n]^d} [w_{\mathbf{S} \Delta \mathbf{T}}(x)] = \widehat{f}(\mathbf{S})w_{\emptyset^d} = \widehat{f}(\mathbf{S}) \text{.}$$ \end{proof}

\begin{lemma} [Parseval's for Hypergrids] \label{lem:Parseval} For any function $f:[n]^d \to \{-1,1\}$, $\sum_{\mathbf{S}} \widehat{f}(\mathbf{S})^2 = 1$.
\end{lemma}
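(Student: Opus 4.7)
The plan is to derive Parseval's identity by evaluating $\EX_{x \in [n]^d}[f(x)^2]$ in two different ways, which is the standard route whenever one has an orthonormal basis (here, the $d$-dimensional Walsh functions).

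First I would use the fact that $f:[n]^d \to \{-1,1\}$ implies $f(x)^2 = 1$ pointwise, so trivially $\EX_{x \in [n]^d}[f(x)^2] = 1$. Next, I would expand $f$ in the Walsh basis and compute
\[
f(x)^2 \;=\; \Big(\sum_{\mathbf{S}} \widehat{f}(\mathbf{S}) w_{\mathbf{S}}(x)\Big)\Big(\sum_{\mathbf{T}} \widehat{f}(\mathbf{T}) w_{\mathbf{T}}(x)\Big) \;=\; \sum_{\mathbf{S},\mathbf{T}} \widehat{f}(\mathbf{S})\widehat{f}(\mathbf{T})\, w_{\mathbf{S}}(x) w_{\mathbf{T}}(x).
\]
Applying \Fact{1}, each product $w_{\mathbf{S}}(x) w_{\mathbf{T}}(x)$ collapses to $w_{\mathbf{S}\Delta\mathbf{T}}(x)$.

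Then I would take the expectation over uniform $x \in [n]^d$ and exchange expectation with the (finite) double sum by linearity, giving
\[
1 \;=\; \EX_{x}[f(x)^2] \;=\; \sum_{\mathbf{S},\mathbf{T}} \widehat{f}(\mathbf{S})\widehat{f}(\mathbf{T})\, \EX_{x}[w_{\mathbf{S}\Delta\mathbf{T}}(x)].
\]
By \Fact{2}, $\EX_x[w_{\mathbf{S}\Delta\mathbf{T}}(x)]$ vanishes unless $\mathbf{S}\Delta\mathbf{T} = \emptyset^d$, i.e.\ $\mathbf{S}=\mathbf{T}$, in which case it equals $1$. Hence only the diagonal terms survive, yielding $\sum_{\mathbf{S}} \widehat{f}(\mathbf{S})^2 = 1$.

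There is no real obstacle here; the argument is purely formal once \Fact{1} and \Fact{2} are in hand. The only subtlety worth noting is that \Fact{2} requires $n$ to be a power of $2$ for the Walsh functions of \Def{walsh} to sum to zero over $[n]$, which is exactly the standing assumption flagged in \Sec{discussion_power}. Everything else is linearity of expectation and a collapse of a double sum to its diagonal.
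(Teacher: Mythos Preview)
Your proposal is correct and follows essentially the same route as the paper's proof: expand $f(x)^2$ via the Walsh basis, apply \Fact{1} to rewrite $w_{\mathbf{S}}w_{\mathbf{T}}$ as $w_{\mathbf{S}\Delta\mathbf{T}}$, push the expectation inside by linearity, and invoke \Fact{2} so that only the diagonal $\mathbf{S}=\mathbf{T}$ terms survive. Your added remark about the power-of-$2$ assumption is accurate but not part of the paper's proof itself (it is noted at the start of the section).
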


\begin{proof} 
\begin{align*}
    1 = \Expect_{x} [f(x)^2] &= \Expect_{x} \Big[\big(\sum_{\mathbf{S}} \widehat{f}(\mathbf{S}) w_{\mathbf{S}}(x)\big)^2\Big] \nonumber = \Expect_{x} \Big[ \sum_{\mathbf{S},\mathbf{T}} \widehat{f}(\mathbf{S})\widehat{f}(\mathbf{T}) w_{\mathbf{S}}(x)w_{\mathbf{T}}(x) \Big] \nonumber \\
      &= \sum_{\mathbf{S},\mathbf{T}} \widehat{f}(\mathbf{S})\widehat{f}(\mathbf{T}) \Expect_x[w_{\mathbf{S} \Delta \mathbf{T}}(x)] = \sum_{\mathbf{S}} \widehat{f}(\mathbf{S})^2
\end{align*}

where the last equality is due to \Fact{2}. \end{proof}

For $i \in [d]$ and $j \in [\log n]$, define $\be_{ij} := (S_1, ..., S_d)$  where $S_i = \{j\}$ and $S_{i'} = \emptyset$ for all $i' \neq i$. By the previous lemma we have $\hat{f}(\be_{ij}) = \Expect_{x} [f(x) w_{\be_{ij}}(x)]$ and by definition we see that 

\[ w_{\be_{ij}}(x) = 
    \begin{cases} 
      1 & \text{ if } x_i \pmod{2^{j+1}} < 2^j \\
      -1  & \text{ if } x_i \pmod{2^{j+1}} \geq 2^j 
   \end{cases} \text{.}
\]

That is, $w_{\be_{ij}}(x)$ is 
$1$ when $x$ is a \emph{lower} endpoint in the matching $H_{i,j}^0$ and $-1$ when $x$ is an \emph{upper} endpoint in $H_{i,j}^0$ (recall \Sec{augmented_hypergrid}). 
Let $I^+_{f,i,j}$, $I^-_{f,i,j}$ denote the positive and negative influence, respectively, of $f$ restricted to the edges in $H_{i,j}^0$. Combining this with \Lem{fourier_coeff} we make the following useful observation which we use to prove \Lem{influence}.

\begin{observation}\label{obs:walsh}$\widehat{f}(\be_{ij}) = \frac{1}{2} \Expect_{(x,y)\in H^0_{i,j}} [f(x) - f(y)]$. \end{observation}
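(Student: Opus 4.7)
The plan is to derive the observation directly from \Lem{fourier_coeff} and the explicit form of $w_{\be_{ij}}$ displayed just before the observation. By \Lem{fourier_coeff}, we have $\widehat{f}(\be_{ij}) = \Expect_{x \in [n]^d}[f(x) w_{\be_{ij}}(x)]$, and the displayed formula for $w_{\be_{ij}}$ tells us that $w_{\be_{ij}}(x) = +1$ exactly when $x$ is a lower endpoint in the matching $H_{i,j}^0$, and $w_{\be_{ij}}(x) = -1$ exactly when $x$ is an upper endpoint. Since $H_{i,j}^0$ is a \emph{perfect} matching of $[n]^d$ (as noted right after the partition $\mathbf{H}$ is defined in \Sec{augmented_hypergrid}), every vertex plays exactly one of these two roles.

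From there the plan is a one-line accounting. Writing $L$ (respectively $U$) for the set of lower (respectively upper) endpoints of $H_{i,j}^0$, we get
\[
\widehat{f}(\be_{ij}) \;=\; \frac{1}{n^d}\!\left(\sum_{x \in L} f(x) \;-\; \sum_{x \in U} f(x)\right).
\]
On the other hand, since $H_{i,j}^0$ contains exactly $n^d/2$ edges and each lower endpoint $x \in L$ is matched to a unique upper endpoint $y \in U$, the uniform distribution over edges of $H_{i,j}^0$ gives
\[
\tfrac{1}{2}\Expect_{(x,y)\in H^0_{i,j}}[f(x)-f(y)] \;=\; \frac{1}{n^d}\sum_{(x,y)\in H^0_{i,j}} (f(x)-f(y)) \;=\; \frac{1}{n^d}\!\left(\sum_{x \in L} f(x) - \sum_{y \in U} f(y)\right),
\]
which matches the previous expression and yields the observation.

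There is no real obstacle here: the only thing to be careful about is that $H_{i,j}^0$ is a perfect matching so that summing $w_{\be_{ij}}(x)$ over $[n]^d$ decomposes cleanly into $|L|$ contributions of $+1$ and $|U|=|L|$ contributions of $-1$, with $|L|=|U|=n^d/2$. This is exactly the structural fact that distinguishes $H^0_{i,j}$ from $H^1_{i,j}$ and is precisely why the observation is stated for the $H^0$ matchings only.
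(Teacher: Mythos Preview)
Your proof is correct and follows exactly the approach the paper intends: the paper itself does not write out a proof but simply says the observation follows by ``combining'' the explicit form of $w_{\be_{ij}}$ with \Lem{fourier_coeff}, and your argument spells out precisely that computation. The only ingredient beyond those two facts is that $H^0_{i,j}$ is a perfect matching (so $|L|=|U|=n^d/2$), which you correctly invoke and which the paper notes in \Sec{augmented_hypergrid}.
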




 


\section{Proof of \Thm{edge}}\label{sec:edge}

    In this section we prove a lower bound on the number of violating edges in $\AH$ using alternating paths machinery developed by Chakrabarty and Seshadhri in \cite{ChSe13-j,ChSe13}. 

\subsection{Structure of the Edge Matchings}




Recall from \Sec{augmented_hypergrid} that $\boldsymbol{H} := \{H_{i,a}^c | i \in [d], a \in [\log n], c \in \{0,1\} \}$ is the collection of edge matchings in $\AH$. For a matching $H \in \boldsymbol{H}$, let $L(H) := \{x|\exists (x,y)\in H\}$ and $U(H):= \{y|\exists (x,y)\in H\}$ denote the lower and upper endpoints, respectively. For each $H$, partition $M$ into three sets: $(x,y)$'s that use a $H$-edge on any shortest path from $x$ to $y$ ($H$-cross pairs), $(x,y)$'s that do not use a $H$-edge on any shortest path from $x$ to $y$ \textit{and} have both $x$ and $y$ in the same $L(H)$ or $U(H)$ ($H$-straight pairs) and $(x,y)$'s that do not use a $H$-edge on any shortest path from $x$ to $y$ \textit{and} have one endpoint in $L(H)$ and one in $U(H)$ ($H$-skew pairs). That is, \\

\noindent - $cr_{H_{i,a}^c}(M)$: $(x,y) \in M$ such that $x$ to $y$ shortest paths contain an edge from $H$ and $x \in L(H_{i,a}^c)$. 

\noindent - $st_{H_{i,a}^c}(M)$: $(x,y) \in M$ such that $x$ to $y$ shortest paths \textit{do not} contain an edge from $H$ and $x,y \in L(H_{i,a}^c)$ or $x,y \in U(H_{i,a}^c)$.

\noindent - $sk_{H_{i,a}^c}(M)$: $(x,y) \in M$ such that $x$ to $y$ shortest paths \textit{do not} contain an edge from $H$ and $x \in L(H_{i,a}^c)$, $y \in U(H_{i,a}^c)$ or $x \in U(H_{i,a}^c)$, $y \in L(H_{i,a}^c)$. \\

When $(x,y) \in cr_{H}(M)$, we say $(x,y)$ \textit{crosses} $H$. 

\subsection{The Potential Function}

For a pair $(x,y)$ and a matching $H := H_{i,a}^c \in \boldsymbol{H}$, define 

$$\mu_H(x,y) = 
\begin{cases} 
    \frac{1}{2^a} \text{ if }  x,y \in L(H) \text{ or } x,y \in U(H) \\
    0 \text{ if }  x \in L(H), y \in U(H) \text{ or } y \in L(H), x \in U(H) \text{.}
\end{cases}
$$

Note that $\mu_H(x,y)$ is $0$ for $H$-skew and $H$-crossing pairs while it is positive only for $H$-straight pairs. Importantly, this positive value is larger for $H$ with small step size and in particular $1/2^a > \sum_{a'>a} 1/2^{a'}$. Thus, the following potential is designed to correct $H$-skew pairs by aligning endpoints with respect to $H_{i,a}$ for which $a$ is small:

$$\Phi(M) = \sum_{(x,y)\in M} \sum_{H \in \boldsymbol{H}} \mu_{H}(x,y) \text{.}$$

Note that $\Phi(\cdot)$ has the same effect as the potential function described in \cite{ChSe13}.

\subsection{Main Proof}

The following lemma establishes that every $H$-cross pair implies the existence of a unique violating $H$-edge. 

\begin{lemma}\label{lem:violations}
    $H_{i,a}^c$ contains at least $|cr_{H_{i,a}^c}(M)|/2$ violations. 
\end{lemma}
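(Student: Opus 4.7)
My plan is a charging argument: associate to each pair in $cr_H(M)$ a violated edge of $H$, with at most two pairs charged to each $H$-edge. Fix $H = H_{i,a}^c$ and $(x,y) \in cr_H(M)$. Since $x \in L(H)$, let $x^+$ denote the $H$-partner of $x$, i.e., the vertex obtained from $x$ by incrementing coordinate $i$ by $2^a$. Without loss of generality, the shortest $x$-to-$y$ path using an $H$-edge can be taken to begin with the edge $(x, x^+)$, so $x^+ \preceq y$. If $y \in U(H)$, let $y^-$ be the $H$-partner of $y$ (coordinate $i$ decremented by $2^a$); then $x \preceq y^-$ as well.

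The case analysis proceeds as follows. Since $f(x) = 1$ and $f(y) = 0$:
\begin{itemize}
\item If $f(x^+) = 0$, the edge $(x, x^+) \in H$ is a violation; charge $(x, y)$ to it.
\item Otherwise, if $y \in U(H)$ and $f(y^-) = 1$, the edge $(y^-, y) \in H$ is a violation; charge $(x, y)$ to it.
\item Otherwise, $f(x^+) = 1$ and $f(y^-) = 0$, and I argue for a contradiction with the optimality of $M$.
\end{itemize}
Each violated $H$-edge $(u, v)$ is charged at most twice: once by a pair of the form $(u, \cdot)$ via the first sub-case, and once by a pair of the form $(\cdot, v)$ via the second. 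This yields $|H \cap S_f^-| \geq |cr_H(M)|/2$, provided that the third sub-case never occurs.

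To rule out the third sub-case, observe that $(x, y^-)$ and $(x^+, y)$ are both violated pairs whose $d_{\mathbf{A}}$-distance is strictly less than $d_{\mathbf{A}}(x, y)$. I plan to use an alternating-path swap, in the spirit of Chakrabarty--Seshadhri \cite{ChSe13-j}: replace $(x, y)$ in $M$ by $(x, y^-)$, extending to an alternating augmentation if $y^-$ is already matched in $M$. This produces a matching $M'$ of the same cardinality as $M$ but with strictly smaller average $d_{\mathbf{A}}$-distance, contradicting the choice of $M$ as a minimum-average-distance maximal matching of violations.

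The main obstacle will be cases where the relevant $H$-edge on the shortest $x$-to-$y$ path does not originate at $x$ --- for instance, when $H = H^1_{i,a}$ is non-perfect and $y \notin U(H)$, or when the path uses several dimension-$i$ edges of differing lengths. In such situations, I would identify the specific $H$-edge $(u, u^+)$ actually used by the path, where $u \in L(H)$ with $x \preceq u$ and $u^+ \preceq y$, and carry out the case analysis at $(u, u^+)$ rather than $(x, x^+)$. The potential $\Phi$ defined in this section would enter when the direct distance-decreasing swap is unavailable: its purpose is to detect conversions of cross pairs into straight pairs that strictly increase $\Phi$, again contradicting $M$'s optimality among min-average-distance maximal matchings of violations.
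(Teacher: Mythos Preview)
Your charging skeleton (cases 1 and 2, together with the at-most-two-charges bound) is clean and correct as far as it goes, but the proposal has a genuine gap: your third case, and the sub-case $y \notin U(H)$, are where all the difficulty lies, and you have not actually handled them.

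Concretely, in case 3 you want to replace $(x,y)$ by $(x,y^-)$ to drop the average distance. This works immediately only if $y^-$ is $M$-unmatched; when $y^-$ is already matched, your proposed ``alternating augmentation'' is unspecified and is precisely the hard part. One cannot simply chase an arbitrary alternating path: you must track function values, $L(H)/U(H)$ membership, and whether the next $M$-pair is straight, cross, or skew with respect to $H$, and each branch needs its own argument (sometimes decreasing average distance, sometimes keeping it fixed while increasing $\Phi$). Your last paragraph acknowledges these obstacles but does not supply the arguments; in particular you never say what happens when the chased vertex is $sk_H(M)$-matched, which is exactly the case that forces the potential $\Phi$ and its specific $1/2^a$ weights.

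The paper does \emph{not} use your direct charging. Instead, for each $x$ with $(x,y)\in cr_H(M)$ it builds an explicit alternating sequence $S_x$: start at $x$, step along $H$ to $H(x)$, then along $st_H(M)$ (straight pairs only), then along $H$ again, and so on. Two structural claims (analogues of Claims~2.9.1--2.9.2 in \cite{ChSe13-j}) pin down the $f$-values and $L(H)/U(H)$ membership along $S_x$. The heart of the proof is a three-way case split on how $S_x$ could terminate \emph{without} hitting a violated $H$-edge --- at an $M$-unmatched vertex, at a $cr_H(M)$-matched vertex, or at a $sk_H(M)$-matched vertex. The first two yield a rematching with strictly smaller average distance; the third yields equal average distance but strictly larger $\Phi$. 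All three contradict the choice of $M$. Finally, two sequences $S_x,S_{x'}$ are vertex-disjoint unless one's start terminates the other, so there are at least $|cr_H(M)|/2$ disjoint sequences, each containing a distinct violated $H$-edge. Thus the factor of two comes from the disjoint-sequence count, not from a per-edge double-charge as in your plan.

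Your approach could conceivably be completed, but doing so would essentially force you to rebuild this alternating-sequence machinery; the direct charging does not shortcut it.
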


\Lem{violations} is the main tool for proving \Thm{edge}. We defer the proof to \Sec{violations_proof} and proceed with the proof of \Thm{edge}. \\

\noindent \textbf{Proof of \Thm{edge}:} Let $M$ be a maximal matching in the violation graph of $f$ with the \textit{smallest} average length, $$r = |M|^{-1} \sum_{(x,y) \in M} d_{\AH}(x,y)$$ and among such matchings, be one that maximizes $\Phi(M)$. Notice that each $(x,y) \in M$ crosses exactly $d_{\AH}(x,y)$ matchings $H \in \boldsymbol{H}$. Moreover, $H$ contains at least $|cr_H(M)|/2$ violations by \Lem{violations}. Let $\Delta f$ denote the total number of violating edges in $\AH$ and let $\Delta f |_H$ denote the number of violating $H$-edges. We have

$$\Delta f = \sum_{H \in \boldsymbol{H}} \Delta f|_H \geq \frac{1}{2}\sum_{H \in \boldsymbol{H}} |cr_{H}(M)| = \frac{1}{2} \sum_{(x,y)\in M} d_{\AH}(x,y) = \frac{r|M|}{2} \geq \frac{r\epsilon n^d}{4} \text{.}$$ 

\subsection{Proof of \Lem{violations}}\label{sec:violations_proof}

Let $H$ denote an arbitrary $H_{i,a}^c$-matching and let $X := \{x : (x,y) \in cr_{H}(M)\}$ be the set of start points of $cr_{H}(M)$. For $x \in X$, define the sequence $S_x$ as follows:

\begin{itemize} [noitemsep]
	\item $S_x(0) = x$.
	\item $S_x(j+1) = H(S_x(j))$ if $j$ is even. Terminate at $j+1$ if $S_x(j)$ and $H(S_x(j))$ are a violation to monotonicity.
	\item $S_x(j+1) = M(S_x(j))$ if $j$ is odd. Terminate if $S_x(j)$ is $st_H(M)$-unmatched. 
\end{itemize}

We will show that every $S_x$ contains a violating edge in $H$ and there are at least $|X|/2$ disjoint sequences. For notational simplicity, let $s_j := S_x(j)$ where $S_x(j)$ is defined and let $s_{-1} := y$. For simplicity, we will also use $d$ in place of $d_{\AH}$. 

Note that $S_x$ alternates between the matchings $H$ and $st_H(M)$. In particular, $S_x$ never follows a skew-pair. Thus the structure of $S_x$ on $\AH$ is in many senses the same as the structure of $S_x$ on the hypercube $\{0,1\}^d$, where there are no skew-pairs. Thus, we get the following two claims regarding the structure of $S_x$ which have proofs identical to the proofs of Claims 2.9.1. and 2.9.2 in \cite{ChSe13-j}. 

\begin{claim}\label{clm:S_struc_1}
If $s_j$ is defined, then the following holds:
    \begin{itemize} [noitemsep]
        \item $j \equiv 0 \pmod{4}$ $\Longleftrightarrow$ $f(s_j) = 1$ and $s_j \in L(H)$. 
        \item $j \equiv 1 \pmod{4}$ $\Longleftrightarrow$ $f(s_j) = 1$ and $s_j \in U(H)$. 
        \item $j \equiv 2 \pmod{4}$ $\Longleftrightarrow$ $f(s_j) = 0$ and $s_j \in U(H)$. 
        \item $j \equiv 3 \pmod{4}$ $\Longleftrightarrow$ $f(s_j) = 0$ and $s_j \in L(H)$. 
    \end{itemize}
\end{claim}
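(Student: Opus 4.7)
The plan is induction on $j$, maintaining two invariants in parallel: the side ($L(H)$ versus $U(H)$) on which $s_j$ sits, and the value $f(s_j) \in \{0,1\}$. The base case $j = 0$ follows from the setup: $s_0 = x$, and because $(x,y) \in cr_H(M)$ by definition of $X$, we have $s_0 = x \in L(H)$; since $(x,y) \in M$ is a violation with $x$ below $y$ in the poset, $f(s_0) = 1$. This matches the $j \equiv 0 \pmod{4}$ case.

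For the inductive step I would split on the parity of $j$. When $j$ is even, the step $s_{j+1} = H(s_j)$ traverses an $H$-edge, which toggles between $L(H)$ and $U(H)$. Non-termination says $(s_j, H(s_j))$ is not a violation, which, combined with the inductive value $f(s_j)$ and the fact that the side-status determines who is the lower endpoint in the poset, forces $f(s_{j+1}) = f(s_j)$. Concretely, the transition $j \equiv 0 \to j \equiv 1$ preserves $f = 1$ (otherwise the lower endpoint $s_j$ has value $1$ and the upper endpoint $s_{j+1}$ has value $0$, a violation), and symmetrically the transition $j \equiv 2 \to j \equiv 3$ preserves $f = 0$.

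When $j$ is odd, the step $s_{j+1} = M(s_j)$ follows a pair in $st_H(M)$, which by the very definition of the ``straight'' class keeps $s_{j+1}$ on the same side of $H$ as $s_j$. Since $(s_j, s_{j+1}) \in M$ is a violating pair in the violation graph of $f$, the value of $f$ flips, and the orientation in the poset is pinned down by the inductive value $f(s_j)$ (the endpoint with $f$-value $1$ must be the lower one). This yields the transitions $j \equiv 1 \to j \equiv 2$ and $j \equiv 3 \to j \equiv 0$. Iterating the four transitions reproduces the cycle in the statement, and the $\Longleftrightarrow$ follows because each step determines $(\text{side of } s_{j+1},\, f(s_{j+1}))$ uniquely from $(\text{side of } s_j,\, f(s_j))$ and the parity of $j$.

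There is no serious obstacle; the argument is essentially the one in \cite{ChSe13-j} for the Boolean hypercube, ported to $\AH$. The only point that needs verification in the augmented setting is that the sequence never confronts an $M$-pair whose endpoints sit on opposite sides of $H$ via an edge other than the $H$-edge, and this is exactly excluded by insisting on $st_H(M)$ at odd steps. Thus the three-way partition $M = cr_H(M) \sqcup st_H(M) \sqcup sk_H(M)$ is the right abstraction to make the hypercube proof transfer essentially verbatim.
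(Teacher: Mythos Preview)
Your proposal is correct and is precisely the argument the paper defers to: the paper's entire proof is the sentence ``The proof is exactly analogous to the proof of Claim 2.9.1.\ in \cite{ChSe13-j},'' and your induction on $j$ with the even/odd case split is that analogous proof written out in the $\AH$ setting. Your closing remark about $st_H(M)$ being the right abstraction to make the hypercube argument port is exactly the point.
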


\begin{proof} The proof is exactly analogous to the proof of Claim 2.9.1. in \cite{ChSe13-j}. \end{proof}

\begin{claim}\label{clm:S_struc_2}
If $s_j$ is defined and $j$ \textit{is odd}, then 

\begin{itemize} [noitemsep]
    \item $j \equiv 1 \pmod{4}$ $\Longleftrightarrow$ $s_{j-3} \succ s_j$ and $d(s_j,s_{j-3}) = d(s_{j-1},s_{j-2})$.
    \item $j \equiv 3 \pmod{4}$ $\Longleftrightarrow$ $s_{j-3} \prec s_j$ and $d(s_{j-3},s_j) = d(s_{j-2},s_{j-1})$.
\end{itemize}
\end{claim}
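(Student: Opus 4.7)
The plan is to combine the layer/value bookkeeping from \Clm{S_struc_1} with the fact that an $H$-edge is simply a translation by $\pm 2^a \be_i$ in a single fixed coordinate. Once all four vertices $s_{j-3},s_{j-2},s_{j-1},s_j$ are pinned to specific sides $L(H)/U(H)$ with specific $f$-values, the order relation between $s_j$ and $s_{j-3}$ is inherited from the order relation between $s_{j-1}$ and $s_{j-2}$ (which is dictated by the fact that $(s_{j-2},s_{j-1})$ is a violation in $M$), and the distance equality is immediate because both pairs have the same difference vector.

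First I would handle the case $j \equiv 1 \pmod{4}$. By \Clm{S_struc_1}, $s_{j-3} \in U(H)$ with $f(s_{j-3})=0$, $s_{j-2} \in L(H)$ with $f(s_{j-2})=0$, $s_{j-1} \in L(H)$ with $f(s_{j-1})=1$, and $s_j \in U(H)$ with $f(s_j)=1$. The construction of $S_x$ forces $s_{j-2}=H(s_{j-3})$ and $s_j=H(s_{j-1})$, so $s_{j-3}=s_{j-2}+2^a \be_i$ and $s_j=s_{j-1}+2^a \be_i$. The pair $(s_{j-2},s_{j-1})$ lies in $st_H(M)$, and since $f(s_{j-1})=1$ and $f(s_{j-2})=0$ it is a violation with $s_{j-1}$ as the lower endpoint, giving $s_{j-1} \prec s_{j-2}$. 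Translating both sides by the same vector $2^a \be_i$ preserves the partial order, yielding $s_j \prec s_{j-3}$, i.e.\ $s_{j-3} \succ s_j$. The difference vector $s_{j-3}-s_j$ equals $s_{j-2}-s_{j-1}$, so $d(s_j,s_{j-3})=d(s_{j-1},s_{j-2})$ since the augmented-hypergrid distance depends only on the coordinatewise absolute differences.

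The case $j \equiv 3 \pmod{4}$ is symmetric. By \Clm{S_struc_1}, $s_{j-3},s_j \in L(H)$ with $f(s_{j-3})=1,f(s_j)=0$, and $s_{j-2},s_{j-1} \in U(H)$ with $f(s_{j-2})=1,f(s_{j-1})=0$. Now the $H$-edges give $s_{j-2}=s_{j-3}+2^a \be_i$ and $s_{j-1}=s_j+2^a \be_i$; the $M$-pair $(s_{j-2},s_{j-1}) \in st_H(M)$ forces $s_{j-2} \prec s_{j-1}$ since $s_{j-2}$ has $f$-value $1$. Translating by $-2^a \be_i$ preserves order, giving $s_{j-3} \prec s_j$, and the same argument on difference vectors gives $d(s_{j-3},s_j)=d(s_{j-2},s_{j-1})$.

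For the reverse directions of the two biconditionals: $j$ is odd, so it lies in exactly one of the residue classes $1$ or $3 \pmod{4}$. The two conclusions are mutually exclusive (the partial-order inequalities point in opposite directions), so the two forward implications together prove both biconditionals. The only real bookkeeping hazard is keeping the parities of the recursion straight and correctly reading off which of $s_{j-1}, s_{j-2}$ is the $1$-valued (lower) endpoint of the $M$-violation; once that is done, the key structural input is simply that $H$-edges act as a rigid translation in coordinate $i$, so any order/distance relationship between the lower endpoints of two parallel $H$-edges is preserved by the corresponding upper endpoints.
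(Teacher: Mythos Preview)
Your proof is correct and follows exactly the route the paper defers to (Claim~2.9.2 of \cite{ChSe13-j}): read off the $L(H)/U(H)$ sides and $f$-values from \Clm{S_struc_1}, infer the order relation from the $st_H(M)$-violation $(s_{j-2},s_{j-1})$, and obtain the distance equality from equality of the difference vectors. The only step that is less automatic here than on the hypercube is the last one---that $d_{\bA}(u,v)$ depends only on the coordinatewise differences---but this does hold: any minimal multiset of signed power-of-$2$ steps (each of magnitude at most $n/2$) summing to a given value can be greedily reordered so that all partial sums stay in $[1,n]$ from any starting point in $[1,n]$, so the one-dimensional augmented-line distance, and hence $d_{\bA}$, is translation-invariant.
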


\begin{proof} The proof is exactly analogous to the proof of Claim 2.9.2. in \cite{ChSe13-j}. \end{proof}

\begin{lemma}\label{lem:S_violation}
    For any $(x,y) \in cr_H(M)$, $S_x$ contains a violating edge in $H$.
\end{lemma}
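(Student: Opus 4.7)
The plan is to prove Lemma~\ref{lem:S_violation} by contradiction, following the alternating-path machinery of Chakrabarty-Seshadhri~\cite{ChSe13-j,ChSe13}. Suppose $S_x$ terminates at an odd index $j$ because $s_j$ is $st_H(M)$-unmatched rather than because the last $H$-step produced a violating edge.

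The core rearrangement I would perform replaces, along the path $S_x$, the $M$-pairs $(x,y), (s_1,s_2), (s_3,s_4), \ldots, (s_{j-2}, s_{j-1})$ by the pairs $(s_1, y), (s_0, s_3), (s_2, s_5), \ldots, (s_{j-3}, s_j)$. Each new pair is a violation: Claim~\ref{clm:S_struc_1} ensures mismatched $f$-values, while Claim~\ref{clm:S_struc_2} provides both the partial-order comparability (e.g.\ $s_0 \prec s_3$, $s_5 \prec s_2$) and the exact distance identities $d(s_{2k}, s_{2k+3}) = d(s_{2k+1}, s_{2k+2})$ that make the middle swaps distance-neutral. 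For the lead pair $(s_1, y)$: because $(x,y) \in cr_H(M)$ with $x \in L(H_{i,a}^c)$, bit $a$ of $y_i - x_i$ must be $1$ (otherwise no shortest $x \to y$ path could include a length-$2^a$ edge in dim $i$), so $s_1 = H(x)$ lies on a shortest $x \to y$ path and $d(s_1, y) = d(x, y) - 1$. Thus the new pairs have the same cardinality as the removed ones but total distance strictly smaller (by exactly $1$).

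Next I would case-split on how $s_j$ sits in $M$. If $s_j$ is $M$-unmatched, the resulting matching $M'$ has $|M'| = |M|$ with strictly smaller total distance; completing $M'$ to a maximal matching in a controlled way (the only newly-unmatched vertex is $s_{j-1}$, and any augmenting extension must pair it with a vertex whose distance is bounded via the minimality of $r$) yields a maximal matching of average distance $< r$, contradicting the choice of $M$. If $(s_j, w) \in cr_H(M)$, a parallel argument with one extra swap suffices. If $(s_j, w) \in sk_H(M)$, the replacement $(s_{j-1}, w)$ becomes $H$-straight: since $s_{j-1}$ is the $H$-opposite of $s_j$, its side in $H$ matches that of $w$, so both lie in $L(H)$ (resp.\ $U(H)$). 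Hence $\mu_H$ jumps from $0$ to $1/2^a$. By the design of $\Phi$ --- the geometric weighting $1/2^a > \sum_{a' > a} 1/2^{a'}$ --- this gain dominates the incidental changes to $\mu_{H_{i, a'}^{c'}}$ for $a' > a$, strictly increasing $\Phi$ at the same (or smaller) average distance and contradicting the $\Phi$-maximality of $M$.

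The hard part will be the skew case. I will need to verify carefully that $(s_{j-1}, w)$ lands on the correct side of $H$, to control $d(s_{j-1}, w)$ versus $d(s_j, w)$ using the fact that no shortest $s_j \to w$ path uses an $H$-edge, and to track the incidental swaps in $\mu_{H'}$ at the coarser matchings $H' = H_{i, a'}^{c'}$ with $a' > a$. The geometric weighting of $\Phi$ is tailored precisely so that level-$a$ corrections dominate all coarser-level contributions; most of this accounting is inherited from the real-valued analysis of \cite{ChSe13}.
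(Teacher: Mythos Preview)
Your approach is essentially identical to the paper's: the same swap of the $M$-pairs $E_-(j)$ for the rerouted pairs $E_+(j)$ along the alternating path, the same three-way case split on whether $s_j$ is $M$-unmatched, $cr_H(M)$-matched, or $sk_H(M)$-matched, and the same $\Phi$-based domination argument (the $1/2^a > \sum_{a'>a}1/2^{a'}$ weighting) to handle the skew case. The only place you diverge is Case~1, where you plan to ``complete $M'$ to a maximal matching'' and bound the new pair's distance ``via the minimality of $r$'' --- that reasoning is circular, and the paper (following~\cite{ChSe13-j}) simply notes $|M'|=|M|$ with strictly smaller total distance and declares the contradiction directly.
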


\begin{proof} Suppose for the sake of contradiction that $S_x$ terminates at $s_j$ ($j$ is odd) without witnessing a violation in $H$. Define the two matchings 

$$E_{-}(j) := \{(s_0,s_{-1}),(s_1,s_2), (s_3,s_4) ..., (s_{j-2},s_{j-1}) \} \subseteq M$$

and 

$$E_{+}(j) := \{(s_1,s_{-1}), (s_0,s_3), (s_2,s_5) , ..., (s_{j},s_{j-3}) \} \text{.}$$

Notice $E_{-}(j)$ involves $s_{-1},s_0, ..., s_{j-2}, s_{j-1}$ while $E_{+}(j)$ involves $s_{-1},s_0, ..., s_{j-2}, s_j$. Clearly, $E_-(j)$ is a collection of pairs from $M$. Notice $|E_-(j)| = |E_+(j)|$ and let $d(E)$ denote the average distance of $E \in \{E_-(j), E_+(j)\}$. The following claim about $E_+(j)$ will be crucial for the rest of the proof.

\begin{claim}\label{clm:struc_E}
    $E_+(j)$ is a set of violating pairs with (a) $d(E_+(j)) = d(E_-(j)) -1$ and (b) $\Phi(E_+(j)) > \Phi(E_-(j))$.
\end{claim}

\begin{proof} \textbf{Proof of (a)}: For all \textit{odd} $k$ in the range $3 \leq k \leq j$ we have the following by combining claims \ref{clm:S_struc_1} and \ref{clm:S_struc_2}. $k \equiv 1 \pmod{4}$ implies $d(s_k,s_{k-3}) = d(s_{k-1},s_{k-2})$ and $f(s_k) = 1$, $f(s_{k-3}) = 0$. $k \equiv 3 \pmod{4}$ implies $d(s_{k-3},s_k) = d(s_{k-2},s_{k-1})$ and $f(s_k) = 0$, $f(s_{k-3}) = 1$. Finally, $d(s_{-1},s_1) = d(s_0,s_{-1}) - 1$ since the shortest path joining $s_0$ and $s_{-1}$ contains a $H$-edge by definition of $cr_H(M)$ and $s_0$ differs from $s_1$ only along $H$. \\

\noindent \textbf{Proof of (b)}: Consider an arbitrary \textit{odd} $k$ in the range $3 \leq k \leq j$. Suppose $j \equiv 3 \pmod{4}$ ($j \equiv 1$ is analogous). We have $(s_{k-3},s_k) \in E_+(j)$ and $(s_{k-2},s_{k-1}) \in E_-(j)$. Also, $H(s_{k-3}) = s_{k-2}$, $s_{k-3} \prec s_{k-2}$ and $H(s_k) = s_{k-1}$, $s_k \prec s_{k-1}$ by \Clm{S_struc_2}. Thus, $s_{k-3},s_k \in L(H)$ and $s_{k-2},s_{k-1} \in U(H)$ so $\mu_H((s_{k-3},s_k)) = \mu_H((s_{k-2},s_{k-1})) = 0$ and clearly $\mu_{H'}((s_{k-3},s_k)) = \mu_{H'}((s_{k-2},s_{k-1}))$ for any other $H' \in \boldsymbol{H}$. Finally, $\mu_H((s_{0},s_{-1})) > 0$ since $(s_{0},s_{-1}) \in cr_H(M)$ while $\mu_H((s_1,s_{-1})) = 0$ since $(s_1,s_{-1}) \in st_H(M)$. Also, $\mu_{H'}((s_0,s_{-1})) = \mu_{H'}((s_{1},s_{-1}))$ for any other $H' \in \boldsymbol{H}$. Thus $\Phi(E_+(j)) > \Phi(E_-(j))$. \end{proof}

\noindent \textit{Case (1):} $s_j$ is $M$-unmatched. In this case we can rearrange $M$ to get a new matching $M'$ with the same cardinality as $M$ and strictly smaller average distance. Let

$$M' := M \setminus E_{-}(j) \cup E_{+}(j) \text{.}$$

\Clm{struc_E} shows that $M'$ is a matching with the same cardinality as $M$ and average distance strictly less than that of $M$. Contradiction. \\

\noindent \textit{Case (2):} $s_j$ is $cr_H(M)$-matched. In this case we can again rearrange $M$ to get a new matching $M'$ with the same cardinality as $M$ and strictly smaller average distance. Let

\begin{align}
    M' := M \setminus \big(E_{-}(j) \cup \{ (s_j,M(s_j)) \}\big) \cup \big(E_{+}(j) \cup \{ (s_{j-1},M(s_j)) \}\big) \text{.} \label{eq:matching}
\end{align}

By \Clm{S_struc_1} and the fact that $j$ is odd, it follows that $f(s_j) = f(s_{j-1})$. Suppose $j \equiv 1 \pmod{4}$ (the other case is symmetrical). Then $s_{j-1} \prec s_j \prec M(s_j)$ and by definition of $cr_H(M)$, any shortest path from $s_j$ to $M(s_j)$ contains an $H$-edge. Since $s_{j-1}$ and $s_j$ differ exactly on a single $H$-edge, it follows that $d(s_{j-1},M(s_j)) = d(s_j,M(s_j))$ since we simply are replacing an $H$-edge (say of length $2^a$) by an edge of twice its length (of length $2^{a+1}$). Combining this fact with \Clm{struc_E} shows $M'$ again decreases the average distance and $|M'| = |M|$. Contradiction. \\ 

\noindent \textit{Case (3):} $s_j$ is $sk_H(M)$-matched. In this case we rearrange to get $M'$ with the same number of pairs, \textit{at most} the same average distance and $\Phi(M') > \Phi(M)$. Define $M'$ as in line \Eqn{matching}. Suppose $j \equiv 1 \pmod{4}$ (the other case is symmetrical). Again, $f(s_j) = f(s_{j-1})$ by \Clm{S_struc_1} and $s_{j-1} \prec s_j \prec M(s_j)$. However, since $(s_j, M(s_j)) \in sk_H(M)$ we cannot say that the distance stays the same. However, we do know that the distance can increase only by at most $1$, i.e. $d(s_{j-1},M(s_j)) \leq d(s_j,M(s_j)) + 1$ since $s_{j-1}$ and $s_j$ differ only by a $H$-edge. Thus, the average distance of $M'$ is at most the average distance of $M$, but as shown in the following claim, $\Phi$ increases. Contradiction.

\begin{claim}
    $\Phi(M') > \Phi(M)$.
\end{claim}

\begin{proof} Since $\Phi(E_+(j)) > \Phi(E_-(j))$ by \Clm{struc_E} it suffices to show that $\Phi(\{(H(s_j),M(s_j))\}) \geq \Phi(\{(s_j,M(s_j))\})$. First, observe that $\mu_H(s_j,M(S_j)) = 0$ (since this is a $H$-skew pair) and so $\mu_H(H(s_j),M(s_j)) = 1$. Suppose $H = H_{i,a}^c$ and let $H' = H_{i',a'}^{c'}$ be some other matching. Observe that $\mu_{H'}(H(s_j),M(s_j)) = \mu_{H'}(s_j,M(s_j))$ whenever $i' \neq i$ since $s_j$ and $H_j$ only differ on dimension $i$. In the case that $i' = i$ we can show that $\mu_{H'}(H(s_j),M(s_j)) = \mu_{H'}(s_j,M(s_j))$ whenever $a' < a$. This will strictly increase $\Phi(\cdot)$ since $1/2^a > \sum_{a' > a} 1/2^{a'}$ and so any affect on $\Phi(\cdot)$ exerted by matchings with step size $a' > a$ in dimension $i$ will be dominated by the fact that the pair $(H^c_{a,b}(s_j),M(s_j))$ is $H^c_{a,b}$-straight, while $(s_j,M(s_j))$ is not.

Suppose $a' < a$. Notice that for $x \in [n]^d$, the value $x_i \pmod{2^{a+1}}$ determines whether $x \in L(H^c_{i,a})$ or $x \in U(H^c_{i,a})$. E.g. $x \in L(H^0_{i,a})$ when $x < 2^a \pmod{2^{a+1}}$ and $x \in U(H^0_{i,a})$ when $x \geq 2^a \pmod{2^{a+1}}$. Thus $H(s_j)^{(i)} \equiv s_{j}^{(i)} \pm 2^a \equiv s_{j}^{(i)} \pmod{2^{a'+1}}$ where $H(s_j)^{(i)}$ and $s_{j}^{(i)}$ denote the $i$'th coordinate of $H(s_j)$ and $s_j$, respectively, and this implies $H(s_j)$ lies in the same end of the matching $H'$ as $s_j$. That is, $H(s_j) \in L(H')$ if and only if $s_j \in L(H')$. \end{proof}

All three cases imply a contradiction and so the only way for $S_x$ to terminate is by witnessing a violating edge in $H$. Since $S_x$ must terminate at some point, this proves \Lem{S_violation}. \end{proof}

Finally, $S_x$ and $S_y$ are disjoint \textit{unless} $y$ terminates $S_x$. Thus, there are at least $|X|/2 = |cr_H(M)|/2$ disjoint sequences, each containing a violation in $H$. This completes the proof of \Lem{violations}.

\end{document}